\newif\ifarxiv 
    \newwrite\bibnotes
    \def\bibnotesext{Notes.bib}
\write\bibnotes{@CONTROL{REVTEX41Control}}
\write\bibnotes{@CONTROL{%
    apsrev41Control,author="08",editor="1",pages="1",title="0",year="0"}}
\write\@auxout{\string\citation{apsrev41Control}}%
\definecolor{mylinkcolor}{rgb}{0,0,0.8} 
\newtheorem{theorem}{Theorem}
\newtheorem{lemma}{Lemma}
\newtheorem{corollary}{Corollary}
\theoremstyle{definition}
\newtheorem{proposition}{Proposition} 
\newtheorem{definition}{Definition}
\newcommand{\ket}[1]{| #1 \rangle}
\newcommand{\bra}[1]{\langle #1 |}
\newcommand{\ketbra}[2]{|#1\rangle\!\langle#2|}
\newcommand{\cH}{\mathcal{H}}
\begin{document}
\title{Tight analytic bound on the trade-off between device-independent randomness and nonlocality}
\author{Lewis Wooltorton}
    \email{lewis.wooltorton@york.ac.uk}
    \affiliation{Department of Mathematics, University of York, Heslington, York, YO10 5DD, United Kingdom}
    \affiliation{Quantum Engineering Centre for Doctoral Training, H. H. Wills Physics Laboratory and Department of Electrical \& Electronic Engineering, University of Bristol, Bristol BS8 1FD, United Kingdom}
\author{Peter Brown}
    \email{peter.brown@telecom-paris.fr}
    \affiliation{Télécom Paris, LTCI, Institut Polytechnique de Paris,
19 Place Marguerite Perey, 91120 Palaiseau, France}
\author{Roger Colbeck}
    \email{roger.colbeck@york.ac.uk}
    \affiliation{Department of Mathematics, University of York, Heslington, York, YO10 5DD, United Kingdom}

\date{$5^{\text{th}}$ October 2022}

\begin{abstract}
Two parties sharing entangled quantum systems can generate correlations that cannot be produced using only shared classical resources. These \emph{nonlocal} correlations are a fundamental feature of quantum theory but also have practical applications. For instance, they can be used for \emph{device-independent} (DI) random number generation, whose security is certified independently of the operations performed inside the devices. The amount of certifiable randomness that can be generated from some given non-local correlations is a key quantity of interest. Here we derive tight analytic bounds on the maximum certifiable randomness as a function of the nonlocality as expressed using the Clauser-Horne-Shimony-Holt (CHSH) value. We show that for every CHSH value greater than the local value ($2$) and up to $3\sqrt{3}/2\approx2.598$ there exist quantum correlations with that CHSH value that certify a maximal two bits of global randomness. Beyond this CHSH value the maximum certifiable randomness drops.  We give a second family of Bell inequalities for CHSH values above $3\sqrt{3}/2$, and show that they certify the maximum possible randomness for the given CHSH value. Our work hence provides an achievable upper bound on the amount of randomness that can be certified for any CHSH value. We illustrate the robustness of our results, and how they could be used to improve randomness generation rates in practice, using a Werner state noise model.
\end{abstract}
\maketitle

\ifarxiv\section{Introduction}\label{sec:intro}\else\noindent{\it Introduction.|}\fi
Nonlocality is the phenomenon where measurements of certain quantum systems, by isolated observers, generate correlations inaccessible to any local systems that behave classically~\cite{Bell_book,Brunner_review}. Nonlocal correlations can be used to make statements about the underlying quantum system without characterizing the devices used~\cite{mayers2004self,McKagueSinglet,YangSelfTest,KaniewskiSelfTest,SupicSelfTest}, and constitute a resource for information processing~\cite{BarrettNonlocalResource}. In particular, they give rise to the possibility of DI information processing which allows, for instance, the intrinsic randomness of nonlocal correlations to be exploited for randomness expansion~\cite{ColbeckThesis,PAMBMMOHLMM,CK2,MS1,MS2}, amplification~\cite{CR_free}, and key distribution protocols~\cite{Ekert,BHK,ABGMPS,PABGMS,VV2,ADFRV}.

Given some experimental conditions in a particular input-output scenario, what is the optimal way to generate randomness device-independently? Since the values of extremal Bell inequalities quantify the distance of the observed correlations from the local boundary, one might expect these to be optimal for randomness. However, the relationship between nonlocality and maximum randomness is nontrivial~\cite{delaTorreMaxNonlocal}, and it has been shown that non-extremal Bell inequalities can certify more randomness in some cases~\cite{AcinRandomnessNonlocality}.

A substantial literature has developed investigating the maximum achievable randomness in different DI scenarios. In particular, the existence of Bell tests that can certify maximum global randomness was shown in~\cite{DharaMaxRand} by adding extra measurements. Constructions achieving maximal randomness in the bipartite scenario for non-projective measurements were given in~\cite{AcinOptimalEbit} and for greater than two projective measurements per party in~\cite{Law_2014,Andersson18,BRC,WoodheadMaxRandomness}. In~\cite{AcinRandomnessNonlocality} a construction that tends towards the maximum 2 random bits is presented, based on the violation of tilted-CHSH inequalities~\cite{AcinRandomnessNonlocality,BampsPironio}. This provides a key example where non-extremal Bell inequalities certify more randomness (maximum violation of the CHSH inequality, the only extremal Bell inequality in the 2-input, 2-output scenario up to symmetry, can certify $5/2-\log_2(1+\sqrt{2})/\sqrt{2}\approx 1.601$ bits of global randomness by comparison --- see, e.g.,~\cite{BhavsarDI}). Based on~\cite{AcinRandomnessNonlocality} one might expect that achieving 2 bits of randomness requires the CHSH violation (or entanglement) of the strategy to tend to $0$. If this were the case, there would be a problem with the robustness of the construction, and the result suggests a trade-off between certifiable randomness and distance from the local set. Ref.~\cite{AcinRandomnessNonlocality} left open whether two bits of randomness is actually attainable using a single statistic in the 2-input 2-output scenario, and how non-local a strategy achieving this can be.

Our work gives conclusive answers to these questions. We consider the maximum amount of DI randomness that can be certified from the set of quantum correlations achieving a particular CHSH value. In other words, we investigate how much randomness is achievable when the generating system is required to exhibit a particular amount of nonlocality. To do so, we introduce two families of Bell expressions that self-test families of two qubit strategies. Our first family (see Proposition \ifarxiv\ref{prep:delta}\else 1\fi) certifies exactly 2 random bits for all CHSH values in the interval $(2,3\sqrt{3}/2]$, showing 2 random bits are achievable without tending towards the local set~\cite{AcinRandomnessNonlocality}, requiring extra measurements~\cite{BRC,WoodheadMaxRandomness,Law_2014} or constraining the full distribution~\cite{BrownDeviceIndependent2}. Our second family (see Proposition \ifarxiv\ref{prep:gamma}\else 2\fi) covers the range of values $[3\sqrt{3}/2,2\sqrt{2}]$, coinciding with the CHSH inequality for $2\sqrt{2}$, and the certifiable randomness achieved is a smooth, monotonically decreasing function of the value. We show in Proposition \ifarxiv\ref{prep:maxRand} \else 3 \fi that this is the true maximum randomness achievable for this range of CHSH values, illustrating how one only needs to sacrifice randomness when approaching the maximum quantum value of CHSH. Finally, we analyse the robustness of our construction under a Werner state noise model~\cite{Werner}, and compare it to that of the tilted CHSH inequalities. We find both constructions to be robust, and at any given noise level there exists an optimal statistic for practical DI randomness generation that can outperform CHSH.          

\ifarxiv\section{DI scenario}\else{\medskip\noindent\it DI scenario.|}\fi We consider the bipartite 2-input 2-output Bell scenario. Let two isolated devices each receive an input $x,y\in \{0,1\}$, from which they produce an output $a,b \in \{0,1\}$, stored in the classical registers $A$ and $B$. The devices are characterised by the joint conditional probability distribution $\text{p}(ab|xy)$, which, due to the isolation of the devices, must be no-signalling.

We refer to a quantum strategy when the devices share a bipartite density operator $\rho_{Q_{A}Q_{B}}$ on the Hilbert space $\mathcal{H}_{Q_{A}} \otimes \mathcal{H}_{Q_{B}}$, and measure observables $A_{x} = M_{0|x}-M_{1|x},B_{y}=N_{0|y} - N_{1|y}$, where $\{M_{a|x}\}_{a},\{N_{b|y}\}_{b}$ are projective measurements on the associated Hilbert space (projective measurements can be assumed without loss of generality according to Naimark's dilation theorem~\cite{paulsen_2003}). We also include the possibility of an adversary Eve, who wishes to guess the outputs. In the DI scenario, Eve may have supplied the devices used by the user (Alice) and may hold a purifying system $E$ with associated Hilbert space $\mathcal{H}_{E}$ such that the post-measurement system $AB$ and $E$ are correlated. We describe this using a tripartite density operator, $\rho_{Q_{A}Q_{B}E}$ such that $\rho_{Q_{A}Q_{B}} = \text{Tr}_{E}[\rho_{Q_{A}Q_{B}E}]$. Following measurement with inputs $X=x$ and $Y=y$, we obtain the classical-quantum state $\rho_{ABE} = \sum_{ab}\ketbra{ab}{ab}_{AB} \otimes \rho_{E}^{(a,b,x,y)}$, where $\rho_{E}^{(a,b,x,y)} = \text{Tr}_{Q_{A}Q_{B}}[\rho_{Q_{A}Q_{B}E}(M_{a|x}\otimes N_{b|y} \otimes \mathbb{I}_{E})]$ is proportional to Eve's state conditioned on the joint measurement outcomes, and the distribution is recovered via $\text{p}(ab|xy) = \text{Tr}[\rho_{E}^{(a,b,x,y)}]$.

\ifarxiv\section{Nonlocality and self-tests}\else{\medskip\noindent\it Nonlocality and self-tests.|}\fi To quantify the distance of an observed distribution $\mathrm{p}(ab|xy)$ from the local boundary in this scenario, we consider the CHSH expression $I_{\text{CHSH}} = \langle A_{0}B_{0} \rangle +  \langle A_{0}B_{1} \rangle + \langle A_{1}B_{0} \rangle - \langle A_{1}B_{1}\rangle$, where $\langle A_{x}B_{y}\rangle = \sum_{ab}(-1)^{a+b} \ \text{p}(ab|xy)=\bra{\Psi}A_{x} \otimes B_{y} \otimes \mathbb{I} \ket{\Psi}$ when $\mathrm{p}(ab|xy)$ admits a quantum representation with purified state and observables $\big(\ket{\Psi}_{Q_{A}Q_{B}E},A_{x},B_{y}\big)$. The local bound is given by $I_{\text{CHSH}} \leq 2$, and the maximum quantum value is $2\sqrt{2}$~\cite{Cirelson}. Any distribution that violates the local bound is said to be nonlocal. 

It is known that the CHSH inequality self-tests the maximally entangled state $(\ket{00} + \ket{11})/\sqrt{2}$ and the measurements that achieve its maximum quantum violation~\cite{PopescuRohrlich,BardynSelfTest,SupicSelfTest,BampsPironio} in the sense that there is only one state and set of measurements that can achieve $I_{\text{CHSH}}=2\sqrt{2}$ up to local isometries. One can also define a robust self test, in which close to maximum violation certifies a state and measurements close to the optimal ones up to local isometries.

\ifarxiv\section{Entropy bounds}\else{\medskip\noindent\it Entropy bounds.|}\fi The quantity of interest for calculating the DI global randomness is the conditional von Neumann entropy when the devices receive inputs $X=Y=0$, $H(AB|X=0,Y=0,E)$, evaluated for the post-measurement state $\rho_{ABE}$. This is the relevant quantity for spot-checking DI random number generation~\cite{BhavsarDI}. For DI randomness expansion we require lower bounds on this quantity that hold for all states and measurements compatible with the observed distribution $P_{\text{obs}}$, or some linear functions $f_i$ of $P_{\text{obs}}$, e.g., the CHSH value. This gives the asymptotic rate of randomness generation $r$, in bits per round:
\begin{equation}
    r = \!\!\!\!\!\!\inf_{\substack{ \rho_{Q_{A}Q_{B}E}, \\ \{M_{a|x}\}_{a}, \{N_{b|y}\}_{b} \\ \text{compatible with} \ f_i(P_{\text{obs}})}}\!\!\!\!\!\! H(AB|X=0,Y=0,E)_{\rho_{ABE}}. \label{eq:rate}
\end{equation}
The asymptotic rate can also be used as a basis for rates with finite statistics using tools such as the entropy accumulation theorem~\cite{DFR,EAT2,LLR&}.  

In the noiseless scenario, we prove a self-testing statement that certifies a state and measurements that generate two bits of randomness. In this case, $f(P_{\text{obs}})$ is a self-testing Bell expression, and there is only one state and set of measurements that can achieve the maximal quantum value (up to symmetries), from which the conditional entropy can be evaluated.  For the noisy case, we use the recently developed numerical technique from~\cite{BrownDeviceIndependent2} to compute lower bounds on \cref{eq:rate} using semidefinite programming. 

\ifarxiv\section{Main results}
\else{\medskip\noindent\it Main results.|}\fi Our first main result is the family of Bell expressions that works for CHSH values in the range $(2,3\sqrt{3}/2]$.

\ifarxiv
\begin{proposition}
Let $0 < \delta \leq \pi/6$, and define the family of Bell expressions parameterized by $\delta$, labelled $I_{\delta}$:
\begin{multline}
    \langle A_{0} B_{0} \rangle\!\! +\!\! \frac{1}{\sin \delta}\left(  \langle A_{0} B_{1} \rangle\! +\! \langle A_{1} B_{0} \rangle \right)\!\!   
    -\!\!\frac{1}{\cos 2\delta} \langle A_{1} B_{1} \rangle . \label{eq:deltaIneq_1}
\end{multline}
Then we have the following:
\begin{enumerate}[(i)]
    \item The local bound is given by $I_{\delta}^{\mathrm{L}} = -1 + \frac{2}{\sin{\delta}} + \frac{1}{\cos{2\delta}}$.
    \item The quantum bound is given by $I_{\delta}^{\mathrm{Q}} = \frac{2\cos^{3}\delta}{\cos 2\delta \sin \delta}$.
    \item Up to local isometries there is a unique strategy that achieves $I_{\delta} = I_{\delta}^{\mathrm{Q}}$:
\begin{align}
    \rho_{Q_AQ_B}&=\ket{\psi}\bra{\psi}\text{ where }\ket{\psi} = \frac{1}{\sqrt{2}}(\ket{00} + \ket{11}), \nonumber \\
    A_{0} &= \sigma_{Z}, \ \ \ B_{0} = \sigma_{X}, \nonumber \\
    A_{1} &= -\sin \delta \, \sigma_{Z} + \cos \delta \, \sigma_{X}, \nonumber \\ B_{1} &= \cos \delta \, \sigma_{Z} - \sin \delta \, \sigma_{X}.
    \label{eq:deltaStrat_1}
\end{align}
\end{enumerate}

\label{prep:delta}
\end{proposition}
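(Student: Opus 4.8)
The plan is to attack the three parts in turn, using the SOS (sum-of-squares) method for the quantum bound and standard self-testing machinery for the rigidity part.

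\textbf{Part (i): the local bound.}
For a local deterministic strategy, each $A_x$ and $B_y$ takes a value in $\{+1,-1\}$, so $I_\delta$ is a linear function over the $16$ sign assignments; by convexity the maximum over all local strategies is attained at one of these vertices. Since all three coefficients $1$, $1/\sin\delta$ and $-1/\cos2\delta$ are well-defined and nonzero for $0<\delta\le\pi/6$ (note $\cos2\delta>0$ there), I would simply enumerate: the term $-\tfrac{1}{\cos2\delta}\langle A_1B_1\rangle$ is maximized by making $\langle A_1B_1\rangle=-1$, which forces a sign relation between $A_1$ and $B_1$; then the remaining three correlators can be checked to be simultaneously maximizable, giving $\langle A_0B_0\rangle=+1$ and $\langle A_0B_1\rangle=\langle A_1B_0\rangle=+1$, hence $I_\delta^{\mathrm L}=-1+\tfrac{2}{\sin\delta}+\tfrac{1}{\cos2\delta}$. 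A short consistency check that these sign choices are jointly realizable (e.g.\ $A_0=A_1=B_0=+1$, $B_1=-1$ works) completes this part.

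\textbf{Part (ii): the quantum bound via SOS.}
I would look for a decomposition of the form $I_\delta^{\mathrm Q}\,\mathbb I - I_\delta = \sum_i \lambda_i\, P_i^\dagger P_i$ with $\lambda_i\ge0$ and $P_i$ low-degree polynomials in $A_0,A_1,B_0,B_1$ (using $A_x^2=B_y^2=\mathbb I$ and $[A_x,B_y]=0$); the natural ansatz, as in the tilted-CHSH analysis \cite{BampsPironio,AcinRandomnessNonlocality}, is $P_i = c_i A_{x} - (\alpha_i B_0 + \beta_i B_1)/\|\cdot\|$ type operators with coefficients tuned so the cross terms reproduce $I_\delta$ and the identity terms sum to $I_\delta^{\mathrm Q}$. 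This immediately gives $\langle\psi|I_\delta|\psi\rangle\le I_\delta^{\mathrm Q}$ for every state and every pair of dichotomic observables. To see the bound is tight (and hence equals the quantum value), I plug in the candidate strategy \eqref{eq:deltaStrat_1}: with $\ket\psi$ maximally entangled one has $\langle\sigma_P\otimes\sigma_R\rangle = \tfrac12\tr(\sigma_P\sigma_R^{T})$, so each correlator is a dot product of Bloch vectors, and a direct trigonometric computation of $\langle A_0B_0\rangle + \tfrac{1}{\sin\delta}(\langle A_0B_1\rangle+\langle A_1B_0\rangle) - \tfrac{1}{\cos2\delta}\langle A_1B_1\rangle$ should collapse to $\tfrac{2\cos^3\delta}{\cos2\delta\,\sin\delta}$. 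I would also record that $I_\delta^{\mathrm Q}>I_\delta^{\mathrm L}$ for $\delta\in(0,\pi/6]$ so the inequality is nontrivial, and that the associated CHSH value of this strategy sweeps $(2,3\sqrt3/2]$ as $\delta$ ranges over $(0,\pi/6]$.

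\textbf{Part (iii): self-testing.}
Granted an SOS decomposition that is tight only on the stated strategy, rigidity follows by the now-standard argument: for any state $\ket\Psi$ and observables achieving $I_\delta=I_\delta^{\mathrm Q}$, each $P_i\ket\Psi = 0$. These operator identities pin down the algebraic relations among $A_0,A_1,B_0,B_1$ on the support of the state — in particular an anticommutation-type relation forcing an effective qubit structure on each side — and then one builds the local isometry (the standard swap/SWAP circuit) that maps $\ket\Psi$ to $\ket\psi\otimes(\text{junk})$ and conjugates the observables to the Pauli form in \eqref{eq:deltaStrat_1}. I would present this by either (a) reducing to a known self-test: with a change of variables $I_\delta$ is affinely equivalent to a tilted-CHSH expression (or to the family whose rigidity is established in \cite{BampsPironio,AcinRandomnessNonlocality,SupicSelfTest}), so the self-testing statement can be imported; or (b) giving the SOS-to-isometry argument directly. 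The main obstacle I anticipate is finding the right SOS decomposition and verifying it is \emph{exactly} tight (zero only on the intended strategy) rather than merely an upper bound — the unusual coefficient $1/\cos2\delta$ makes the algebra heavier than standard tilted CHSH, and some care is needed at the endpoint $\delta=\pi/6$ where $\cos2\delta = 1/2$ and the two families are meant to meet. A secondary check is confirming the isometry is genuinely local (acts as $U_A\otimes U_B$) and that no spurious degeneracy in the eigenvalue problem admits a second inequivalent optimizer.
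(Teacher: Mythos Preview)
Your overall structure matches the paper's: vertex enumeration for (i), SOS for (ii), and SOS relations for (iii). Two points deserve correction or clarification.

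\textbf{Part (i) contains an error.} You claim that after fixing $\langle A_1B_1\rangle=-1$, the remaining three correlators can all be set to $+1$, and you offer $A_0=A_1=B_0=+1$, $B_1=-1$ as a witness. But that assignment gives $\langle A_0B_1\rangle=-1$, not $+1$; and in fact no deterministic assignment can make all four correlators take the ``winning'' sign simultaneously (this is precisely the CHSH obstruction). The correct optimum sacrifices the term with the smallest coefficient, namely $\langle A_0B_0\rangle$ (coefficient $1$, versus $1/\sin\delta\ge2$ and $1/\cos2\delta>1$ on $(0,\pi/6]$), giving $\langle A_0B_0\rangle=-1$ and the other three at their optimal signs; e.g.\ $A_0=1$, $A_1=-1$, $B_0=-1$, $B_1=1$ realizes $I_\delta=-1+2/\sin\delta+1/\cos2\delta$.

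\textbf{Part (iii): the paper takes a different route.} Your option (a), reducing $I_\delta$ to a tilted-CHSH inequality, is not straightforward: $I_\delta$ has no marginal terms and self-tests the \emph{maximally} entangled state, whereas the tilted-CHSH family of \cite{AcinRandomnessNonlocality,BampsPironio} self-tests partially entangled states, so there is no evident affine equivalence to import. Your option (b) (SWAP isometry from anticommutation relations) could in principle work, but the paper instead invokes Jordan's lemma to block-diagonalize each party's pair of binary observables into $2\times2$ blocks, reducing the problem to a single two-qubit strategy with state diagonal in the Bell basis and measurements in the $XZ$-plane. The SOS relations $P_i\ket{\Phi_0}=0$ then become four concrete trigonometric equations in the measurement angles, which are solved explicitly to recover the target observables; the state is pinned down by checking $\|P_i\ket{\Phi_\alpha}\|^2\neq0$ for $\alpha\neq0$. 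The extraction isometry is then assembled block-by-block from the local unitaries that rotate each Jordan block to the target. This Jordan-lemma route sidesteps the need to extract anticommutation relations directly from the SOS and is arguably cleaner here given the nonstandard coefficients.
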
 
\else{\medskip \noindent \textbf{Proposition 1.} \textit{Let $0 < \delta \leq \pi/6$, and define the family of Bell expressions parameterized by $\delta$, labelled $I_{\delta}$:
\begin{equation}
    \langle A_{0} B_{0} \rangle + \frac{1}{\sin \delta}\left(  \langle A_{0} B_{1} \rangle\! +\! \langle A_{1} B_{0} \rangle \right)   
    -\frac{1}{\cos 2\delta} \langle A_{1} B_{1} \rangle . \label{eq:deltaIneq_1}
\end{equation}
Then we have the following:\\
\noindent(i) The local bound is given by $I_{\delta}^{\mathrm{L}} = -1 + \frac{2}{\sin{\delta}} + \frac{1}{\cos{2\delta}}$.\\
\noindent(ii) The quantum bound is given by $I_{\delta}^{\mathrm{Q}} = \frac{2\cos^{3}\delta}{\cos 2\delta \sin \delta}$.\\
\noindent(iii) Up to local isometries there is a unique strategy that achieves $I_{\delta} = I_{\delta}^{\mathrm{Q}}$:
\begin{align}
    \rho_{Q_AQ_B}&=\ket{\psi}\bra{\psi}\text{ where }\ket{\psi} = \frac{1}{\sqrt{2}}(\ket{00} + \ket{11}), \nonumber \\
    A_{0} &= \sigma_{Z}, \ \ \ B_{0} = \sigma_{X}, \nonumber \\
    A_{1} &= -\sin \delta \, \sigma_{Z} + \cos \delta \, \sigma_{X}, \nonumber \\ B_{1} &= \cos \delta \, \sigma_{Z} - \sin \delta \, \sigma_{X}.
    \label{eq:deltaStrat_1}
\end{align}
\label{prep:delta}}
}
\fi

By the previous discussion of the noiseless case, Proposition \ifarxiv\ref{prep:delta} \else 1 \fi shows that there exists a family of two-qubit strategies that can achieve exactly two bits of global DI randomness in the bipartite, 2-input 2-output case; this follows from self-testing measurements $A_{0},B_{0}$ together with the maximally entangled state $\ket{\psi}$. We now explore some implications of this. The strategy in \cref{eq:deltaStrat_1} has a CHSH value $I_{\text{CHSH}} = 2\cos\delta\,(\sin \delta + 1)$ and by sweeping $0<\delta\leq \pi/6$ the interval of values $(2,3\sqrt{3}/2]$ is achieved. Hence for every CHSH value in this interval, there exists a two-qubit strategy achieving this value, that can certify exactly 2 bits of randomness. In fact $3\sqrt{3}/2 \approx 2.598$ is the largest CHSH value for which exactly two bits of randomness can be achieved, corresponding to the $\delta = \pi/6$ strategy in \cref{eq:deltaStrat_1} (see the Supplemental Material~\cite{supp}). This strategy can be derived by fixing $\text{p}(ab|00) = 1/4$, and optimising the the remaining measurement angles for the maximum CHSH value. This improves upon the results in~\cite{AcinRandomnessNonlocality} (cf.\ the introduction): rather than needing low CHSH violation to get close to maximal randomness, maximum randomness is achieved well into the nonlocal region.  

Next we derive the maximum randomness for strategies achieving a CHSH value in the interval $[3\sqrt{3}/2,2\sqrt{2}]$. Up to local isometries, the only strategy that can achieve $I_{\text{CHSH}} = 2\sqrt{2}$ is given by the maximally entangled state with measurements $A_{0} = \sigma_{Z}$, $B_{0} = (\sigma_{Z}+\sigma_{X})/\sqrt{2}$, $A_{1} = \sigma_{X},$ and $B_{1} = (\sigma_{Z}-\sigma_{X})/\sqrt{2}$, since the CHSH inequality self-tests this state and measurements~\cite{BampsPironio}. This strategy gives roughly $1.601$ bits of randomness. There must therefore be a transition between $I_{\text{CHSH}} = 3\sqrt{3}/2$ and $I_{\text{CHSH}} = 2\sqrt{2}$, where in order to achieve a larger CHSH value, randomness must be sacrificed. This transition is given by the following proposition.

\ifarxiv
\begin{proposition}
Let $0 \leq \gamma \leq \pi/12$, and define the family of Bell expressions parameterized by $\gamma$, labelled $J_{\gamma}$:
\begin{multline}
    \langle A_{0} B_{0} \rangle + c(\gamma)\big(  \langle A_{0} B_{1} \rangle  + \langle A_{1} B_{0} \rangle - \langle A_{1} B_{1} \rangle \big), \label{eq:gammaIneq_1}
\end{multline}
where $c(\gamma) = 4\cos^{2} \left(\gamma + \frac{\pi}{6}\right) - 1$. Then we have the following:
\begin{enumerate}[(i)]
    \item The local bound is given by $J_{\gamma}^{\mathrm{L}} = 12\cos^2\left( \gamma + \frac{\pi}{6} \right) - 4$.
    \item The quantum bound is given by $J_{\gamma}^{\mathrm{Q}} = 8\cos^3\left( \gamma + \frac{\pi}{6}\right)$.
    \item Up to local isometries there is a unique strategy that achieves $J_{\gamma} = J_{\gamma}^{\mathrm{Q}}$:
\begin{gather}
    \rho_{Q_AQ_B}=\ket{\psi}\bra{\psi}\text{ where }\ket{\psi} = \frac{1}{\sqrt{2}}(\ket{00} + \ket{11}), \nonumber \\
    A_{0} = \sigma_{Z}, \ \ \ B_{0} = \sin 3\gamma \, \sigma_{Z} +\cos 3\gamma \, \sigma_{X}, \nonumber \\
    A_{1} = \cos\left(\frac{2\pi}{3}-2\gamma \right)  \sigma_{Z} + \sin\left(\frac{2\pi}{3}-2\gamma \right)  \sigma_{X}, \nonumber \\ B_{1} = \cos\left(\frac{\pi}{6}+\gamma \right)  \sigma_{Z} - \sin \left(\frac{\pi}{6}+\gamma \right)  \sigma_{X}.
    \label{eq:gammaStrat}
\end{gather}
\end{enumerate}

\label{prep:gamma}
\end{proposition}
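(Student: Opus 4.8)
The three parts are proved separately, and throughout it helps to set $\theta := \gamma + \pi/6 \in [\pi/6,\pi/4]$, so that $c(\gamma) = 1 + 2\cos 2\theta$ and hence $1 + c(\gamma) = 4\cos^2\theta > 0$. For part (i), I would write the local form of $J_\gamma$ as $A_0(B_0 + cB_1) + cA_1(B_0 - B_1)$ with all $A_x,B_y\in\{\pm1\}$: for a deterministic Bob strategy with $B_0 = B_1$ the best Alice response gives $1+c$, while for $B_0 = -B_1$ it gives $(c-1) + 2c = 3c-1$, and since $c\ge 1$ on the stated range the classical maximum is $3c-1 = 12\cos^2(\gamma+\pi/6) - 4$.

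For part (ii), I would establish the operator inequality $J_\gamma^{\mathrm Q}\,\mathbb{I} - \mathcal B \succeq 0$, valid on any Hilbert space and hence bounding the quantum value, by a sum-of-squares decomposition, where $\mathcal B = A_0\otimes B_0 + c(A_0\otimes B_1 + A_1\otimes B_0 - A_1\otimes B_1)$. The form of the optimal qubit strategy suggests the Hermitian operators $P_0 = A_0 - \frac{1}{\sqrt{1+c}}(B_0 + cB_1)$ and $P_1 = A_1 - \frac{1}{\sqrt{1+c}}(B_0 - B_1)$ (tensor factors suppressed). Expanding $P_0^2 + c\,P_1^2$ using $A_x^2 = B_y^2 = \mathbb{I}$ and $[A_x,B_y] = 0$, the anticommutator contributions $\{B_0,B_1\}$ cancel — this is exactly why the weight on $P_1^2$ must be $c$ — leaving $P_0^2 + c\,P_1^2 = 2(1+c)\mathbb{I} - \frac{2}{\sqrt{1+c}}\mathcal B$. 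Hence $\mathcal B = (1+c)^{3/2}\mathbb{I} - \frac{\sqrt{1+c}}{2}(P_0^2 + c\,P_1^2) \preceq (1+c)^{3/2}\mathbb{I}$, and $(1+c)^{3/2} = (4\cos^2\theta)^{3/2} = 8\cos^3(\gamma+\pi/6) = J_\gamma^{\mathrm Q}$. Attainability is a short check: for the strategy \eqref{eq:gammaStrat} one finds $\langle A_0B_0\rangle = -\cos 3\theta$ and $\langle A_0B_1\rangle = \langle A_1B_0\rangle = -\langle A_1B_1\rangle = \cos\theta$, and $-\cos 3\theta + 3c\cos\theta = 8\cos^3\theta$.

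For part (iii), saturating the bound on a state $\ket{\Psi}$ forces $P_0\ket{\Psi} = P_1\ket{\Psi} = 0$, i.e.\ $A_0\ket{\Psi} = \frac{1}{\sqrt{1+c}}(B_0 + cB_1)\ket{\Psi}$ and $A_1\ket{\Psi} = \frac{1}{\sqrt{1+c}}(B_0 - B_1)\ket{\Psi}$; taking suitable linear combinations gives the dual relations $B_0\ket{\Psi} = \frac{1}{\sqrt{1+c}}(A_0 + cA_1)\ket{\Psi}$ and $B_1\ket{\Psi} = \frac{1}{\sqrt{1+c}}(A_0 - A_1)\ket{\Psi}$, and applying each relation twice yields $\{A_0,A_1\}\ket{\Psi} = \{B_0,B_1\}\ket{\Psi} = (1-c)\ket{\Psi} = -2\cos 2\theta\,\ket{\Psi}$. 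From here I would run the standard self-testing argument (parallel to Proposition \ifarxiv\ref{prep:delta}\else 1\fi): Jordan's lemma reduces each party's pair of observables to a direct sum of qubit blocks, the (anti)commutation relations above fix the relative angle of the two measurements on each block, on each block $\ket{\Psi}$ must be maximally entangled with the required observables, and the usual swap isometry assembles the global statement; matching the normalisations recovers exactly the state and angles in \eqref{eq:gammaStrat}. Equivalently one can invoke an existing self-test for correlation Bell inequalities of (tilted-)CHSH type.

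Parts (i) and (ii) are essentially mechanical once the substitution $\theta = \gamma + \pi/6$ and the SOS weights are in place; the only creative step there is guessing $P_0,P_1$, which is dictated by the optimal strategy. I expect the main obstacle to be (iii): promoting the operator relations on $\ket{\Psi}$ to an isometry requires the usual care with ancillary ("junk") degrees of freedom and with the non-uniqueness of Jordan's block decomposition, and one must verify that the extracted reference measurements are precisely those in \eqref{eq:gammaStrat} rather than some other set realising the same correlations.
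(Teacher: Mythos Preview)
Your proposal is correct and follows essentially the same methodology as the paper: local bound by enumeration of deterministic strategies, quantum bound via an SOS decomposition, and self-testing via Jordan's lemma plus the algebraic relations forced by the SOS. The main difference is cosmetic: the paper builds its SOS from a block-diagonal $4\times4$ Gram matrix in the symmetric/antisymmetric variables $R_i=(B_j\pm A_j)/\sqrt{2}$, obtaining four polynomials $P'_0,\dots,P'_3$ of which only two are linearly independent, whereas your two polynomials $P_0,P_1$ span the same two-dimensional space (one checks $P_0\propto -(2\mu+1)P'_0-(2\mu-1)P'_2$, etc.) and give the identical constraints $P_i\ket{\Psi}=0$; your presentation is simply more economical. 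For part (iii) the paper carries out explicitly what you sketch: after Jordan's lemma it parameterizes each qubit block by angles, solves the resulting nonlinear system to pin down the measurement directions, and then checks $\|P_i\ket{\Phi_\alpha}\|^2\neq0$ for the three non-target Bell states to force $\lambda_0=1$; your anticommutator derivation $\{A_0,A_1\}\ket{\Psi}=-2\cos2\theta\,\ket{\Psi}$ is an equivalent route to fixing the angles, but you would still need an argument (like the paper's) to rule out mixtures over the Bell basis and conclude maximal entanglement on each block.
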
 
\else{\medskip \noindent \textbf{Proposition 2.} \textit{Let $0 \leq \gamma \leq \pi/12$, and define the family of Bell expressions parameterized by $\gamma$, labelled $J_{\gamma}$:
\begin{equation}
    \langle A_{0} B_{0} \rangle + c(\gamma)\big(  \langle A_{0} B_{1} \rangle  + \langle A_{1} B_{0} \rangle - \langle A_{1} B_{1} \rangle \big), \label{eq:gammaIneq_1}
\end{equation}
where $c(\gamma) = 4\cos^{2} \left(\gamma + \frac{\pi}{6}\right) - 1$. Then we have the following:\\
\noindent (i) The local bound is given by $J_{\gamma}^{\mathrm{L}} = 12\cos^2\left( \gamma + \frac{\pi}{6} \right) - 4$.\\
\noindent (ii) The quantum bound is given by $J_{\gamma}^{\mathrm{Q}} = 8\cos^3\left( \gamma + \frac{\pi}{6}\right)$.\\
\noindent (iii) Up to local isometries there is a unique strategy that achieves $J_{\gamma} = J_{\gamma}^{\mathrm{Q}}$:
\begin{align}
    \rho_{Q_AQ_B}=\ket{\psi}\bra{\psi}\text{ where }\ket{\psi} = \frac{1}{\sqrt{2}}(\ket{00} + \ket{11}), \nonumber \\
    A_{0} = \sigma_{Z}, \ \ \ B_{0} = \sin 3\gamma \, \sigma_{Z} +\cos 3\gamma \, \sigma_{X}, \nonumber \\
    A_{1} = \cos\left(\frac{2\pi}{3}-2\gamma \right)  \sigma_{Z} + \sin\left(\frac{2\pi}{3}-2\gamma \right)  \sigma_{X}, \nonumber \\ B_{1} = \cos\left(\frac{\pi}{6}+\gamma \right)  \sigma_{Z} - \sin \left(\frac{\pi}{6}+\gamma \right)  \sigma_{X}.
    \label{eq:gammaStrat}
\end{align}
}
}
\fi

When $\gamma = 0$, this corresponds to the expression in \cref{eq:deltaIneq_1} for $\delta = \pi/6$, and when $\gamma = \pi/12$ we recover the CHSH expression.  The CHSH value for this family is given by $I_{\text{CHSH}} = \sin 3\gamma +3\cos(\gamma + \pi/6)$, and monotonically decreases in the interval $[3\sqrt{3}/2,2\sqrt{2}]$. The randomness certified by these self-tests is maximum for each CHSH value, summarized in our final proposition.

\ifarxiv
\begin{proposition}
\textit{The maximum randomness for strategies achieving a CHSH value in the range $s \in (2,3\sqrt{3}/2]$ is $\mathrm{2}$ bits, and is generated by the family of strategies in \cref{eq:deltaStrat_1}. For the range $s \in [3\sqrt{3}/2,2\sqrt{2}]$, the maximum is given by 
\begin{equation}
    1 + H_{\mathrm{bin}}\Big[\frac{1}{2} + \frac{s}{2}- \frac{3}{\sqrt{2}}\cos\Big(\frac{1}{3} \arccos \Big[-\frac{s}{2\sqrt{2}} \Big]\Big) \Big],
\end{equation}
where $H_{\mathrm{bin}}[\cdot]$ is the binary entropy, and is generated by the family of strategies in \cref{eq:gammaStrat}. }
\label{prep:maxRand}
\end{proposition}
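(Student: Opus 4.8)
The plan is to prove the statement by a matching pair of bounds on $R(s)$, the supremum over quantum behaviors $P$ with $I_{\text{CHSH}}(P)=s$ of the certified randomness $\inf\{H(AB|X{=}0,Y{=}0,E):\text{realizations of }P\}$. The lower (achievability) bound follows from the previous two propositions. For $s\in(2,3\sqrt{3}/2]$: the self-tested state in \eqref{eq:deltaStrat_1} is pure on $Q_AQ_B$, so in every realization of its behavior $\rho_{Q_AQ_BE}=\ketbra{\psi}{\psi}\otimes\rho_E$ with $E$ decoupled, hence $H(AB|X{=}0,Y{=}0,E)=H(AB|X{=}0,Y{=}0)=2$ (measuring $A_0=\sigma_Z$, $B_0=\sigma_X$ on $\ket{\psi}$ yields the uniform distribution, and $2$ is the absolute maximum). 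Applying the same argument to Proposition~2 gives $H(AB|X{=}0,Y{=}0,E)=1+H_{\mathrm{bin}}\!\left[\tfrac{1}{2}(1+\sin3\gamma)\right]$, since measuring $B_0$ on the collapsed qubit $\ket{a}$ is a coin of bias $(-1)^a\sin3\gamma$. One then checks the trigonometric identity relating this to the closed form: with $\phi=\gamma+\tfrac{\pi}{6}$ one has $\sin3\gamma=-\cos3\phi$ and $s=3\cos\phi-\cos3\phi=6u-4u^3$ for $u=\cos\phi$; substituting $u=\sqrt{2}\cos\psi$ turns the cubic into $\cos3\psi=-s/(2\sqrt{2})$, whence $u=\sqrt{2}\cos(\tfrac{1}{3}\arccos[-s/(2\sqrt{2})])$ and $\tfrac{1}{2}(1+\sin3\gamma)=\tfrac{1}{2}+\tfrac{s}{2}-\tfrac{3}{\sqrt{2}}\cos(\tfrac{1}{3}\arccos[-s/(2\sqrt{2})])$. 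Hence $R(s)$ is at least the claimed value; the real content is the converse.

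For the converse I would first reduce to pure two-qubit strategies. Given any realization of any behavior with $I_{\text{CHSH}}=s$, Jordan's lemma block-diagonalizes Alice's two observables (and, independently, Bob's) into blocks of dimension at most two; each $M_{a|x}\otimes N_{b|y}$ respects the induced grading on $\mathcal{H}_{Q_A}\otimes\mathcal{H}_{Q_B}$, so dephasing the state in this grading is unobservable and its purification records the block labels, which I give to Eve. Within each block the (generally mixed) two-qubit state is a mixture of pure states whose label Eve can also learn by steering. Since handing Eve this classical data only decreases the conditional entropy, one obtains a realization of the same behavior with
\[ H(AB|X{=}0,Y{=}0,E)=\textstyle\sum_m r_m\,H(AB|X{=}0,Y{=}0)_{\psi_m}, \]
where each $\psi_m$ is a pure two-qubit strategy and, by linearity of $I_{\text{CHSH}}$, $\sum_m r_m\,I_{\text{CHSH}}(\psi_m)=s$. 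Therefore $R(s)\le\widehat{h}(s)$, the value at $s$ of the concave hull of $h(t):=\sup\{H(AB|X{=}0,Y{=}0)_\psi:\psi\text{ a pure two-qubit strategy with }I_{\text{CHSH}}(\psi)=t\}$.

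It then remains to compute $h$ and verify $\widehat{h}(s)$ reproduces the two formulas. For a pure two-qubit strategy $H(AB|X{=}0,Y{=}0)$ depends only on $\mathrm{p}(ab|00)$, i.e.\ on $\langle A_0\rangle,\langle B_0\rangle,\langle A_0B_0\rangle$; a symmetrization and Lagrange-multiplier analysis should establish that at fixed $I_{\text{CHSH}}$ the maximum is attained by a maximally entangled state with vanishing single-party marginals, reducing $h(t)$ to $1+H_{\mathrm{bin}}[\tfrac{1}{2}(1+c_{\min}(t))]$ with $c_{\min}(t)$ the least value of $|\langle A_0B_0\rangle|=|\vec{a}_0\!\cdot\!\vec{b}_0|$ consistent with $I_{\text{CHSH}}=t$. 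Fixing the angle between $\vec{a}_0$ and $\vec{b}_0$ and optimizing the coplanar $\vec{a}_1,\vec{b}_1$ is an elementary trigonometric exercise whose conclusion is that the CHSH value attainable under $\vec{a}_0\!\perp\!\vec{b}_0$ peaks exactly at $3\sqrt{3}/2$: thus $c_{\min}(t)=0$ and $h(t)=2$ for $|t|\le 3\sqrt{3}/2$, while for $|t|\in[3\sqrt{3}/2,2\sqrt{2}]$ the minimizing configuration is precisely \eqref{eq:gammaStrat}, giving $c_{\min}(t)=\sin3\gamma(t)$ and $h(t)=g(|t|)$ with $g$ the closed form above. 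Since relabelling Alice's outcomes for both inputs ($A_x\mapsto -A_x$) sends $I_{\text{CHSH}}\mapsto -I_{\text{CHSH}}$ while leaving $H(AB|X{=}0,Y{=}0)$ unchanged, $h$ is even; it equals $2$ on $[-3\sqrt{3}/2,3\sqrt{3}/2]$ and $g(|t|)$ on the two outer intervals, and since $g'(3\sqrt{3}/2)=0$ (as $H_{\mathrm{bin}}'[\tfrac{1}{2}]=0$) and $g''<0$ thereafter, $h$ is concave on $[-2\sqrt{2},2\sqrt{2}]$, so $\widehat{h}=h$. Together with achievability this yields $R(s)=h(s)$, the stated values.

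The main obstacle will be the single-strategy optimization: making rigorous the reduction to a maximally entangled state with centred marginals — in particular ruling out that a partially entangled state with a skewed $\mathrm{p}(ab|00)$ could beat it at some fixed CHSH value — and then solving the constrained Bloch-vector problem in closed form to identify \eqref{eq:gammaStrat} as the unique maximizer. By comparison the concavity/smoothness check on $g$ is routine calculus, and the Jordan/steering reduction is standard.
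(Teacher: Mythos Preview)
Your achievability argument and the trigonometric identification of the closed form match the paper exactly. The converse, however, is organized quite differently, and the difference is concentrated precisely at the step you flag as the main obstacle.

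The paper never reduces to pure two-qubit strategies for the upper bound. Instead it drops $E$ by strong subadditivity to get $R(s)\le\bar R(s):=\max_{P\in\mathcal Q,\,\mathcal C(P)=s}H(AB|00)_P$, proves $\bar R$ is concave (hence monotone) directly from concavity of Shannon entropy, and then inverts: $\bar R^{-1}(r)=\max\{\mathcal C(P):H(AB|00)_P=r\}$. The key simplification is a \emph{symmetrization at the level of distributions}: applying the local channel $\mathcal E$ that flips both output bits with probability $1/2$ leaves every correlator (hence $I_{\text{CHSH}}$) invariant and, by data processing, can only increase $H(AB|00)$. After $\mathcal E$ the $00$-block is $\mathrm p(aa|00)=\epsilon$, $\mathrm p(a\bar a|00)=\tfrac12-\epsilon$, so the entropy constraint becomes the single linear constraint $\langle A_0B_0\rangle=4\epsilon_r-1$. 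The remaining problem---maximize $I_{\text{CHSH}}$ over $\mathcal Q$ subject to a fixed value of $\langle A_0B_0\rangle$---is then solved by an explicit SOS/SDP with block-diagonal Gram matrix; the dual reduces to a one-variable optimization whose cubic stationary condition gives the stated closed form.

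This bypasses entirely your ``reduce to a maximally entangled state with centred marginals'' step: the $\mathcal E$-symmetrization achieves the same effect (centred marginals, symmetric $00$-block, entropy $1+H_{\mathrm{bin}}[\tfrac12(1+\langle A_0B_0\rangle)]$) for \emph{any} quantum behaviour, with a one-line data-processing justification, and no Lagrange-multiplier analysis over states is needed. Your Bloch-vector optimization for $c_{\min}(t)$ is the geometric counterpart of the paper's SOS/SDP; both should yield the same cubic, but the SOS route certifies optimality over all of $\mathcal Q$ without first invoking Jordan's lemma. Your Jordan/steering reduction and concave-hull bound $R(s)\le\widehat h(s)$ are correct and give a perfectly viable alternative architecture; if you adopt the $\mathcal E$-trick in place of the maximally-entangled reduction, the remaining obstacle disappears and your proof goes through.
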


\else{\medskip \noindent \textbf{Proposition 3.} \textit{The maximum randomness for strategies achieving a CHSH value in the range $s \in (2,3\sqrt{3}/2]$ is $\mathrm{2}$ bits, and is generated by the family of strategies in \cref{eq:deltaStrat_1}. For the range $s \in [3\sqrt{3}/2,2\sqrt{2}]$, the maximum is given by 
\begin{equation}
    1 + H_{\mathrm{bin}}\Big[\frac{1}{2} + \frac{s}{2}- \frac{3}{\sqrt{2}}\cos\Big(\frac{1}{3} \arccos \Big[-\frac{s}{2\sqrt{2}} \Big]\Big) \Big],
\end{equation}
where $H_{\mathrm{bin}}[\cdot]$ is the binary entropy, and is generated by the family of strategies in \cref{eq:gammaStrat}. }
\label{prep:maxRand}
}
\fi

Propositions 1--3 are proven in the \ifarxiv Appendices\else Supplemental Material~\cite{supp}\fi. In \cref{fig:maxRand}, we illustrate our results and compare them to a reliable lower bound on the minimum amount of randomness guaranteed by the same CHSH value~\cite{BrownDeviceIndependent2}. These two curves represent tight upper and lower bounds on the amount of DI randomness that can be certified by strategies achieving a particular CHSH value. 

\begin{figure}[h]
\includegraphics[width=8.4cm]{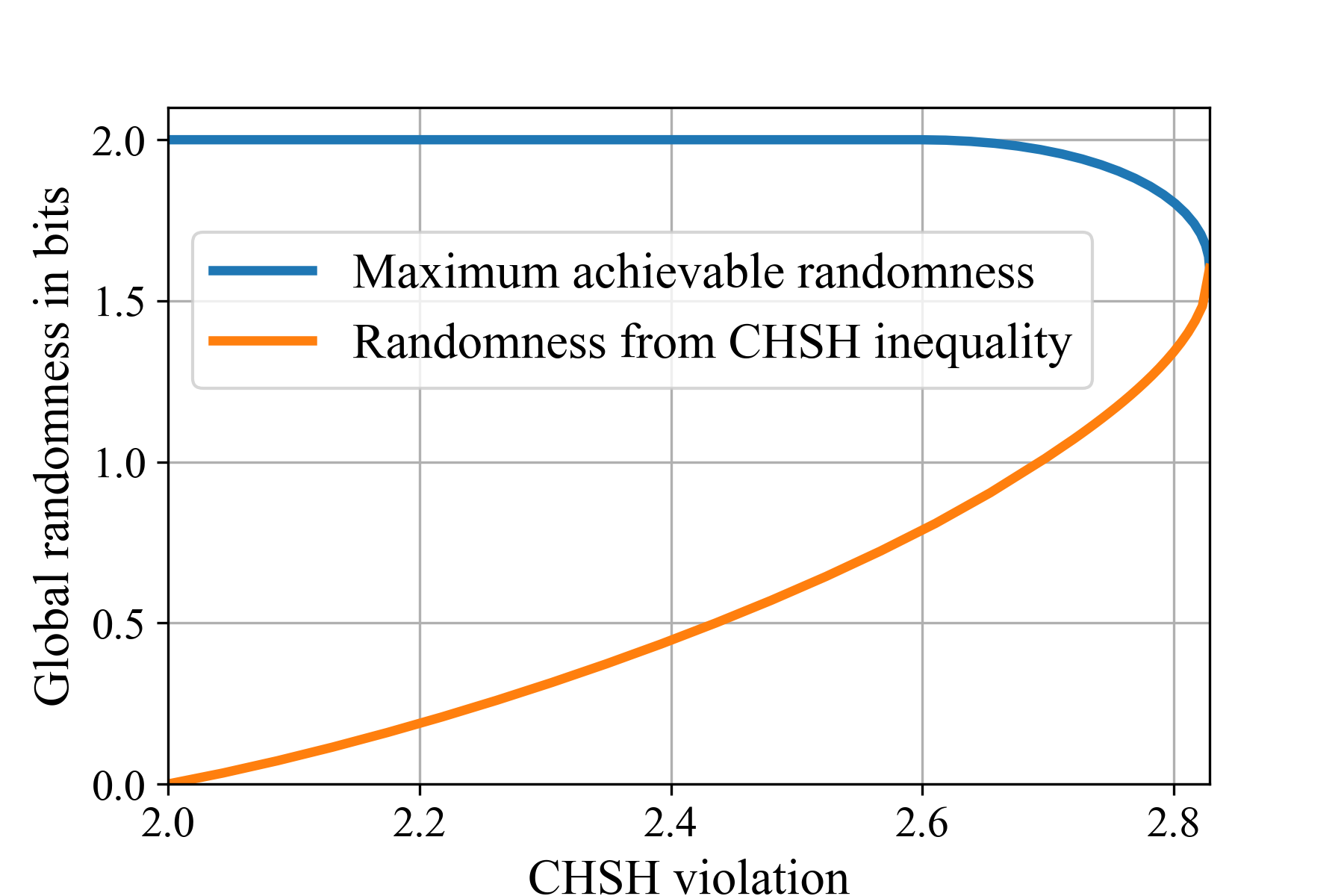}
\centering
\caption{The relationship between maximum randomness and CHSH value in the noiseless bipartite 2-input 2-output scenario. Plotted is the maximum achievable randomness for all quantum strategies that achieve a particular CHSH value (blue), and a reliable lower bound certified by the same CHSH value (orange) using analysis from~\cite{BrownDeviceIndependent2}. For the interval of values $(2,3\sqrt{3}/2]$ two bits of randomness are certified by the family of strategies in \cref{eq:deltaStrat_1}, and for the region $[3\sqrt{3}/2,2\sqrt{2}]$ the maximum is certified by the family of strategies in \cref{eq:gammaStrat}. Note that it is not the case that a CHSH value guarantees rates given by the blue curve; the blue curve gives a tight upper bound on achievable rates in a noiseless scenario.}
\label{fig:maxRand}
\end{figure}

In \cref{fig:alphaNoise} we explore the robustness of our constructions. We consider a Werner state noise model~\cite{Werner}, i.e., $\rho_{Q_{A}Q_{B}} = (1-p)\ketbra{\psi}{\psi} + p \ \mathbb{I}_{AB}/4$, where $p\in [0,1]$ is the weight of the uniform noise. For simplicity, we assume noiseless measurements. We use this state and measurements to simulate statistics from which reliable DI lower bounds can be generated using the techniques of~\cite{BrownDeviceIndependent2}. At each noise level, the randomness is optimized over the choice of self-test from \cref{eq:gammaIneq_1}. This is compared to the tilted CHSH expressions~\cite{AcinRandomnessNonlocality,BampsPironio}, where the tilting parameter is similarly optimized.   
\begin{figure}[h]
\includegraphics[width=8.4cm]{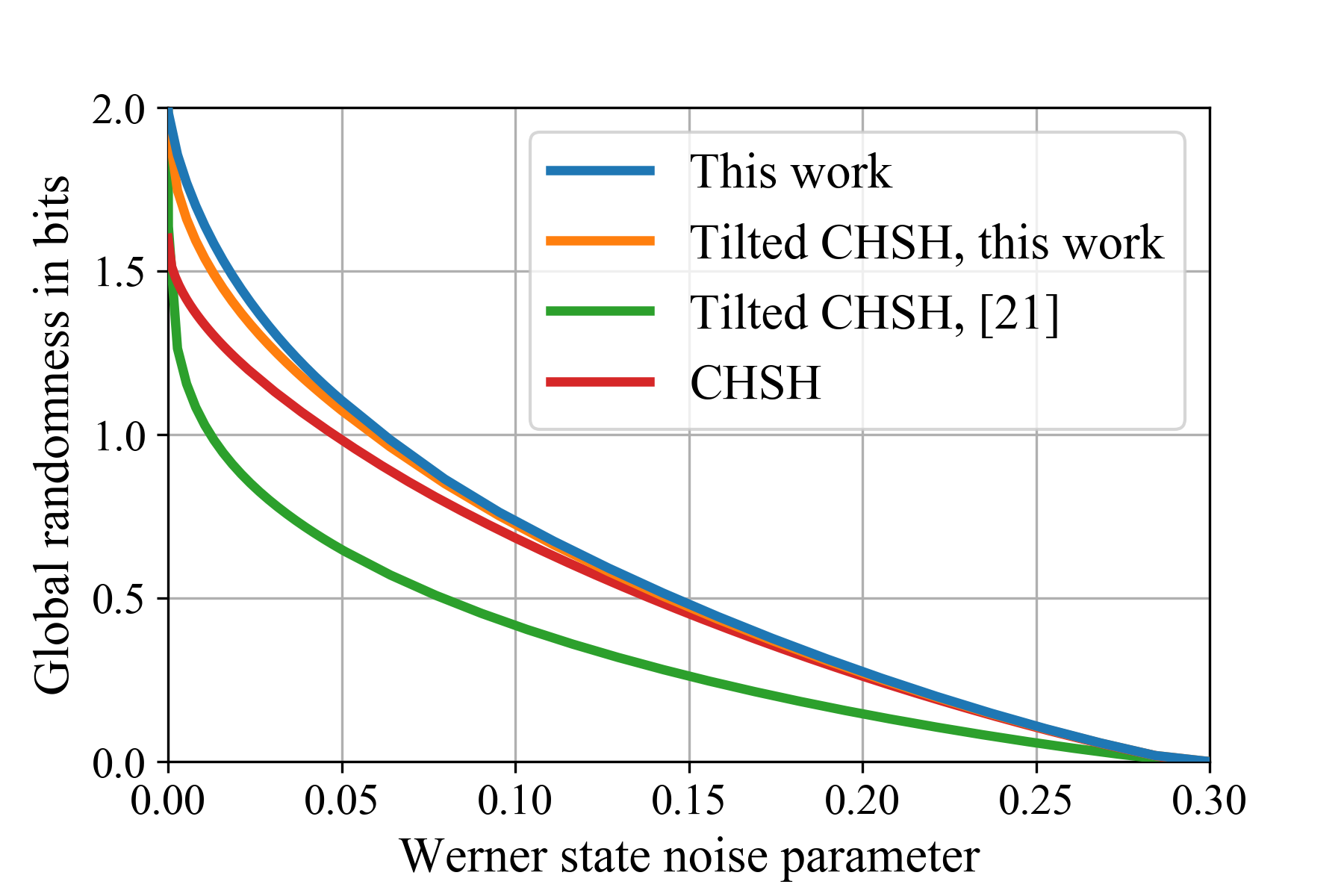}
\centering
\caption{Noise comparison for Bell expressions that certify maximum randomness in the bipartite 2-input 2-output scenario. Our constructions (blue) are compared to the tilted Bell inequalities from~\cite{AcinRandomnessNonlocality}, using both the numerical technique from~\cite{BrownDeviceIndependent2} (orange), and the min-entropy~\cite{AcinRandomnessNonlocality} (green). All these curves have been generated by optimizing the Bell expression (within the relevant families) at each value of the noise, and the new analysis shows improved rates of randomness generation over the CHSH statistic (red).}
\label{fig:alphaNoise}
\end{figure}

\ifarxiv\section{Discussion}\else{\medskip\noindent\it Discussion.|}\fi Our tight upper bound on the achievable DI randomness conditioned on the CHSH value shows that only when one approaches Tsirelson's bound does one need to sacrifice randomness for nonlocality. This comes from the fact that the optimal measurements needed to achieve $I_{\text{CHSH}} = 2\sqrt{2}$ have correlated outcomes, whereas correlations that satisfy $I_{\text{CHSH}} \leq 3\sqrt{3}/2$ can have uniform measurement outcomes. When there is zero noise, such a distribution can be used to generate 2 bits of randomness (using the $\gamma=0$ strategy in \cref{eq:gammaStrat}). As noise is added, using the family of Bell inequalities that self-test a distribution with a CHSH value greater than $3\sqrt{3}/2$ (obtained by increasing $\gamma$), we can continue to certify more randomness than would be possible using CHSH inequality at that noise level. Taking the optimal value of $\gamma$ at each level of noise we find that the Bell expressions tend to the CHSH statistic as the noise approaches the boundary where no randomness can be certified. In this sense, CHSH is the most robust statistic, which is natural since it defines a facet of the local polytope and so becomes the only Bell inequality that can be violated with high enough noise. We also remark that the robustness of the tilted CHSH inequalities presented here is higher than that of~\cite{AcinRandomnessNonlocality} (cf.\ the orange vs green curves in Fig.~\ref{fig:alphaNoise}). This is a result of using improved numerical techniques to bound the conditional von Neumann entropy directly rather than the min-entropy that is used in~\cite{AcinRandomnessNonlocality}.  

Based on an experimental estimate of the noise, a Bell inequality from one of our families could be chosen that maximises the certifiable randomness (along the lines of \cref{fig:alphaNoise}). Knowledge of the full distribution could also boost the noise performance or be used to search for improved protocols in the presence of noise.  However, the use of more parameters would lead to a penalty when finite size effects are accounted for. We leave the question of how our construction performs in other noise models, such as detector efficiency, to future work, and pose an open question as to if our construction is truly optimal in the noisy regime.          

One other potential application of our constructions is to blind randomness expansion~\cite{MillerBlind,HonghaoBlind,MetgerGEAT}, where Alice tries to certify local randomness from one device without trusting the other. Since their outputs are uncorrelated following a self-test from Proposition \ifarxiv\ref{prep:delta}\else 1\fi, such a statistic could be used to generate the optimal 1 bit of local randomness in the blind setting.

Finally, it would be interesting to further investigate analogous results in multi-partite scenarios~\cite{WoodheadMermin,GrasselliMulti} or those with more inputs or outputs~\cite{SarkarSelfTest}. Indeed,~\cite{GrasselliMulti} showed maximal randomness for the outputs of two parties based on a three party Mermin-Ardehali-Belinskii-Klyshko inequality~\cite{CollinesMulti}, and with advancements in computational efficiency from new numerical techniques~\cite{BrownDeviceIndependent2,MasiniDI} alongside self-testing results~\cite{McKagueGraph,WuWselfTest}, multi-partite DI-RG has many avenues to explore.

\ifarxiv\acknowledgements\else\medskip\noindent{\it Acknowledgements}|\fi This work was supported by EPSRC via the Quantum Communications Hub (Grant No.\ EP/T001011/1) and Grant No.\ EP/SO23607/1.

%


\onecolumngrid
\appendix

\section{Proof of Propositions \ifarxiv\ref{prep:delta} and \ref{prep:gamma}\else 1 and 2\fi}
The goal of this section is to prove Propositions \ifarxiv\ref{prep:delta} and \ref{prep:gamma}\else 1 and 2\fi, which are restated below:

\vspace{0.5cm}

\noindent \textbf{Proposition 1} ($I_{\delta}$-family of self-tests). \textit{Let $0 < \delta \leq \pi/6$, and define the family of Bell expressions parameterized by $\delta$,
\begin{equation}
    I_{\delta} = \langle A_{0} B_{0} \rangle + \frac{1}{\sin \delta}\Big(  \langle A_{0} B_{1} \rangle  + \langle A_{1} B_{0} \rangle \Big)   
    - \frac{1}{\cos 2\delta} \langle A_{1} B_{1} \rangle . \label{eq:deltaIneq_supp}
\end{equation}
Then we have the following:
\begin{enumerate}[(i)]
    \item The local bound is given by $I_{\delta}^{\mathrm{L}} = -1 + \frac{2}{\sin{\delta}} + \frac{1}{\cos{2\delta}}$.
    \item The quantum bound is given by $ I_{\delta}^{\mathrm{Q}} = \frac{2\cos^{3}\delta}{\cos 2\delta \sin \delta}$.
    \item Up to local isometries there is a unique strategy that achieves $I_{\delta} = I_{\delta}^{\mathrm{Q}}$:
\begin{gather}
    \rho_{Q_AQ_B}=\ket{\psi}\bra{\psi}\text{ where }\ket{\psi} = \frac{1}{\sqrt{2}}(\ket{00} + \ket{11}), \nonumber \\
    A_{0} = \sigma_{Z}, \ \ \ B_{0} = \sigma_{X}, \nonumber \\
    A_{1} = -\sin \delta \, \sigma_{Z} + \cos \delta \, \sigma_{X}, \nonumber \\ B_{1} = \cos \delta \, \sigma_{Z} - \sin \delta \, \sigma_{X}.
    \label{eq:deltaStrat_supp}
\end{gather}
\end{enumerate}
}

\vspace{0.5cm}

\noindent \textbf{Proposition 2} ($J_{\gamma}$-family of self-tests). \textit{Let $0 \leq \gamma \leq \pi/12$, and define the family of Bell expressions parameterized by $\gamma$,
\begin{equation}
    J_{\gamma} = \langle A_{0} B_{0} \rangle  + \Big(4\cos^{2} \Big(\gamma + \frac{\pi}{6}\Big) - 1\Big)\Big(  \langle A_{0} B_{1} \rangle  + \langle A_{1} B_{0} \rangle - \langle A_{1} B_{1} \rangle \Big). \label{eq:gammaIneq_supp}
\end{equation}
Then we have the following:
\begin{enumerate}[(i)]
    \item The local bound is given by $J_{\gamma}^{\mathrm{L}} = 12\cos^2\Big( \gamma + \frac{\pi}{6} \Big) - 4$.
    \item The quantum bound is given by $J_{\gamma}^{\mathrm{Q}} = 8\cos^3\Big( \gamma + \frac{\pi}{6}\Big)$.
    \item Up to local isometries there is a unique strategy that achieves $J_{\gamma} = J_{\gamma}^{\mathrm{Q}}$:
\begin{gather}
    \rho_{Q_AQ_B}=\ket{\psi}\bra{\psi}\text{ where }\ket{\psi} = \frac{1}{\sqrt{2}}(\ket{00} + \ket{11}), \nonumber \\
    A_{0} = \sigma_{Z}, \ \ \ B_{0} = \sin 3\gamma \, \sigma_{Z} +\cos 3\gamma \, \sigma_{X}, \nonumber \\
    A_{1} = \cos\Big(\frac{2\pi}{3}-2\gamma \Big)  \sigma_{Z} + \sin\Big(\frac{2\pi}{3}-2\gamma \Big) \sigma_{X}, \nonumber \\ B_{1} = \cos\Big(\frac{\pi}{6}+\gamma \Big) \sigma_{Z} - \sin \Big(\frac{\pi}{6}+\gamma \Big) \sigma_{X}.
    \label{eq:gammaStrat_supp}
\end{gather}
\end{enumerate}
}

\vspace{0.5cm}

We follow the same method for both cases. For part $(i)$, the local bound can be found by setting the observables $A_{x},B_{y}$ to $\pm1$, corresponding to an extremal or deterministic strategy. Since these are the vertices of the local polytope, one such combination will be the optimal local strategy. 

For the quantum bound in part $(ii)$, a sum-of-squares (SOS) decomposition is found for the Bell expression offset by its claimed maximum quantum value, exploiting the symmetry of the Bell expression under relabelling of $A$ and $B$~\cite{BampsPironio}. The existence of an SOS decomposition proves the maximum quantum value claimed, and is detailed in \cref{sec:SOS1,sec:SOS2}. 

For the self-test in part $(iii)$ we use the resulting SOS decomposition in combination with Jordan's lemma~\cite{Jordan}. This simplifies the analysis to qubits, and we derive a system of non-linear equations satisfied by any state and measurements that achieve the maximum quantum value. The resulting system is then analytically solved, and we show the only state and measurements for which these equations are satisfied is given by the target strategy (\cref{eq:deltaStrat_supp,eq:gammaStrat_supp}) up to local unitaries. This process is detailed in \cref{sec:jordan,sec:delta,sec:gamma}, and completes the proof of Propositions \ifarxiv\ref{prep:delta} and \ref{prep:gamma}\else 1 and 2\fi. We remark that our self-tests define hyperplanes tangential to the corresponding strategy on the boundary of the quantum set~\cite{GohGeometry}.  

For completeness, in \cref{sec:convex}, we show from Jordan's lemma that the private randomness of any strategy that saturates the quantum bounds in Propositions \ifarxiv\ref{prep:delta} and \ref{prep:gamma} \else 1 and 2 \fi is equal to that of the target strategy. 

\subsection{Self-testing and sum-of-squares decompositions}\label{sec:SOS1}

We consider only the exact self-testing statement in this work, and leave proof of the robust statement for future work. We begin by defining self-testing.

\begin{definition}[Self-test]
Let the observables $A_{x}$, $B_{y}$ and pure state $\ket{\psi}_{Q_{A}Q_{B}}$ be the target two-qubit strategy, and let $S$ be a Bell operator. The inequality $\langle S \rangle \leq I^{\text{Q}}$ \emph{self-tests} the target state and measurements if for all physical quantum strategies $(\tilde{\rho}_{\tilde{Q}_{A}\tilde{Q}_{B}},\tilde{A}_{x},\tilde{B}_{y})$ that satisfy $\langle \tilde{S} \rangle = I^{\text{Q}}$, there exists a local isometry $V:\mathcal{H}_{\tilde{Q}_{A}}\otimes \mathcal{H}_{\tilde{Q}_{B}}\otimes \mathcal{H}_{E} \rightarrow \mathcal{H}_{Q_{A}}\otimes \mathcal{H}_{Q_{B}} \otimes \mathcal{H}_{\mathrm{Junk}}$, $V = V_{A}\otimes V_{B} \otimes \mathbb{I}_{E}$, and ancillary state $\ket{\xi}_{\text{Junk}}$ such that, for the purification $\ket{\Psi}_{\tilde{Q}_{A}\tilde{Q}_{B}E}$ of $\tilde{\rho}_{\tilde{Q}_{A}\tilde{Q}_{B}}$,
\begin{equation}
    V\Big[ (\tilde{A}_{x} \otimes \tilde{B}_{y} \otimes \mathbb{I}_{E}) \ket{\Psi}_{\tilde{Q}_{A}\tilde{Q}_{B}E} \Big] 
    =   (A_{x} \otimes B_{y}) \ket{\psi}_{Q_{A}Q_{B}} \otimes \ket{\xi}_{\text{Junk}}. 
\end{equation}
\label{def:selfTest}
\end{definition}
\noindent Throughout this appendix, we refer to the physical state and measurements we are trying to self-test as the ``reference'', denoted with a tilde. The strategies in \cref{eq:deltaStrat_supp,eq:gammaStrat_supp} are then the ``target'' state and measurements; which target strategy we refer to will be clear from the context. 

For a Bell operator $S$ that defines the quantum Bell inequality $\langle S \rangle \leq I^{\text{Q}}$, the operator $\bar{S}:= I^{\text{Q}}\mathbb{I} - S$, satisfies $\bra{\phi}\bar{S}\ket{\phi}\geq 0$ for all quantum states $\ket{\phi}$, i.e., $\bar{S} \succeq 0$. If there exists a set of operators $P_{i}$ that are polynomials of $A_{x},B_{y}$ and satisfy 
\begin{equation}
    \bar{S} = \sum_{i} P_{i}^{\dagger}P_{i},
\end{equation}
then we have found a sum-of-squares (SOS) decomposition of the operator $\bar{S}$: positivity of $\bar{S}$ follows directly from the fact that $K^{\dagger}K\succeq 0$ for any operator $K$.

SOS decompositions can be used to enforce algebraic constraints on any state and measurements that satisfy $\langle S \rangle = I^{\text{Q}}$, since this implies
\begin{equation}
    \langle \bar{S} \rangle  = \sum_{i} \bra{\psi}P_{i}^{\dagger}P_{i}\ket{\psi} = 0\,.
\end{equation}
This can only hold if $P_{i}\ket{\psi} = 0$ for all $i$. Relations of this form are used to prove the self-testing statement in \cref{def:selfTest}.

\subsection{SOS decomposition for the inequalities in Propositions \ifarxiv\ref{prep:delta} and \ref{prep:gamma}\else 1 and 2\fi}
Finding an SOS decomposition can be recast as a semidefinite program (SDP)~\cite{BampsPironio}. We start by considering a vector $\bm{R}=[R_{0},...,R_{k},...]^{\text{T}}$ whose components are linear combinations of $A_0$, $A_1$, $B_0$ and $B_1$. We consider the case where each polynomial $P_i$ is linear, writing $P_{i} = \sum_{k} q_{i}^{k}R_{k}$ for some coefficients $\{q_i^k\}_k$. Then
\begin{align}
    \bar{S} &= \sum_{i} P_{i}^{\dagger}P_{i} \nonumber \\
    &= \sum_{kj}   R_{k}^{\dagger}\left(\sum_{i}\big(q_{i}^{k}\big)^{*} q_{i}^{j}\right)R_{j} \nonumber \\
    &= \sum_{kj}R_{k}^{\dagger}M_{kj}R_{j} = \bm{R}^{\dagger}M\bm{R}\,, \label{eq:SOS}
\end{align}
where $M$ is the Gram matrix of the set of vectors $\{\bm{q}^{k}\}$. Since $M$ is a Gram matrix, it is positive semidefinite by construction.  We can hence use semidefinite programming to find an $M \succeq 0$ that satisfies \cref{eq:SOS}, and then find the polynomials $P_{i}$ via the matrix square root:
\begin{align}
    \bar{S} &= \bm{R}^{\dagger}M\bm{R} = \Big( \sqrt{M}\bm{R} \Big)^{\dagger} \Big( \sqrt{M} \bm{R} \Big). \label{eq:sosDecomp}
\end{align}
Since each entry of the vector $\sqrt{M}\bm{R}$ takes the form $[\sqrt{M}\bm{R}]_{i} = \sum_{k}[\sqrt{M}]_{ik}R_{k}$, we find that $P_{i} =[\sqrt{M}\bm{R}]_{i}$ provides the set of polynomials that satisfies \cref{eq:SOS}.

For the Bell operator $\bar{S}_{\delta} = I_{\delta}^{\text{Q}}\mathbb{I} - I_\delta$, where
\begin{equation}
     I_\delta = A_{0} B_{0}  + \frac{1}{\sin\delta}\Big(   A_{0} B_{1}   + \ A_{1} B_{0}  \Big)   
    - \frac{1}{\cos 2\delta}  A_{1} B_{1} , \label{eq:deltaOp}
\end{equation}
the SOS decomposition is given by the following lemma.
\begin{lemma}[$I_{\delta}$-family SOS decomposition]
Let $\bm{R} = [R_{0},R_{1},R_{2},R_{3}]^{\text{T}}$, where 
\begin{align}
    R_{0} = \frac{1}{\sqrt{2}}(B_{1}  - A_{ 1}), \\
    R_{1} = \frac{1}{\sqrt{2}}(B_{0} -  A_{0}), \\
    R_{2} = \frac{1}{\sqrt{2}}(B_{1} +  A_{1}), \\
    R_{3} = \frac{1}{\sqrt{2}}(B_{0} + A_{0}).
\end{align}
For every $\delta\in(0,\pi/6]$, the Bell expressions $\bar{S}_{\delta}$ can be written as a SOS decomposition $\bar{S}_{\delta}=\bm{R}^{\dagger}M_\delta\bm{R}$ where
\begin{equation}
    M_{\delta} = 
    \begin{bmatrix}
        (\alpha - 1/2)\zeta  & \beta &  0 & 0 \\
        \beta & \alpha + 1/2 & 0 & 0 \\
        0 & 0 & (\alpha + 1/2)\zeta & -\beta \\
        0 & 0 & -\beta & \alpha - 1/2
    \end{bmatrix},
\end{equation}
for $\alpha = 1/(2\tan \delta)$, $\beta = 1/(2\sin \delta)$ and $\zeta = 1/\cos 2\delta$.

The maximum quantum value of $I_\delta$ is $I_{\delta}^{\mathrm{Q}} = \frac{2\cos^{3}\delta}{\cos 2\delta \sin \delta}$.
\label{lem:SOS_delta}
\end{lemma}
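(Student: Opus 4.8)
Because the matrix $M_\delta$ is already given in the statement, the plan is to reduce the lemma to two independent checks: (a) the operator identity $\bm R^\dagger M_\delta \bm R = \bar S_\delta$, where $\bar S_\delta := I_\delta^{\mathrm Q}\mathbb I - I_\delta$; and (b) $M_\delta \succeq 0$. Granting these, $\bar S_\delta = \bm R^\dagger M_\delta \bm R \succeq 0$ is immediate, so $\langle I_\delta\rangle \le I_\delta^{\mathrm Q}$ on every quantum strategy, and exhibiting a strategy that saturates this shows $I_\delta^{\mathrm Q}$ is exactly the quantum value. For (a) I would expand $\bm R^\dagger M_\delta \bm R$ by hand, using that the $A_x,B_y$ are Hermitian $\pm1$-valued observables ($A_x^2 = B_y^2 = \mathbb I$) and that Alice's operators commute with Bob's. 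The block-diagonal structure of $M_\delta$, which mirrors the invariance of $I_\delta$ under interchanging the $A$ and $B$ labels (following~\cite{BampsPironio}), lets the $\{R_0,R_1\}$ and $\{R_2,R_3\}$ blocks be treated separately.

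The heart of step (a) is a short computation: $R_0^2 = \mathbb I - A_1 B_1$, $R_1^2 = \mathbb I - A_0 B_0$, $R_2^2 = \mathbb I + A_1 B_1$, $R_3^2 = \mathbb I + A_0 B_0$, and the cross terms $R_0R_1 + R_1R_0$ and $R_2R_3 + R_3R_2$ each contain the same-party anticommutators $A_0A_1 + A_1A_0$ and $B_0B_1 + B_1B_0$; in the combination $\beta(R_0R_1+R_1R_0) - \beta(R_2R_3+R_3R_2)$ these cancel, leaving exactly $-2\beta(A_0B_1 + A_1B_0)$. Collecting the diagonal contributions gives $\bm R^\dagger M_\delta \bm R = 2\alpha(\zeta+1)\mathbb I - A_0B_0 + \zeta A_1B_1 - 2\beta(A_0B_1 + A_1B_0)$. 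Substituting $\alpha = 1/(2\tan\delta)$, $\beta = 1/(2\sin\delta)$, $\zeta = 1/\cos 2\delta$ and using $1 + \zeta = 2\cos^2\delta/\cos 2\delta$ yields $2\alpha(\zeta+1) = 2\cos^3\delta/(\cos 2\delta\,\sin\delta)$, the claimed $I_\delta^{\mathrm Q}$, while the remaining terms reproduce $-I_\delta$ with the correct coefficients $1$, $1/\sin\delta$ and $-1/\cos 2\delta$ appearing in \cref{eq:deltaOp}.

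For step (b), $M_\delta$ is block diagonal with two $2\times2$ blocks. On $0 < \delta \le \pi/6$ one has $\cos 2\delta \ge 1/2 > 0$, $\sin\delta > 0$ and $\cos\delta > \sin\delta$, so $\zeta > 0$, $\beta > 0$ and both of $\alpha - \tfrac12$, $\alpha + \tfrac12$ are positive; hence both blocks have strictly positive diagonal entries. A direct calculation shows the two blocks share the determinant $\zeta(\alpha^2 - \tfrac14) - \beta^2$, and since $\alpha^2 - \tfrac14 = \cos 2\delta/(4\sin^2\delta)$ this equals $1/(4\sin^2\delta) - \beta^2 = 0$. Thus each block is rank-one positive semidefinite and $M_\delta \succeq 0$; writing $P_i = [\sqrt{M_\delta}\,\bm R]_i$ one gets the SOS decomposition $\bar S_\delta = P_1^\dagger P_1 + P_2^\dagger P_2$ with exactly two terms. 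Positivity of $\bar S_\delta$ gives $\langle I_\delta \rangle \le I_\delta^{\mathrm Q}$, and equality is attained by the target strategy in \cref{eq:deltaStrat_supp} (verified by substituting the stated Pauli observables and $\ket\psi$ into $I_\delta$), which establishes that $I_\delta^{\mathrm Q}$ is the maximum quantum value of $I_\delta$.

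I expect the only real obstacle to be the bookkeeping in step (a): tracking the signs so that the same-party anticommutator terms cancel, and checking the trigonometric simplifications. Nothing here is conceptually deep — indeed $M_\delta$ itself can simply be produced by solving the SDP described around \cref{eq:SOS}. The one feature worth flagging is that the block determinants vanish, so the SOS is tight (rank $2$ rather than generic); this rank deficiency is precisely what is exploited in the self-testing argument for Proposition~1(iii).
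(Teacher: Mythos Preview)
Your proposal is correct and follows the same approach as the paper. The paper's own proof is a single line (``verified by direct calculation'' plus ``the strategy in~\eqref{eq:deltaStrat_supp} achieves the bound''); you have simply unpacked that calculation, including the explicit check that $M_\delta\succeq0$ via the vanishing block determinants, which the paper leaves implicit but later exploits when writing down the polynomials $P_i(\delta)$.
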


\begin{proof}
 The claim $\bar{S}_{\delta}=\bm{R}^{\dagger}M_\delta\bm{R}$ can be verified by direct calculation.  Since $\bar{S}_{\delta}\succeq 0$ we have $\langle I_\delta\rangle\leq I_{\delta}^{\mathrm{Q}}$, but the quantum strategy given in~\eqref{eq:deltaStrat_supp} shows that this bound is achievable.
\end{proof}

Four polynomials $P_{i}(\delta)$ emerge from this decomposition:
\begin{align}
    P_{0}(\delta) &= k_{+}\big[ R_{0} + (\sin \delta + \cos \delta)R_{1} \big], \\
    P_{1}(\delta) &= (\sin \delta + \cos \delta)P_{0}(\delta), \\
    P_{2}(\delta) &= k_{-}\big[ R_{2} + (\sin \delta - \cos \delta)R_{3} \big], \\
    P_{3}(\delta) &= (\sin \delta - \cos \delta)P_{2}(\delta),
\end{align}
where 
\begin{align}
    k_{\pm} &= \frac{1}{\sqrt{2\sin\delta\,(\cos\delta\pm\sin\delta)(2\pm\sin(2\delta))}}\,.
\end{align}

Similarly, the shifted Bell operator for the $J_{\gamma}$ family is given by $\bar{S}'_{\gamma} = J_{\gamma}^{\text{Q}}\mathbb{I} - J_{\gamma}$, where
\begin{equation}
     J_{\gamma} = A_{0} B_{0}  + (4\cos^2(\gamma + \pi/6)-1)\Big(   A_{0} B_{1}   +  A_{1} B_{0}  -  A_{1} B_{1} \Big), \label{eq:gammaOp_s}
\end{equation}
and we have the following SOS decomposition:
\begin{lemma}[$J_{\gamma}$-family SOS decomposition]
Let $\bm{R}$ be as defined in \cref{lem:SOS_delta}. 
For every $\gamma\in[0,\pi/12]$, the Bell expressions $\bar{S}'_{\gamma}$ can be written as a SOS decomposition $\bar{S}'_{\gamma}=\bm{R}^{\dagger}M'_\gamma\bm{R}$ where
\begin{equation}
    M'_{\gamma} = 
    \begin{bmatrix}
        1/2 + \mu(4\mu^2 - 2\mu -1)  &  2\mu^2 - 1/2 &  0 & 0 \\
        2\mu^2 - 1/2 & \mu + 1/2 & 0 & 0 \\
        0 & 0 &  -1/2 + \mu(4\mu^2 + 2\mu -1) & 1/2 -2\mu^2 \\
        0 & 0 & 1/2 -2\mu^2 & \mu - 1/2
    \end{bmatrix},
\end{equation}
where $\mu = \cos( \gamma + \pi/6)$.

The maximum quantum value of $J_\gamma$ is $J_{\gamma}^{\mathrm{Q}} = 8\mu^3$.
\label{lem:SOS_gamma}
\end{lemma}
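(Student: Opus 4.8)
\emph{Proof plan.} The plan is to mirror the proof of \cref{lem:SOS_delta}: since both the vector $\bm{R}$ and the matrix $M'_\gamma$ are supplied, all that remains is (a) to check that $\bm{R}^{\dagger}M'_\gamma\bm{R}$ equals $\bar{S}'_\gamma$ as an operator identity, (b) to check that $M'_\gamma\succeq 0$ for every $\gamma\in[0,\pi/12]$ so that this is a genuine sum of squares, and (c) to note that the resulting bound is saturated. Writing $\mu=\cos(\gamma+\pi/6)$, the parameter range corresponds to $\mu\in[1/\sqrt{2},\sqrt{3}/2]$.

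First I would expand $\bm{R}^{\dagger}M'_\gamma\bm{R}$ directly. Since $M'_\gamma$ is block diagonal in the pairs $(R_0,R_1)$ and $(R_2,R_3)$, only the six operators $R_0^2,R_1^2,R_0R_1+R_1R_0$ and $R_2^2,R_3^2,R_2R_3+R_3R_2$ occur. Using only $A_x^2=B_y^2=\mathbb{I}$ and $[A_x,B_y]=0$ (and \emph{not} any relation between $A_0,A_1$ or between $B_0,B_1$), one finds $R_0^2=\mathbb{I}-A_1B_1$, $R_1^2=\mathbb{I}-A_0B_0$, $R_2^2=\mathbb{I}+A_1B_1$, $R_3^2=\mathbb{I}+A_0B_0$, while $R_0R_1+R_1R_0$ and $R_2R_3+R_3R_2$ each contain the common term $\tfrac12(\{A_0,A_1\}+\{B_0,B_1\})$ together with $\mp(A_0B_1+A_1B_0)$. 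The key observation is that $M'_{12}=2\mu^2-\tfrac12=-M'_{34}$, so the anticommutator contributions cancel between the two blocks, leaving an affine combination of $\mathbb{I}$ and the four products $A_xB_y$. Reading off coefficients, the coefficient of $A_0B_0$ is $M'_{44}-M'_{22}=-1$, of $A_1B_1$ is $M'_{33}-M'_{11}=4\mu^2-1$, of $A_0B_1$ and of $A_1B_0$ is $-(4\mu^2-1)$, and of $\mathbb{I}$ is $M'_{11}+M'_{22}+M'_{33}+M'_{44}=8\mu^3$. Comparing with $\bar{S}'_\gamma=J_\gamma^{\mathrm{Q}}\mathbb{I}-J_\gamma$ and \cref{eq:gammaOp_s} then both confirms the identity and pins down $J_\gamma^{\mathrm{Q}}=8\mu^3$.

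Second I would verify $M'_\gamma\succeq 0$, treating each $2\times2$ block separately. A short computation shows that in each block the product of the diagonal entries equals $4\mu^4-2\mu^2+\tfrac14$, which is exactly the square of the off-diagonal entry, so the determinant of each block vanishes identically in $\mu$; hence each block is positive semidefinite iff its trace is nonnegative. The first block has trace $4\mu^3-2\mu^2+1$, whose minimum over $[0,1]$ (attained at $\mu=1/3$) is $25/27>0$; the second block has trace $4\mu^3+2\mu^2-1=(2\mu-1)(2\mu^2+2\mu+1)$, which is nonnegative for $\mu\ge\tfrac12$ and in particular for $\mu\in[1/\sqrt{2},\sqrt{3}/2]$. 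Thus $M'_\gamma\succeq 0$ on the whole family, dropping in rank only at the endpoints $\gamma=0$ and $\gamma=\pi/12$. Consequently $\bar{S}'_\gamma=(\sqrt{M'_\gamma}\,\bm{R})^{\dagger}(\sqrt{M'_\gamma}\,\bm{R})\succeq 0$, giving $\langle J_\gamma\rangle\le 8\mu^3$; the target strategy \cref{eq:gammaStrat_supp} achieves $\langle J_\gamma\rangle=8\mu^3$, so the bound is tight, and explicit squares $P_i(\gamma)$ can be extracted from the rows of $\sqrt{M'_\gamma}\,\bm{R}$ exactly as for the $I_\delta$ family, for use in the subsequent self-testing argument.

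The computation is essentially verification once $\bm{R}$ and $M'_\gamma$ are in hand, so there is no deep obstacle; the two points needing care are (i) keeping the anticommutators $\{A_0,A_1\}$ and $\{B_0,B_1\}$ in the expansion and confirming they really cancel, which is what makes this a valid SOS decomposition for arbitrary (possibly non-commuting) binary observables, with Jordan's lemma reserved for the later self-test; and (ii) checking that the interval $[0,\pi/12]$ is precisely the set on which the second block's trace stays nonnegative, so that positivity is not lost anywhere inside the family.
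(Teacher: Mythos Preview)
Your proposal is correct and follows exactly the route the paper takes: the paper simply says the lemma ``can be proven in exactly the same way as \cref{lem:SOS_delta}'', i.e., verify the operator identity by direct calculation and note achievability, and you have carried out that calculation explicitly and correctly, including the check that $M'_\gamma\succeq 0$ that the paper leaves implicit. Two small inaccuracies in your commentary (not in the proof logic): since both $2\times 2$ blocks have vanishing determinant for \emph{every} $\mu$, $M'_\gamma$ has rank $2$ throughout the interval rather than ``dropping in rank only at the endpoints''; and the second block's trace $(2\mu-1)(2\mu^2+2\mu+1)$ is nonnegative for all $\mu\ge 1/2$, i.e., for $\gamma\le\pi/6$, so $[0,\pi/12]$ is not ``precisely'' the positivity range but well inside it.
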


This can be proven in exactly the same way as \cref{lem:SOS_delta} and gives rise to the polynomials
\begin{align}
    P'_{0}(\gamma) &= c_{+}\big[ (2\mu-1)R_{0} + R_{1} \big], \\
    P'_{1}(\gamma) &= \frac{P'_{0}(\gamma)}{2\mu-1}, \\
    P'_{2}(\gamma) &= c_{-}\big[ (2\mu+1)R_{2} - R_{3} \big], \\
    P'_{3}(\gamma) &= - \frac{P'_{2}(\gamma)}{2\mu + 1},
\end{align}
where 
\begin{align}
    c_{\pm} &= \frac{4\mu^2-1}{2\sqrt{4\mu^3 \mp 2\mu^2 \pm 1} }.
\end{align}

Lemmas~\ref{lem:SOS_delta} and~\ref{lem:SOS_gamma} establish the quantum bounds in Propositions \ifarxiv\ref{prep:delta} and \ref{prep:gamma}\else 1 and 2\fi, and give us the tools needed to prove the self-testing claims, following a reduction to qubits detailed in the next section.  
\label{sec:SOS2}

\subsection{Applying Jordan's lemma}
We can use the polynomials derived above for both families of inequalities to impose algebraic constraints on the state and measurements that satisfy $\langle \bar{S}\rangle = 0$.  We employ Jordan's lemma~\cite{Jordan}, a unique simplification that can be made in the 2-input 2-output scenario~\cite{PABGMS,BhavsarDI}. The lemma states that for two observables $A_{0}$ and $A_{1}$ on a Hilbert space $\mathcal{H}$, each with eigenvalues $\pm 1$, there exists a basis transformation such that both are simultaneously block diagonal with block size no greater than two. The Hilbert space decomposes into this block diagonal structure, and, by dilating where necessary, we can take each block to be a qubit system. There then exists a block diagonal density operator that reproduces the statistics of the original system. This can be summarised as follows.
\begin{lemma}[Jordan's lemma]
Let $A_{0}$ and $A_{1}$ be two binary observables on a Hilbert space $\mathcal{H}_{A}$. Then there exists a basis in which $A_0$ and $A_1$ are block diagonal with block dimensions at most $2$. Moreover, for every state and set of measurements on $\mathcal{H}_{\tilde{Q}_{A}}\otimes \mathcal{H}_{\tilde{Q}_{B}}$ that generates a post-measurement state $\rho_{ABE}$, there exists another state and set of measurements, given by a convex combinations of two-qubit systems, that generates the same post-measurement state.  
\end{lemma}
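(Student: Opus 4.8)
The plan is to treat the two parties separately: block-diagonalise Alice's observables $A_{0},A_{1}$, do the same for Bob's $B_{0},B_{1}$, and then take tensor products of the resulting blocks. For the first claim I would set $C_{A}=\tfrac12(A_{0}A_{1}+A_{1}A_{0})$; a two-line computation using $A_{0}^{2}=A_{1}^{2}=\mathbb{I}$ shows $C_{A}$ is self-adjoint and commutes with both $A_{0}$ and $A_{1}$. Hence the spectral projectors of $C_{A}$ are invariant under $A_{0},A_{1}$, and $\mathcal{H}_{A}$ decomposes into a direct sum (a direct integral in infinite dimensions) of $C_{A}$-eigenspaces $\mathcal{H}_{A}^{(\lambda)}$, each preserved by both observables.

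Next I would analyse a single eigenspace. On $\mathcal{H}_{A}^{(\lambda)}$ one has $A_{0}A_{1}+A_{1}A_{0}=2\lambda\mathbb{I}$ with $\lambda\in[-1,1]$; writing $U=A_{0}A_{1}$ (a unitary, being a product of the involutions $A_{0},A_{1}$) and using $A_{1}A_{0}=2\lambda\mathbb{I}-U$ gives $U^{2}-2\lambda U+\mathbb{I}=0$. For $|\lambda|<1$ the $*$-algebra generated by $A_{0},A_{1}$ is spanned by $\{\mathbb{I},A_{0},A_{1},U\}$, is non-abelian, hence is isomorphic to $M_{2}(\mathbb{C})$, so its representation on $\mathcal{H}_{A}^{(\lambda)}$ is a direct sum of copies of the two-dimensional irreducible representation — i.e.\ a decomposition into qubit blocks on which $A_{0},A_{1}$ act as $\pm1$-observables. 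For $\lambda=\pm1$ one gets $(U\mp\mathbb{I})^{2}=0$, and since $U$ is normal this forces $U=\pm\mathbb{I}$, i.e.\ $A_{1}=\pm A_{0}$; the algebra is abelian and the eigenspace splits into one-dimensional blocks, which may be padded to qubits by a fixed ancilla. Performing the same construction for $B_{0},B_{1}$ and tensoring the two block decompositions yields a basis in which all four observables are simultaneously block diagonal with blocks of dimension at most two, proving the first claim.

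For the second claim, let $\{\Pi_{A}^{(j)}\}_{j}$ and $\{\Pi_{B}^{(k)}\}_{k}$ be the (at most two-dimensional) block projectors just produced, so each $M_{a|x}$ commutes with every $\Pi_{A}^{(j)}$ and each $N_{b|y}$ with every $\Pi_{B}^{(k)}$ (the $M_{a|x}$ are polynomials in $A_{x}$). I would then define the dephased state $\rho'_{Q_{A}Q_{B}E}=\sum_{j,k}\big(\Pi_{A}^{(j)}\otimes\Pi_{B}^{(k)}\otimes\mathbb{I}_{E}\big)\,\rho_{Q_{A}Q_{B}E}\,\big(\Pi_{A}^{(j)}\otimes\Pi_{B}^{(k)}\otimes\mathbb{I}_{E}\big)$ and adjoin to $E$ a classical register holding the label $(j,k)$. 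Because the measurement operators commute with the projectors, $\mathrm{Tr}\big[\rho'_{Q_{A}Q_{B}E}(M_{a|x}\otimes N_{b|y}\otimes\mathbb{I}_{E})\big]=\mathrm{Tr}\big[\rho_{Q_{A}Q_{B}E}(M_{a|x}\otimes N_{b|y}\otimes\mathbb{I}_{E})\big]$, and more generally the post-measurement classical-quantum state $\rho_{ABE}$ is reproduced, now with the block label inside $E$. Conditioning on a fixed $(j,k)$, occurring with probability $p_{jk}=\mathrm{Tr}\big[(\Pi_{A}^{(j)}\otimes\Pi_{B}^{(k)})\rho_{Q_{A}Q_{B}}\big]$, the state and measurements live on the two-qubit space $\mathcal{H}_{A}^{(j)}\otimes\mathcal{H}_{B}^{(k)}$, so $\rho_{ABE}$ has been realised by a convex combination of two-qubit strategies.

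\textbf{Main obstacle.} The algebra above is routine; the care lies in (i) the infinite-dimensional statement, where the finite eigenspace sum is replaced by a spectral (direct-integral) decomposition — in a separable Hilbert space this still yields a measurable field of blocks of dimension at most two — and (ii) the precise meaning of ``the same post-measurement state'': adjoining the block label $(j,k)$ to Eve strictly enlarges her system, but this is exactly the harmless direction for the infimum in \cref{eq:rate}, since extra side information can only lower $H(AB|X=0,Y=0,E)$, so the statement should be read up to this relabelling of $E$. Getting that bookkeeping exactly right is the only genuinely delicate point.
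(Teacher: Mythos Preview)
The paper does not actually prove this lemma; it states it and defers to~\cite{Jordan,PABGMS,BhavsarDI} for the details of the reduction. Your sketch is the standard anticommutator proof and is correct: the commutation of $C_A=\tfrac12(A_0A_1+A_1A_0)$ with both observables, the quadratic $U^2-2\lambda U+\mathbb{I}=0$ on each eigenspace, the identification of the generated $*$-algebra with $M_2(\mathbb{C})$ for $|\lambda|<1$, and the abelian degeneration at $\lambda=\pm1$ all go through as you describe.

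Two small comments on the second claim. First, the block projectors $\Pi_A^{(j)}$ commute with $A_0,A_1$ not merely because the $M_{a|x}$ are polynomials in $A_x$, but because the projectors onto irreducible summands lie in the commutant of the algebra generated by $A_0,A_1$; your conclusion is right, only the parenthetical reason is slightly off. Second, the dephased state $\rho'_{Q_AQ_BE}$ already reproduces $\rho_{ABE}$ \emph{exactly} without enlarging $E$, since the projectors commute with the measurements and sum to the identity; so the lemma holds as literally stated. Your observation about adjoining the label $(j,k)$ to $E$ is the additional step one makes when \emph{applying} the reduction to bound $H(AB|X{=}0,Y{=}0,E)$ by a convex combination of two-qubit entropies, and you are right that this is the safe direction for the infimum in~\cref{eq:rate}. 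The infinite-dimensional caveat you flag is genuine but, as you note, resolved by replacing the eigenspace sum with a direct integral; the cited references work in finite dimensions.
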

\noindent For self-testing literature that also utilises Jordan's lemma, see e.g.~\cite{BardynSelfTest,SekatskiBuilidngBlocks,ValcarceSelfTest}.

It is known that a full reduction to a convex combination of two-qubit strategies with measurements in the $XZ$-plane is sufficient for evaluating the global entropy~\cite{PABGMS,BhavsarDI}. By employing Jordan's lemma to systems $\tilde{Q}_{A}$ and $\tilde{Q}_{B}$, the resulting parameterization of a single two-qubit strategy is given by 7 parameters: 3 for the state, which can be taken to be diagonal in the Bell basis, and 4 for the measurements, one defining each angle in the $XZ$-plane. Let
\begin{align}
    \ket{\Phi_{0}}&= \frac{1}{\sqrt{2}}(\ket{00} + \ket{11}), \nonumber \\
    \ket{\Phi_{1}}&= \frac{1}{\sqrt{2}}(\ket{00} - \ket{11}), \nonumber \\
    \ket{\Phi_{2}}&= \frac{1}{\sqrt{2}}(\ket{01} + \ket{10}), \nonumber \\
    \ket{\Phi_{3}}&= \frac{1}{\sqrt{2}}(\ket{01} - \ket{10}).
\end{align}
The two-qubit state is given by
\begin{equation}\label{eq:Bellbasis}
    \rho = \sum_{\alpha=0}^3\lambda_{\alpha}\ketbra{\Phi_{\alpha}}{\Phi_{\alpha}},
\end{equation}
where $\lambda_{\alpha} \geq 0$ and $\sum_\alpha\lambda_\alpha = 1$. The measurements are given by
\begin{align}
    A_{x} = \cos a_{x} \, \sigma_{Z} + \sin a_{x} \, \sigma_{X}, \nonumber \\
    B_{y} = \cos b_{y} \, \sigma_{Z} + \sin b_{y} \, \sigma_{X},\label{eq:mmts}
\end{align}
where $-\pi < a_{x},b_{y} \leq \pi$, $x,y\in \{0,1\}$. 
See~\cite{PABGMS,BhavsarDI} for details of this reduction.

Our methodology will be to show that the only two-qubit strategy that satisfies the relations imposed by the SOS polynomials is the target strategy up to local unitaries, hence the extraction map can be written in terms of unitaries that rotate each Jordan block to the target. \label{sec:jordan}

\subsection{Proof of the self-testing claim for the $I_{\delta}$-family}
We now prove the self-testing claim in Proposition \ifarxiv\ref{prep:delta}\else 1\fi, i.e., that the family of inequalities in \cref{eq:deltaIneq_supp} self-tests the state and measurements in \cref{eq:deltaStrat_supp}.
\begin{theorem}[Self-testing the $I_{\delta}$-family]
The family of Bell expressions in \cref{eq:deltaIneq_supp} provides a self-test for the two-qubit state and family of measurements in \cref{eq:deltaStrat_supp} according to \cref{def:selfTest}.  [Equivalently, up to local isometries, the only state and measurements that satisfy $I_\delta=I_\delta^{\mathrm{Q}}$ are those of \cref{eq:deltaStrat_supp}.]
\label{thm:selfTest}
\end{theorem}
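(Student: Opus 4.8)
\emph{Proof plan.} The idea is to combine the sum-of-squares decomposition of \cref{lem:SOS_delta} with Jordan's lemma, reducing the self-test to a small system of operator identities on two qubits that can be solved explicitly.

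First I would extract the algebraic consequences of saturation. If $(\tilde\rho,\tilde A_x,\tilde B_y)$ with purification $\ket{\Psi}$ attains $I_\delta = I_\delta^{\mathrm{Q}}$, then by \cref{lem:SOS_delta} we have $\langle\bar S_\delta\rangle = 0$ and hence $P_i(\delta)\ket{\Psi} = 0$ for all $i$; since $P_1(\delta) = (\sin\delta+\cos\delta)P_0(\delta)$ and $P_3(\delta) = (\sin\delta-\cos\delta)P_2(\delta)$ with $\sin\delta\pm\cos\delta\neq 0$ on $(0,\pi/6]$, the content collapses to
\begin{align}
    \big[(B_1 - A_1) + (\sin\delta+\cos\delta)(B_0 - A_0)\big]\ket{\Psi} &= 0, \nonumber\\
    \big[(B_1 + A_1) + (\sin\delta-\cos\delta)(B_0 + A_0)\big]\ket{\Psi} &= 0, \label{eq:planrel}
\end{align}
where $A_x$ acts on $\tilde Q_A$ and $B_y$ on $\tilde Q_B$. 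Next I would invoke Jordan's lemma on the pairs $\tilde A_0,\tilde A_1$ and $\tilde B_0,\tilde B_1$ to write the reference strategy as a convex combination, with weights $p_j$, of two-qubit strategies with $XZ$-plane measurements of the form \cref{eq:Bellbasis,eq:mmts}. Since $\bar S_\delta\succeq 0$, $0 = \langle\bar S_\delta\rangle = \sum_j p_j\langle\bar S_\delta\rangle^{(j)}$ with every term nonnegative, so each block with $p_j>0$ must itself saturate and obey \cref{eq:planrel}; it therefore suffices to treat a single qubit block, and the per-block unitaries then assemble (together with a register recording $j$) into the local isometry $V = V_A\otimes V_B\otimes\mathbb{I}_E$ demanded by \cref{def:selfTest}.

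Within a block I would argue as follows. Blocks in which $A_0$ commutes with $A_1$ (or $B_0$ with $B_1$) split into one-dimensional pieces on which the correlations are deterministic and attain at most $I_\delta^{\mathrm{L}}$, which is strictly below $I_\delta^{\mathrm{Q}}$ on $(0,\pi/6]$ (a direct check), so such blocks have zero weight. In a genuine block, write $\rho = \sum_\alpha\lambda_\alpha\ketbra{\Phi_\alpha}{\Phi_\alpha}$; then \cref{eq:planrel} is equivalent to $P_0(\delta)\ket{\Phi_\alpha} = P_2(\delta)\ket{\Phi_\alpha} = 0$ for every $\alpha$ with $\lambda_\alpha>0$. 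Relabelling one such $\ket{\Phi_\alpha}$ to $\ket{\Phi_0}$ by a local unitary (which keeps the measurements in the $XZ$-plane), and using $M\otimes\mathbb{I}\ket{\Phi_0} = \mathbb{I}\otimes M^{\mathrm{T}}\ket{\Phi_0}$ with $M^{\mathrm{T}} = M$ for $XZ$-plane $M$ together with the full Schmidt rank of $\ket{\Phi_0}$, the two relations become the operator identities
\begin{align}
    A_1 &= \cos\delta\, B_0 - \sin\delta\, A_0, & B_1 &= \cos\delta\, A_0 - \sin\delta\, B_0. \label{eq:planmeas}
\end{align}
Imposing $A_1^2 = \mathbb{I}$ on the first gives $\sin\delta\cos\delta\,\{A_0,B_0\} = 0$, so $A_0,B_0$ anticommute; for $XZ$-plane qubit observables this means they are orthogonal, so a further local unitary brings them to $A_0 = \sigma_Z$, $B_0 = \sigma_X$, and \cref{eq:planmeas} then yields $A_1 = \cos\delta\,\sigma_X - \sin\delta\,\sigma_Z$, $B_1 = \cos\delta\,\sigma_Z - \sin\delta\,\sigma_X$, i.e.\ the measurements of \cref{eq:deltaStrat_supp}. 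Finally I would show $\rho$ is pure: $\lambda_\beta > 0$ for $\beta\in\{1,2\}$ would force $A_0 + B_0$ (which is $\sigma_Z+\sigma_X$) to be proportional to a single Pauli, and $\lambda_3 > 0$ would force $(\sin\delta+\cos\delta)^2 = 1$, both impossible on $(0,\pi/6]$; hence $\rho = \ketbra{\Phi_0}{\Phi_0}$, which completes the identification with \cref{eq:deltaStrat_supp} up to local unitaries.

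The hard part is this block analysis: turning ``$\mathrm{supp}\,\rho$ lies in $\ker P_0(\delta)\cap\ker P_2(\delta)$'' into ``$\rho = \ketbra{\Phi_0}{\Phi_0}$ up to local unitaries'', i.e.\ carefully excluding partially entangled, mixed and multi-component block states, and verifying that the discarded degenerate blocks genuinely cannot reach $I_\delta^{\mathrm{Q}}$. Everything else — the SOS step, the Jordan reduction, and the trigonometric read-off of the measurements from \cref{eq:planmeas} — is routine. The self-test for the $J_\gamma$-family follows the same template using \cref{lem:SOS_gamma} and the polynomials $P_i'(\gamma)$ in place of their $I_\delta$ counterparts.
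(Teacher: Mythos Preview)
Your plan is essentially the paper's own route: SOS relations from \cref{lem:SOS_delta}, Jordan reduction to Bell-diagonal qubit blocks, solve on a single block, then assemble the block-wise unitaries into the isometry of \cref{def:selfTest}. Your measurement step is in fact a bit cleaner than the paper's --- where the paper writes out four trigonometric equations in the angles $a_1,b_0,b_1$ and solves them, you use the transpose trick on $\ket{\Phi_0}$ to promote the SOS relations to genuine operator identities \eqref{eq:planmeas} and read off anticommutation of $A_0,B_0$ from $A_1^2=\mathbb{I}$; this is equivalent but more transparent.

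The one place where your write-up does not match what actually happens is the purity step. The specific implications you state (``$\lambda_\beta>0$ for $\beta\in\{1,2\}$ forces $A_0+B_0$ proportional to a single Pauli'' and ``$\lambda_3>0$ forces $(\sin\delta+\cos\delta)^2=1$'') are not what the relations on $\ket{\Phi_\beta}$ give once the measurements are fixed; for instance, with $\beta=2$ the transport $B_y\mapsto\sigma_X B_y\sigma_X$ leaves $B_0=\sigma_X$ unchanged, so the $A_1$-identity is consistent and the contradiction only appears in the $B_1$-identity (it forces $\cos\delta=0$), while for $\beta=3$ one obtains $B_0=0$ (or equivalently $A_0=\pm B_0$), not a constraint on $\sin\delta+\cos\delta$. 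The paper sidesteps this entirely by a one-line direct computation: with the measurements already pinned down, $\|P_i(\delta)\ket{\Phi_\alpha}\|^2=\cos^2\delta\neq 0$ for $i\in\{0,2\}$ and $\alpha\in\{1,2,3\}$, so \eqref{eq:SOScond} forces $\lambda_1=\lambda_2=\lambda_3=0$. Replacing your stated reasons with this computation (or with the correct contradictions from the transported identities) closes the argument; everything else in your outline is fine. Your separate exclusion of commuting-observable blocks via $I_\delta^{\mathrm L}<I_\delta^{\mathrm Q}$ is correct but unnecessary: the paper's equations already have no solution with $a_0=a_1$ for $\delta\in(0,\pi/6]$.
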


\begin{proof}
The previous section implies that it is sufficient to consider two qubit states that are diagonal in the Bell basis as in~\eqref{eq:Bellbasis} and measurements of the form~\eqref{eq:mmts}. Consider the expectation value of the operator $\bar{S}_{\delta}$ for a two qubit state $\rho$ and measurements $A_{x},B_{y}$ that saturate the inequality in \cref{eq:deltaIneq_supp}:
\begin{align}
    \langle \bar{S}_{\delta}\rangle &= \sum_{i} \langle P_{i}^\dagger(\delta)P_{i}(\delta) \rangle \nonumber \\
    &= \sum_{i} \text{Tr}[\rho P_{i}^\dagger(\delta)P_{i}(\delta)] \nonumber \\
    &= \sum_{i} \sum_{\alpha} \lambda_{\alpha}\|P_{i}(\delta)\ket{\Phi_{\alpha}}\|^{2} = 0.
\end{align}
Since $\lambda_{\alpha}\geq 0, \ \|P_{i}(\delta)\ket{\Phi_{\alpha}}\|^{2} \geq 0$, we have
\begin{equation}
    \lambda_{\alpha}\|P_{i}(\delta)\ket{\Phi_{\alpha}}\|^{2} = 0 \ \forall i \ \forall \alpha. \label{eq:SOScond}
\end{equation}
Without loss of generality, suppose $\lambda_{0} \neq 0$ (if $\lambda_{0} = 0$, then for some $\alpha'$ with $\lambda_{\alpha'} \neq 0$, there is a local unitary $U$ such that $U\otimes \mathbb{I}\ket{\Phi_{\alpha'}} = \ket{\Phi_{0}}$, so cases where $\lambda_{0} = 0$ are equivalent to the case $\lambda_{0} \neq 0$ up to local unitaries).  Note that $(U\otimes U^\mathrm{T})\ket{\Phi_0}=\ket{\Phi_0}$ for all single qubit unitaries $U$, where $U^{\mathrm{T}}$ is the transpose of $U$ in the $\{\ket{0},\ket{1}\}$ basis. It follows that we can take $a_0=0$, i.e., $A_0=\sigma_Z$.

By~\eqref{eq:SOScond}, we have that $P_{i}(\delta)\ket{\Phi_{0}} = 0$ for $i=0,2$ (the cases $i=1,3$ are identical by linear dependence). Using the form of the measurements (cf.~\eqref{eq:mmts}), we arrive at the system of nonlinear equations
\begin{align}
    (\sin \delta + \cos \delta)\sin b_0+\left( \sin b_{1} - \sin a_{1}  \right) = 0, \label{eq:1} \\
    (\sin \delta - \cos \delta)\sin b_0+\left( \sin b_{1} + \sin a_{1}  \right) = 0, \label{eq:2} \\
    (\sin \delta + \cos \delta)\left(\cos b_{0}-1\right) + \left( \cos b_{1} - \cos a_{1} \right) = 0, \label{eq:3} \\
    (\sin \delta - \cos \delta)\left( \cos b_{0} +1 \right) + \left( \cos b_{1} + \cos a_{1} \right) = 0. \label{eq:4} 
\end{align}
Subtracting \cref{eq:2} from \cref{eq:1}, and \cref{eq:4} from \cref{eq:3} gives
\begin{align}
    \sin a_{1} = \sin b_{0} \cos \delta, \label{eq:13}\\
    \cos a_{1} = \cos b_{0} \cos \delta - \sin \delta\,. \label{eq:14}
\end{align}
Then using $\sin^2a_1+\cos^2a_1=1$ we recover
\begin{align}
    1 &= \sin^2 b_0 \cos^2 \delta+\left(\cos b_0 \cos \delta - \sin \delta \right)^2 \nonumber \\
    &= 1- \sin 2\delta\, \cos b_0, 
   \label{eq:15}
\end{align}
and hence we have $\cos b_0=0$. Since $-\pi<b_0\leq\pi$ we have $b_0=\pm\pi/2$, i.e., $B_0=\pm\sigma_X$.  Noting that $\sigma_Z\otimes\sigma_Z$ has no effect on $\ket{\Psi_0}$ and that $\sigma_Z\sigma_X\sigma_Z=-\sigma_X$, we can take $b_0=\pi/2$, i.e., $B_0=\sigma_X$ without loss of generality. Then, $\sin b_0=1$.

Using these in~\eqref{eq:1}--\eqref{eq:4} we find $\sin b_1=-\sin\delta$ and $\cos b_1=\cos\delta$, hence $b_1=-\delta$.  Similarly, $\sin a_1=\cos\delta$ and $\cos a_1=-\sin\delta$ so we have $A_1=-\sin\delta\,\sigma_Z+\cos\delta\,\sigma_X$ and $B_1=\cos\delta\,\sigma_Z-\sin\delta\,\sigma_X$, recovering the observables in \cref{eq:deltaStrat_supp}. We have therefore proved the self-testing of the measurements.

For the state, consider $\|P_i(\delta)\ket{\Phi_{\alpha}}\|^2$ for $i=0,2$ and $\alpha=1,2,3$. By direct calculation, using the observables we found above, we find all of these to be $\cos^2\delta$. Hence, by~\eqref{eq:SOScond}, we must have $\lambda_1=\lambda_2=\lambda_3=0$ and thus $\lambda_0=1$.

Finally we derive the extraction map from \cref{def:selfTest}. According to Jordan's lemma, both Hilbert spaces decomposes block-diagonally with $2\times 2$ blocks. This is equivalent to identifying $\mathcal{H}_{\tilde{Q}_{A}} = \mathcal{H}_{F_{A}}\otimes \mathcal{H}_{Q_{A}}$ where $F_{A}$ is a system that flags the $2\times 2$ Jordan block, and $Q_{A}$ is a qubit system (similarly for $\mathcal{H}_{\tilde{Q}_{B}}$). With purifying system $E$, the purified state hence takes the form
\begin{equation}
    \ket{\Psi}_{\tilde{Q}_{A}\tilde{Q}_{B}E} = \sum_{ij} \sqrt{p_{ij}} \ket{ij}_{F_{A}F_{B}} \otimes \ket{\varphi_{ij}}_{Q_{A}Q_{B}} \otimes \ket{ij}_{E},  
\end{equation}
where $\rho_{\tilde{Q}_{A}\tilde{Q}_{B}} = \text{Tr}_{E}\Big[\ketbra{\Psi}{\Psi}_{\tilde{Q}_{A}\tilde{Q}_{B}E}\Big] =  \sum_{ij} p_{ij} \ketbra{ij}{ij}_{F_{A}F_{B}} \otimes \ketbra{\varphi_{ij}}{\varphi_{ij}}_{Q_{A}Q_{B}}$ is the state shared by the devices. Similarly, the measurements admit the decomposition 
\begin{equation}
    \tilde{A}_{x} \otimes \tilde{B}_{y} = \sum_{ij} \ketbra{ij}{ij}_{F_{A}F_{B}} \otimes A_{x}^i \otimes B_{y}^j. 
\end{equation}

Above we established that, up to local unitaries, the only two qubit strategy that can achieve $\langle \tilde{S}_{\delta} \rangle = I^{\text{Q}}_{\delta}$ is the target in \cref{eq:deltaStrat_supp}. Therefore, for every measurement pair $A^i_{x} \otimes B^j_{y}$ and state $\ket{\varphi_{ij}}$ there exist local unitaries $U_A^i:\cH_{Q_A}\to\cH_{Q_A}$ and $U_B^j:\cH_{Q_B}\to\cH_{Q_B}$ such that $U_A^iA_x^i(U_A^i)^{\dagger} = A_x$, $U_B^jB_y^j(U_B^j)^{\dagger} = B_y$, and $(U_A^i\otimes U_B^j)\ket{\varphi_{ij}} = \ket{\Phi_{0}}$. Thus, if we define the unitary 
\begin{equation}
    V = \sum_{ij}\ketbra{ij}{ij}_{F_{A}F_{B}} \otimes U_A^i\otimes U_B^j \otimes \mathbb{I}_{E}\,,
\end{equation}
then we have the extraction
\begin{align}
    V (\tilde{A}_{x} \otimes \tilde{B}_{y} \otimes \mathbb{I}_{E}) V^{\dagger}V \ket{\Psi}_{\tilde{Q}_{A}\tilde{Q}_{B}E} 
    =   (A_{x} \otimes B_{y}) \ket{\Phi_{0}}_{Q_{A}Q_{B}} \otimes \Bigg( \sum_{ij} \sqrt{p_{ij}}\ket{ij}_{F_{A}F_{B}} \otimes \ket{ij}_{E}\Bigg).
\end{align}
This is of the form in \cref{def:selfTest}, and completes the self-testing proof.
\end{proof}
\label{sec:delta}

\subsection{Proof of the self-testing claim for the $J_{\gamma}$-family}
We follow an identical methodology to the previous section to prove the self-testing claim in Proposition \ifarxiv\ref{prep:gamma}\else 2\fi.

\begin{theorem}[Self-testing the $J_{\gamma}$-family]
The family of Bell expressions in \cref{eq:gammaIneq_supp} provides a self-test for the family of two qubit states and measurements in \cref{eq:gammaStrat_supp} according to \cref{def:selfTest}. [Equivalently, up to local isometries, the only state and measurements that satisfy $J_\gamma=J_\gamma^{\mathrm{Q}}$ are those of \cref{eq:gammaStrat_supp}.]
\label{thm:gammaSelfTest}
\end{theorem}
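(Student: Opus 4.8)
The plan is to run the argument of \cref{thm:selfTest} essentially verbatim, now feeding in the SOS decomposition of \cref{lem:SOS_gamma} and its polynomials $P'_i(\gamma)$. By the reduction described in \cref{sec:jordan}, it suffices to consider a two-qubit strategy with state $\rho=\sum_\alpha\lambda_\alpha\ketbra{\Phi_\alpha}{\Phi_\alpha}$ diagonal in the Bell basis and measurements $A_x=\cos a_x\,\sigma_Z+\sin a_x\,\sigma_X$, $B_y=\cos b_y\,\sigma_Z+\sin b_y\,\sigma_X$. Writing $\langle\bar S'_\gamma\rangle=\sum_{i,\alpha}\lambda_\alpha\|P'_i(\gamma)\ket{\Phi_\alpha}\|^2=0$ forces $\lambda_\alpha\|P'_i(\gamma)\ket{\Phi_\alpha}\|^2=0$ for all $i,\alpha$. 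Assuming without loss of generality that $\lambda_0\neq0$ (any other nonzero $\lambda_{\alpha'}$ can be rotated onto $\Phi_0$ by a local unitary), the residual symmetry $U\otimes U^{\mathrm T}$ fixing $\ket{\Phi_0}$ lets us take $a_0=0$, i.e.\ $A_0=\sigma_Z$.

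Next I would extract the algebraic constraints $P'_i(\gamma)\ket{\Phi_0}=0$ for $i=0,2$ (the cases $i=1,3$ following by linear dependence). Using $(A_x\otimes\mathbb{I})\ket{\Phi_0}=(\mathbb{I}\otimes A_x)\ket{\Phi_0}$ for the real symmetric observables at hand, and full Schmidt rank of $\ket{\Phi_0}$, these become the single-qubit operator identities
\begin{align}
    (2\mu-1)(B_1-A_1)+(B_0-A_0)&=0, \\
    (2\mu+1)(B_1+A_1)-(B_0+A_0)&=0,
\end{align}
with $\mu=\cos(\gamma+\pi/6)$. Splitting each into its $\sigma_Z$ and $\sigma_X$ components gives four scalar equations in $a_1,b_0,b_1$; adding and subtracting them in pairs yields $\cos a_1=1-2\mu\cos b_1$, $\sin a_1=-2\mu\sin b_1$, $\cos b_0=\cos b_1+2\mu\cos a_1$, $\sin b_0=\sin b_1+2\mu\sin a_1$. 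Imposing $\cos^2 a_1+\sin^2 a_1=1$ together with $\cos^2 b_1+\sin^2 b_1=1$ collapses the system to $\cos b_1=\mu$, and back-substitution using the triple-angle identity $4\mu^3-3\mu=\cos 3(\gamma+\pi/6)=-\sin 3\gamma$ recovers exactly the angles in \cref{eq:gammaStrat_supp}, the two residual sign choices for $b_1$ (hence $a_1,b_0$) corresponding to conjugation by $\sigma_Z\otimes\sigma_Z$. For the state, evaluating $\|P'_i(\gamma)\ket{\Phi_\alpha}\|^2$ for $\alpha=1,2,3$ on these recovered observables gives strictly positive numbers for every $\gamma\in[0,\pi/12]$, forcing $\lambda_1=\lambda_2=\lambda_3=0$ and hence $\lambda_0=1$.

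Finally I would assemble the extraction isometry exactly as in the proof of \cref{thm:selfTest}: the Jordan-block decomposition writes the purified reference state as $\ket{\Psi}_{\tilde Q_A\tilde Q_B E}=\sum_{ij}\sqrt{p_{ij}}\ket{ij}_{F_AF_B}\otimes\ket{\varphi_{ij}}_{Q_AQ_B}\otimes\ket{ij}_E$ with $\tilde A_x\otimes\tilde B_y=\sum_{ij}\ketbra{ij}{ij}_{F_AF_B}\otimes A_x^i\otimes B_y^j$; since every block that saturates the quantum bound is the target strategy up to local unitaries $U_A^i,U_B^j$, the isometry $V=\sum_{ij}\ketbra{ij}{ij}_{F_AF_B}\otimes U_A^i\otimes U_B^j\otimes\mathbb{I}_E$ satisfies the requirement of \cref{def:selfTest}. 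I expect the main obstacle to be purely computational, namely carrying out the nonlinear trigonometric elimination cleanly and tracking the residual sign symmetries so that genuine uniqueness (rather than a discrete family) is established; conceptually nothing new beyond the $I_\delta$ argument is needed. One point to check with care is that the SOS coefficients $c_\pm(\gamma)$ and the factors $2\mu\pm1$ remain nonzero throughout $[0,\pi/12]$, so that no degenerate sub-case — in particular the endpoint $\gamma=\pi/12$, where $J_\gamma$ reduces to CHSH — requires separate treatment.
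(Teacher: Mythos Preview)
Your proposal is correct and follows essentially the same approach as the paper's proof: reduce to qubits via Jordan's lemma, fix $\lambda_0\neq 0$ and $a_0=0$ by local unitaries, read off the scalar equations from $P'_0(\gamma)\ket{\Phi_0}=P'_2(\gamma)\ket{\Phi_0}=0$, eliminate to obtain $\cos b_1=\mu$, back-substitute for the remaining angles, fix the sign ambiguity with $\sigma_Z\otimes\sigma_Z$, verify $\|P'_i(\gamma)\ket{\Phi_\alpha}\|^2>0$ for $\alpha\neq 0$ to force $\lambda_0=1$, and assemble the extraction isometry blockwise. Your use of the transpose trick $(A_x\otimes\mathbb{I})\ket{\Phi_0}=(\mathbb{I}\otimes A_x)\ket{\Phi_0}$ to pass to single-qubit operator identities is a slightly cleaner packaging of the same computation the paper does componentwise, and your explicit caveat about checking $c_\pm(\gamma)$ and $2\mu\pm1$ nonvanishing on $[0,\pi/12]$ is a sensible point the paper leaves implicit.
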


\begin{proof}
  As in the proof of Theorem~\ref{thm:selfTest} we can use local unitaries to ensure that $\lambda_0\neq 0$ and $a_0=0$. $P'_0(\gamma)\ket{\Phi_0} = 0$ and $P'_2(\gamma)\ket{\Phi_0} = 0$ then give
\begin{align}
    (2\mu-1)\left( \sin b_1 - \sin a_1  \right) + \sin b_0 = 0, \label{eq:2.1} \\
    (2\mu+1)\left( \sin b_1 + \sin a_1  \right) - \sin b_0 = 0, \label{eq:2.2} \\
    (2\mu-1)\left( \cos b_1 - \cos a_1 \right) + \cos b_0 - 1 = 0, \label{eq:2.3} \\
    (2\mu + 1)\left( \cos b_1 + \cos a_1 \right) - \cos b_0-1 = 0, \label{eq:2.4} 
\end{align}
where $\mu=\cos(\gamma+\pi/6)$.
Eliminating $\sin b_0$ from the first two and $\cos b_0$ from the second two gives
\begin{align}
  \sin a_1&=-2\mu\sin b_1\label{eq:A48}\\
  \cos a_1&=1-2\mu\cos b_1\,.\label{eq:A49}
\end{align}
Using $\sin^2a_1+\cos^2a_1=1$ then gives $\cos b_1=\mu=\cos(\gamma+\pi/6)$ and $\sin b_1=\pm\sin(\gamma+\pi/6)$, corresponding to $B_1=\cos(\gamma+\pi/6)\,\sigma_Z\pm\sin(\gamma+\pi/6)\,\sigma_X$. We can take the case with the minus sign without loss of generality by using the local unitary $\sigma_Z$ if needed.

Eqs.~\eqref{eq:A48} and~\eqref{eq:A49} then give $\sin a_1=\sin(2(\gamma+\pi/6))$ and $\cos a_1=-\cos(2(\gamma+\pi/6))$.

Then~\eqref{eq:2.2} gives
\[\sin b_0=-\left(4\sin^3(\gamma+\pi/6)-3\sin(\gamma+\pi/6)\right)=\sin(3(\gamma+\pi/6))\,,\] and~\eqref{eq:2.4} gives
\[\cos b_0=-\cos(3(\gamma+\pi/6))\,.\]

We hence have
\begin{align*}
  A_0&=\sigma_Z\\
  A_1&=-\cos(2(\gamma+\pi/6))\,\sigma_Z+\sin(2(\gamma+\pi/6))\,\sigma_X\\
  B_0&=-\cos(3(\gamma+\pi/6))\,\sigma_Z+\sin(3(\gamma+\pi/6))\,\sigma_X\\
  B_1&=\cos(\gamma+\pi/6)\,\sigma_Z-\sin(\gamma+\pi/6)\,\sigma_X,
\end{align*}
which is equivalent to the measurement strategy in \cref{eq:gammaStrat_supp}.

The remainder of the argument is identical to that in the proof of \cref{thm:selfTest}.
\end{proof}
\label{sec:gamma}

\subsection{Evaluating the conditional entropy}
By Jordan's lemma, there is no loss in generality if we assume the devices behave according to a convex combination of two-qubit strategies. As proved in the previous sections, the only two-qubit strategy that can saturate \cref{eq:deltaIneq_supp} is that in \cref{eq:deltaStrat_supp} (likewise the only two-qubit strategy that can saturate \cref{eq:gammaIneq_supp} is that in \cref{eq:gammaStrat_supp}), up to local unitaries. Therefore, according to \cref{def:selfTest}, there exists an isometry $V$ from the reference system to the target two-qubit system. For completeness, we now show that the conditional entropy $H(AB|X=0,Y=0,E)$ when the devices maximally saturate one of the self-testing inequalities is equal to entropy of the target strategy unconditioned on Eve. We show this for the $I_{\delta}$-family, and the proof for the $J_{\gamma}$-family is identical.
\begin{theorem}[Entropy of self-tested strategies]
For any physical system achieving $I_{\delta} = I_{\delta}^{\mathrm{Q}}$, its conditional entropy $H(AB|X=0,Y=0,E)_{\rho_{ABE}}$ evaluated for the post-measurement state $\rho_{ABE}$ is given by the entropy of the target strategy unconditioned on $E$, i.e.,
\begin{equation}
    H(AB|X=0,Y=0,E)_{\rho_{ABE}} = H(AB|X=0,Y=0)_{\rho_{AB}} = H(\{\mathrm{p}(ab|00)\}),
\end{equation}
where $\mathrm{p}(ab|00)$ is the distribution of the target two qubit strategy. 
\label{thm:convex} 
\end{theorem}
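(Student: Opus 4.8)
The plan is to prove that any reference strategy saturating $I_\delta=I_\delta^{\mathrm{Q}}$ yields a post-measurement state $\rho_{ABE}$ that is a \emph{product} state across the cut $AB{:}E$. The claimed identity is then immediate: a product state $\rho_{AB}\otimes\rho_E$ has $H(AB|X=0,Y=0,E)_{\rho_{ABE}}=H(AB|X=0,Y=0)_{\rho_{AB}}$, and here the marginal $\rho_{AB}=\sum_{ab}\mathrm{Tr}[\rho_E^{(a,b,0,0)}]\ketbra{ab}{ab}$ is diagonal with entries $\mathrm{p}(ab|00)$, so its entropy is $H(\{\mathrm{p}(ab|00)\})$.

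First I would invoke Jordan's lemma exactly as in the proof of \cref{thm:selfTest}, passing without loss of generality to the block-diagonal representation in which the purified reference state is
\begin{equation}
  \ket{\Psi}_{\tilde Q_A\tilde Q_B E}=\sum_{ij}\sqrt{p_{ij}}\,\ket{ij}_{F_AF_B}\otimes\ket{\varphi_{ij}}_{Q_AQ_B}\otimes\ket{ij}_E ,
\end{equation}
with $F_A,F_B$ flagging the Jordan block, and the measurements are block-diagonal, $\tilde A_x\otimes\tilde B_y=\sum_{ij}\ketbra{ij}{ij}_{F_AF_B}\otimes A_x^i\otimes B_y^j$. Any purification of $\rho_{\tilde Q_A\tilde Q_B}$ is related to this one by an isometry on $E$, which does not change $H(AB|X=0,Y=0,E)$, so this is WLOG. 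Next I would feed in \cref{thm:selfTest}: since $0\le\langle\bar S_\delta\rangle=\sum_{ij}p_{ij}\langle\bar S_\delta\rangle_{\varphi_{ij}}$ with every summand non-negative, each block with $p_{ij}\neq 0$ itself saturates the bound and hence equals the target strategy of \cref{eq:deltaStrat_supp} up to a local unitary $U_A^i\otimes U_B^j$. Because outcome probabilities are invariant under a joint local-unitary rotation of state and measurements, this forces $\langle M^i_{a|0}\otimes N^j_{b|0}\rangle_{\varphi_{ij}}=\mathrm{p}(ab|00)$ for every block, with $\mathrm{p}(ab|00)$ the target distribution and, crucially, the \emph{same} number for all $(ij)$.

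Then I would compute Eve's conditional states directly. Using $\rho_E^{(a,b,0,0)}=\text{Tr}_{\tilde Q_A\tilde Q_B}[\ketbra{\Psi}{\Psi}(\tilde M_{a|0}\otimes\tilde N_{b|0}\otimes\mathbb{I}_E)]$ together with the block-diagonal forms, orthogonality of the flag states kills all cross terms and leaves
\begin{equation}
  \rho_E^{(a,b,0,0)}=\sum_{ij}p_{ij}\,\langle M^i_{a|0}\otimes N^j_{b|0}\rangle_{\varphi_{ij}}\,\ketbra{ij}{ij}_E=\mathrm{p}(ab|00)\,\rho_E ,
\end{equation}
where $\rho_E=\text{Tr}_{\tilde Q_A\tilde Q_B}\ketbra{\Psi}{\Psi}$ is Eve's marginal. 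Hence $\rho_{ABE}=\big(\sum_{ab}\mathrm{p}(ab|00)\ketbra{ab}{ab}_{AB}\big)\otimes\rho_E=\rho_{AB}\otimes\rho_E$, and therefore $H(AB|X=0,Y=0,E)_{\rho_{ABE}}=H(AB|X=0,Y=0)_{\rho_{AB}}=H(\{\mathrm{p}(ab|00)\})$. The $J_\gamma$-family argument is word-for-word identical, using \cref{thm:gammaSelfTest} in place of \cref{thm:selfTest}.

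The only real obstacle I anticipate is the bookkeeping around the two facts that make the conditional state collapse: that the self-test is a genuine \emph{rigidity} statement, so each nonzero-weight Jordan block reproduces the target input-$(0,0)$ statistics \emph{exactly} rather than merely on average; and that the block states $\ket{\varphi_{ij}}$ are pure, so a purifying Eve can hold at most the block label, which is itself useless for predicting $AB$ precisely because $\mathrm{p}(ab|00)$ is block-independent. Making both of these watertight, rather than any computation, is where the care lies.
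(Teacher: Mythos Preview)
Your proposal is correct and follows essentially the same route as the paper: both arguments show that the post-measurement state factorizes as $\rho_{ABE}=\rho_{AB}\otimes\rho_E$ by leveraging the self-testing result (\cref{thm:selfTest}). The only cosmetic difference is that the paper packages the computation through the extraction isometry $V$ (inserting $V^\dagger V$ inside the partial trace and tracing out the junk), whereas you unwrap that isometry and work directly with the Jordan block structure to argue block-independence of $\mathrm{p}(ab|00)$; since $V$ is built precisely from those Jordan blocks, the two presentations are equivalent.
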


\begin{proof}
The proof comes directly from the fact that the observation $I_{\delta} = I_{\delta}^{\mathrm{Q}}$ implies the post measurement state is uncorrelated with $E$, and the density operator $\rho_{E}$ can be factored out as a tensor product, i.e. $\rho_{ABE} = \rho_{AB} \otimes \rho_{E}$. The post measurement state for measurements $X=Y=0$ is proportional to
\begin{equation}
    \rho_{ABE} = \sum_{ab}\ketbra{a}{a}_{A} \otimes \ketbra{b}{b}_{B} \\ 
    \otimes \text{Tr}_{\tilde{Q}_{A}\tilde{Q}_{B}}\Big[ (\tilde{M}_{a|0} \otimes \tilde{N}_{b|0} \otimes \mathbb{I}_{E})\ketbra{\Psi}{\Psi}_{\tilde{Q}_{A}\tilde{Q}_{B}E}\Big], \label{eq:state}
\end{equation}
where $\tilde{M}_{a|x},\tilde{N}_{b|y}$ are projectors for the observables $\tilde{A}_{x} = \tilde{M}_{0|x}-\tilde{M}_{1|x}$, $\tilde{B}_{y} = \tilde{N}_{0|y} - \tilde{N}_{1|y}$. From \cref{thm:selfTest}, the observation $I_{\delta} = I_{\delta}^{Q}$ implies the existence of the local isometry $V$ satisfying  
\begin{equation}
    V\Big[ (\tilde{A}_{x} \otimes \tilde{B}_{y} \otimes \mathbb{I}_{E}) \ket{\Psi}_{\tilde{Q}_{A}\tilde{Q}_{B}E} \Big] 
    =   (A_{x} \otimes B_{y}) \ket{\psi}_{Q_{A}Q_{B}} \otimes \ket{\xi}_{\text{Junk}}, 
\end{equation}
in accordance with \cref{def:selfTest}. Since the isometery acts as identity on $E$, we can decompose the junk system as $\mathcal{H}_{\mathrm{Junk}} = \mathcal{H}_{J}\otimes \mathcal{H}_{E}$. Using the fact that $V^{\dagger}V=\mathbb{I}$, we have the following series of equalities for the partial trace term:
\begin{align}
    \text{Tr}_{\tilde{Q}_{A}\tilde{Q}_{B}}\Big[ (\tilde{M}_{a|0} \otimes \tilde{N}_{b|0} \otimes \mathbb{I}_{E})\ketbra{\Psi}{\Psi}_{\tilde{Q}_{A}\tilde{Q}_{B}E}\Big] &= \text{Tr}_{\tilde{Q}_{A}\tilde{Q}_{B}}\Big[ V^{\dagger}V(\tilde{M}_{a|0} \otimes \tilde{N}_{b|0} \otimes \mathbb{I}_{E})V^{\dagger}V\ketbra{\Psi}{\Psi}V^{\dagger}V\Big] \nonumber \\
    &= \text{Tr}_{\tilde{Q}_{A}\tilde{Q}_{B}}\Big[ V^{\dagger}\Big((M_{a|0} \otimes N_{b|0})\ketbra{\psi}{\psi} \otimes \ketbra{\xi}{\xi}\Big)V\Big] \nonumber \\
    &= \text{Tr}_{Q_{A}Q_{B}J}\Big[ (M_{a|0} \otimes N_{b|0})\ketbra{\psi}{\psi} \otimes \ketbra{\xi}{\xi}\Big] \nonumber \\
    &= \mathrm{p}(ab|00)\mathrm{Tr}_{J}\big[\ketbra{\xi}{\xi}\big],
\end{align}
where $\text{p}(ab|xy) = \bra{\psi}M_{a|x} \otimes N_{b|y}\ket{\psi}$ is the distribution generated by the target strategy. Consequently, the post-measurement state takes the form 
\begin{align}
    \rho_{ABE} = \Bigg(\sum_{ab} \text{p}(ab|00)\ketbra{a}{a}\otimes \ketbra{b}{b} \Bigg) \otimes \rho_{E}, \label{eq:postState}
\end{align}
where $\rho_{E} = \mathrm{Tr}_{J}\big[\ketbra{\xi}{\xi}_{JE}\big]$, and we find
\begin{align}
    r = \inf_{\substack{ \rho_{Q_{A}Q_{B}E}, \\ \{M_{a|x}\}_{a}, \{N_{b|y}\}_{b} \\ I_{\delta} = I_{\delta}^{\mathrm{Q}} }} H(AB|X=0,Y=0,E)_{\rho_{ABE}}
    =H(AB|X=0,Y=0)_{\rho_{AB}} 
    = H(\{\text{p}(ab|00)\})\,.
\end{align} 
This concludes the proof. 

\end{proof}

As a corollary of \cref{thm:convex}, $r = 2$ when the $I_{\delta}$ inequalities are used, since $\text{p}(ab|00)=\text{p}_{\delta}(ab|00) = 1/4$ for the self-tested strategies in \cref{eq:deltaStrat_supp}. When the $J_{\gamma}$-family of self-tests are used, $r = 1 + H_{\text{bin}}\Big[\frac{1}{2}(1 + \sin3\gamma )\Big]$, where $H_{\text{bin}}(p) = -p\log_{2}p - (1-p)\log_{2}(1-p)$ is the binary entropy. 
\label{sec:convex}

\section{Proof of Proposition \ifarxiv\ref{prep:maxRand}\else 3\fi}
\label{app:max}
In the main text, we made the following proposition regarding the $I_{\delta}$ and $J_{\gamma}$-family of self-tests described in the previous section:\medskip

\noindent \textbf{Proposition 3} (Maximum randomness versus CHSH value). \textit{The maximum randomness for strategies achieving a CHSH value in the range $s \in (2,3\sqrt{3}/2]$ is $\mathrm{2}$ bits, and is generated by the family of strategies in \cref{eq:deltaStrat_supp}. For the range $s \in [3\sqrt{3}/2,2\sqrt{2}]$, the maximum is given by 
\begin{equation}
    1 + H_{\mathrm{bin}}\Big[\frac{1}{2} + \frac{s}{2}- \frac{3}{\sqrt{2}}\cos\Big(\frac{1}{3} \arccos \Big[-\frac{s}{2\sqrt{2}} \Big]\Big) \Big],
\end{equation}
where $H_{\mathrm{bin}}[\cdot]$ is the binary entropy, and is generated by the family of strategies in \cref{eq:gammaStrat_supp}. }\medskip

This statement is trivial for CHSH scores in the range $(2,3\sqrt{3}/2]$ since each member of the $I_{\delta}$-family generates $r=2$, the global maximum for this scenario. Moreover, the curve provided by the $J_{\gamma}$-family will always be a lower bound on the true maximum, since these are achievable randomness rates certified by the self-tests detailed in \cref{sec:gamma}. In this section we will prove that the $J_{\gamma}$-family give the maximum global randomness achievable by any strategy with the corresponding CHSH value.

Let $\mathcal{Q}$ denote the set of quantum distributions, and $\mathcal{C}(P)$ denote the CHSH value of a distribution $P$. Moreover, let $H(AB|X=0,Y=0,E)_{P}$ be the conditional von Neumann entropy of the outputs $A,B$ for inputs $X=0,Y=0$ given observed distribution $P$, minimized over all quantum strategies that could give rise to $P$, i.e.,
\begin{equation}
   H(AB|X=0,Y=0,E)_{P} := \inf_{\substack{ \rho_{Q_{A}Q_{B}E}, \\ \{M_{a|x}\}_{a}, \{N_{b|y}\}_{b} \\ \text{compatible with} \ P }} H(AB|X=0,Y=0,E)_{\rho_{ABE}}.
\end{equation}
Similarly, let $H(AB|X=0,Y=0)_{P}$ be the Shannon entropy of the distribution on $A,B$ for inputs $X=0,Y=0$.  Then the curve $R:[3\sqrt{3}/2,2\sqrt{2}] \rightarrow [0,2], \ s \mapsto R(s)$ we want to find is defined by the optimization
\begin{align}
    R(s) = \max_P \ &H(AB|X=0,Y=0,E)_{P} \nonumber \\
     \text{s.t.} \  \ & \mathcal{C}(P) = s, \nonumber \\
    & P \in \mathcal{Q}.
\end{align}
Our proof of Proposition \ifarxiv\ref{prep:maxRand} \else 3 \fi proceeds by defining a sequence of upper bounds on $\text{Graph}[R(s)] = \{(s,r) \ | \ r = R(s) \}$, before establishing that the final upper bound is achieved by our $J_{\gamma}$-family of self-tests.

Our first bound follows from strong subadditivity of the von Neumann entropy (that the entropy $H(AB|X=0,Y=0,E)$ cannot decrease if $E$ is discarded) and is $R(s) \leq \bar{R}(s)$, where
\begin{align}
    \bar{R}(s) = \max_P \ &H(AB|X=0,Y=0)_{P} \nonumber \\
     \text{s.t.} \  \ & \mathcal{C}(P) = s, \label{eq:Rsbar} \\
    & P \in \mathcal{Q}.\nonumber
\end{align}

First we prove the following two lemmas:
\begin{lemma}[Monotonicity of $\bar{R}(s)$] The function $\bar{R}(s)$ is strictly decreasing on its domain.  
\label{lem:mon}
\end{lemma}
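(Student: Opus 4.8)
The plan is to prove the strict decrease in two steps: a soft ``mixing'' step showing that blending any behaviour with the fully random, CHSH$=0$ behaviour can only raise the entropy of the $(X,Y)=(0,0)$ box while scaling down the CHSH value, and a quantitative step showing $\bar{R}(s)<2$ as soon as $s>3\sqrt{3}/2$. Together these force $\bar{R}$ to be strictly decreasing on $[3\sqrt{3}/2,2\sqrt{2}]$.

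For the mixing step I would fix $3\sqrt{3}/2\le s_1<s_2\le 2\sqrt{2}$ and let $P_2$ attain the maximum defining $\bar{R}(s_2)$; this maximum is attained because, by Jordan's lemma, the set $\mathcal{Q}$ of $2$-input $2$-output quantum behaviours is the convex hull of the compact family of two-qubit projective behaviours, hence itself compact, its slice $\{\mathcal{C}=s_2\}$ is compact, and $P\mapsto H(AB|X=0,Y=0)_{P}$ is continuous. Writing $q_2$ for the $00$-marginal of $P_2$, so that $H(q_2)=\bar{R}(s_2)$, I let $P_u$ be the behaviour $\mathrm{p}(ab|xy)=1/4$ for all $x,y$ (quantum, realised by $\mathbb{I}_{AB}/4$, with $\mathcal{C}(P_u)=0$), set $\lambda=s_1/s_2\in(0,1)$ and $P_\lambda=\lambda P_2+(1-\lambda)P_u\in\mathcal{Q}$. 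Linearity of the CHSH functional gives $\mathcal{C}(P_\lambda)=\lambda s_2=s_1$, so $P_\lambda$ is feasible for $\bar{R}(s_1)$; its $00$-marginal is $\lambda q_2+(1-\lambda)u$ with $u$ uniform, and concavity of the Shannon entropy yields
\[
  H\big(\lambda q_2+(1-\lambda)u\big)\ \ge\ \lambda H(q_2)+(1-\lambda)H(u)\ =\ \lambda\,\bar{R}(s_2)+2(1-\lambda).
\]
When $\bar{R}(s_2)<2$ the right-hand side strictly exceeds $\bar{R}(s_2)$, so $\bar{R}(s_1)\ge H(AB|X=0,Y=0)_{P_\lambda}>\bar{R}(s_2)$. (Even without $\bar{R}(s_2)<2$ the same chain shows $\bar{R}$ is non-increasing everywhere.)

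For the quantitative step I would show that no quantum behaviour with $\mathcal{C}>3\sqrt{3}/2$ can have $\mathrm{p}(ab|00)=1/4$. The key observation is that $\mathrm{p}(ab|00)=1/4$ for all $a,b$ is equivalent to $\langle A_0\rangle=\langle B_0\rangle=\langle A_0B_0\rangle=0$, and in particular it forces $\langle A_0B_0\rangle=0$. Taking $\delta=\pi/6$ in the first Bell family (so that $1/\sin\delta=1/\cos 2\delta=2$), the functional then degenerates to
\[
  I_{\pi/6}\ =\ \langle A_0B_0\rangle+2\big(\langle A_0B_1\rangle+\langle A_1B_0\rangle-\langle A_1B_1\rangle\big)\ =\ 2\,I_{\mathrm{CHSH}},
\]
so $I_{\mathrm{CHSH}}=\tfrac12 I_{\pi/6}\le\tfrac12 I_{\pi/6}^{\mathrm{Q}}=\tfrac12\cdot 3\sqrt{3}=3\sqrt{3}/2$ by \cref{lem:SOS_delta}. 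Hence every behaviour with CHSH value above $3\sqrt{3}/2$ has a non-uniform $00$-marginal, so every behaviour feasible in \cref{eq:Rsbar} for such an $s$ has $H(AB|X=0,Y=0)<2$; by the compactness/continuity remark above the maximum is attained, so $\bar{R}(s)<2$. Since $s_1<s_2$ in the lemma's domain always forces $s_2>3\sqrt{3}/2$, combining this with the mixing step proves the claim, the value $\bar{R}(3\sqrt{3}/2)=2$ being realised by the $\delta=\pi/6$ strategy.

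The non-routine content is precisely this last reduction: recognising that ``the $00$ box is perfectly random'' is the linear constraint $\langle A_0B_0\rangle=0$ on the correlators, under which the $\delta=\pi/6$ member of the first family literally equals $2\,I_{\mathrm{CHSH}}$, so that its already-established quantum bound $I_{\pi/6}^{\mathrm{Q}}=3\sqrt{3}$ (\cref{lem:SOS_delta}) pins the CHSH value at $\le 3\sqrt{3}/2$; without this one only obtains that $\bar{R}$ is non-increasing. Everything else---attainment of the maxima, linearity of $\mathcal{C}$, concavity of $H$, convexity of $\mathcal{Q}$ used to form $P_\lambda$---is standard, and the only point to keep in view is that strictness genuinely uses $s_2>3\sqrt{3}/2$, so nothing is claimed beyond non-increase at the endpoint $s=3\sqrt{3}/2$.
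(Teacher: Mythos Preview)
Your proof is correct, and it follows a genuinely different (and in one respect more economical) route than the paper's.

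The paper argues in two stages: first, it invokes \cref{cor:1} (established later via the SOS machinery of \cref{lem:RSOS,lem:finalOpt}) to conclude that $\bar{R}(3\sqrt{3}/2)=2>\bar{R}(s)$ for every $s>3\sqrt{3}/2$; second, it shows $\bar{R}$ is concave on its domain (by concavity of the Shannon entropy and convexity of $\mathcal{Q}$), and observes that a concave function that strictly drops from its left endpoint must be strictly decreasing throughout. Your argument replaces both pieces. For the quantitative step you give a direct proof that avoids the forward reference to \cref{cor:1}: noting that $\mathrm{p}(ab|00)=1/4$ forces $\langle A_0B_0\rangle=0$, and that at $\delta=\pi/6$ the functional $I_{\pi/6}$ then collapses to $2\,I_{\mathrm{CHSH}}$, so the already-established bound $I_{\pi/6}^{\mathrm Q}=3\sqrt{3}$ from \cref{lem:SOS_delta} immediately gives $I_{\mathrm{CHSH}}\le 3\sqrt{3}/2$. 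For the monotonicity step you replace the abstract concavity argument by the concrete mixing $P_\lambda=\lambda P_2+(1-\lambda)P_u$ toward the uniform behaviour, which both lowers the CHSH value linearly and (strictly, once $\bar R(s_2)<2$) raises the entropy. The paper's approach has the merit that concavity of $\bar R$ is a structural fact one might want anyway; your approach has the merit of being self-contained at this point in the logical flow, needing only \cref{lem:SOS_delta} and standard compactness of the two-qubit behaviour set rather than the later \cref{cor:1}.
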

\begin{proof}
First note that $\bar{R}(3\sqrt{3}/2) > \bar{R}(s) \ \forall s\in (3\sqrt{3}/2,2\sqrt{2}]$. This is because the largest CHSH value achievable when $\text{p}(ab|00) = 1/4$ is $3\sqrt{3}/2$, (see \cref{cor:1}\footnote{Monotonicity of $\bar{R}(s)$ is not needed to establish \cref{cor:1}.}). Because it is an entropy, the objective function $H(AB|X=0,Y=0)_{P}$ is concave in $P$, therefore the optimization~\eqref{eq:Rsbar} defining $\bar{R}(s)$ is convex. It follows that $\bar{R}(s)$ is concave in $s$. To see this, let $\lambda \in [0,1]$ and $s_1,s_2 \in (3\sqrt{3}/2,2\sqrt{2}]$, then
\begin{align}
    \bar{R}[\lambda s_{1} + (1-\lambda)s_{2}] = \max_{P} \ &H(AB|X=0,Y=0)_{P} \nonumber \\
     \text{s.t.} \  \ & \mathcal{C}(P) = \lambda s_{1} + (1-\lambda) s_2, \nonumber \\
    & P \in \mathcal{Q}, \nonumber \\
    \geq \max_{P_{1},P_{2}}\  & H(AB|X=0,Y=0)_{\lambda P_1 + (1-\lambda)P_2} \nonumber \\
    \text{s.t.} \  \ & \mathcal{C}(P_{1}) = s_{1},\ \mathcal{C}(P_{2}) = s_{2}, \nonumber \\
    & P_{1},P_{2} \in \mathcal{Q}, \nonumber \\
    \geq \max_{P_{1},P_{2}}\  & \lambda H(AB|X=0,Y=0)_{P_{1}} + (1-\lambda)H(AB|X=0,Y=0)_{P_{2}} \nonumber \\
    \text{s.t.} \  \ & \mathcal{C}(P_{1}) = s_{1},\ \mathcal{C}(P_{2}) = s_{2}, \nonumber \\
    & P_{1},P_{2} \in \mathcal{Q}, \nonumber \\
    = \lambda \bar{R}( s_{1} & )  +  (1-\lambda)\bar{R}(s_{2}),
\end{align}
where we used the concavity of the Shannon entropy to obtain the inequality. Since $\bar{R}(s)$ is initially decreasing, and is a concave function, it must be monotonically decreasing.   
\end{proof}

\begin{lemma}[Inverse function of $\bar{R}(s)$]\label{lem:inv}
    Suppose $r = \bar{R}(s)$. The function $\bar{R}(s)$ has the following inverse, denoted $\bar{R}^{-1}$, that satisfies $s = \bar{R}^{-1}(r)$, given by
    \begin{align}
    \bar{R}^{-1}(r) = \max \ &\mathcal{C}(P) \nonumber \\
     \mathrm{s.t.} \  \ & H(AB|X=0,Y=0)_{P} = r, \nonumber \\
    & P \in \mathcal{Q}. \label{eq:invR}
\end{align}
\end{lemma}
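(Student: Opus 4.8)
The plan is to introduce the auxiliary function $g(r):=\max\{\mathcal{C}(P):H(AB|X=0,Y=0)_{P}=r,\ P\in\mathcal{Q}\}$, i.e.\ the right-hand side of~\eqref{eq:invR}, and to prove that $g(\bar{R}(s))=s$ for every $s$ in the domain $[3\sqrt{3}/2,2\sqrt{2}]$. Since Lemma~\ref{lem:mon} already guarantees that $\bar{R}$ is strictly decreasing, hence injective, this identity is precisely the assertion that $g$ coincides with $\bar{R}^{-1}$ on the range of $\bar{R}$. The argument splits into two opposite inequalities.

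First I would show $g(\bar{R}(s))\ge s$. Let $P^{*}$ attain the maximum in the definition of $\bar{R}(s)$, so that $\mathcal{C}(P^{*})=s$ and $H(AB|X=0,Y=0)_{P^{*}}=\bar{R}(s)=:r$. Then $P^{*}$ is a feasible point of the program defining $g(r)$ — it is a quantum distribution whose outcome entropy equals the prescribed value $r$ — and therefore $g(r)\ge\mathcal{C}(P^{*})=s$.

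For the reverse inequality $g(\bar{R}(s))\le s$, suppose for contradiction that $g(r)=s'>s$. By the previous paragraph $s'\ge s\ge 3\sqrt{3}/2$, and $s'\le 2\sqrt{2}$ by Tsirelson's bound, so $s'$ lies in the domain of $\bar{R}$. Let $Q^{*}$ attain the maximum defining $g(r)$, so $\mathcal{C}(Q^{*})=s'$ and $H(AB|X=0,Y=0)_{Q^{*}}=r$. Then $Q^{*}$ is feasible for the program~\eqref{eq:Rsbar} defining $\bar{R}(s')$, whence $\bar{R}(s')\ge H(AB|X=0,Y=0)_{Q^{*}}=r=\bar{R}(s)$. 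But $s'>s$ together with the strict monotonicity of $\bar{R}$ (Lemma~\ref{lem:mon}) forces $\bar{R}(s')<\bar{R}(s)$, a contradiction. Hence $g(r)=s$, which is the claimed identity $\bar{R}^{-1}(r)=s$.

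The only point genuinely needing care is the attainment of the maxima rather than suprema: $\mathcal{Q}$ is not obviously compact, so strictly one should replace $P^{*}$ and $Q^{*}$ by near-optimal sequences and pass to the limit, using continuity of $\bar{R}$ on the interior of its domain (which follows from the concavity established in the proof of Lemma~\ref{lem:mon}) and the trivial observation that the feasible set $\{P\in\mathcal{Q}:H(AB|X=0,Y=0)_{P}=r\}$ is nonempty because $P^{*}$ realizes it. I expect this compactness bookkeeping, not the two inequalities, to be the only real obstacle; conceptually the lemma is just the statement that the strictly decreasing upper-right frontier of the region $\{(\mathcal{C}(P),H(AB|X=0,Y=0)_{P}):P\in\mathcal{Q}\}$ is the same curve whether one sweeps it by maximizing the second coordinate at fixed first coordinate or the first coordinate at fixed second.
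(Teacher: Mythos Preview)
Your proof is correct and follows essentially the same approach as the paper: both use the strict monotonicity from Lemma~\ref{lem:mon} to argue that any feasible $P$ in the program defining $g(r)$ must have $\mathcal{C}(P)\le s$ (your contradiction step), and then observe that the optimizer of $\bar{R}(s)$ witnesses equality. The paper additionally checks $\bar{R}(\bar{R}^{-1}(r))=r$, but since you already invoke injectivity of $\bar{R}$ this second direction is redundant; your remark on attainment of the maxima is a point the paper simply glosses over.
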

\begin{proof}
We prove \cref{lem:inv} by showing $\bar{R}^{-1}(\bar{R}(s)) = s$, and $\bar{R}(\bar{R}^{-1}(r)) = r$, and using \cref{lem:mon}. First consider $\bar{R}^{-1}(\bar{R}(s)) = s$,
\begin{align}
    \bar{R}^{-1}(\bar{R}(s)) = \max \ &\mathcal{C}(P) \nonumber \\
     \text{s.t.} \  \ & H(AB|X=0,Y=0)_{P} = \bar{R}(s), \nonumber \\
    & P \in \mathcal{Q}.
\end{align}
The constraint $H(AB|X=0,Y=0)_{P} = \bar{R}(s)$ implies the achievable CHSH values for the distribution $P$ must lie to the left of $s$, i.e., $\mathcal{C}(P) \leq s$, since the curve $\bar{R}(s)$ is decreasing (cf.\ Lemma~\eqref{lem:mon}). We therefore have that $\bar{R}^{-1}(\bar{R}(s)) = \max_{\{P\in \mathcal{Q} \ \text{s.t.} \ \mathcal{C}(P) \leq s\}}\mathcal{C}(P) = s$. For the other direction $\bar{R}(\bar{R}^{-1}(r))$, the same reasoning holds. The constraint $\mathcal{C}(P) = \bar{R}^{-1}(r)$ implies that $ H(AB|X=0,Y=0)_{P} \leq r$ since any distribution that achieves a CHSH value of $\bar{R}^{-1}(r)$ can generate no more than $r$ bits of randomness. Hence $\bar{R}(\bar{R}^{-1}(r)) = r$. This completes the proof.
\end{proof}
From the above lemma, we can solve for upper bounds on the points $(s,R(s)) \in \text{Graph}[R(s)]$ using the inverse function, i.e., $(s,\bar{R}(s)) = (\bar{R}^{-1}(r),r)$ where $\bar{R}(s) = r$. What remains is to compute $\bar{R}^{-1}(r)$ (or at least an upper bound, which will correspond to an upper bound on $R(s)$ due to the monotonicity argument). To do so we use the following two lemmas to formulate the constraints $H(AB|X=0,Y=0)_{P} = r$ as linear functions of the distribution $P$, defining a new upper bound:
\begin{lemma}
    Let $\mathcal{E}$ be the local channel that flips both output bits with probability $1/2$, i.e., $\mathcal{E}:\{\text{p}(ab|xy)\} \rightarrow \Big \{\frac{1}{2}\text{p}(ab|xy)+\frac{1}{2}\text{p}(\bar{a}\bar{b}|xy)\Big\}$ where $\bar{a}$ ($\bar{b}$) is the bit-wise complement of $a$ ($b$), i.e., $\bar{a}=a\oplus1$. The entropy after applying $\mathcal{E}$ is non-decreasing. Further, the CHSH value is invariant under $\mathcal{E}$.   
    \label{lem:entSimp}
\end{lemma}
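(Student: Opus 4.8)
The whole lemma is an elementary verification, and the plan is to read both claims off the structure of $\mathcal{E}$. The key observation is that $\mathcal{E} = \tfrac12\,\mathrm{id} + \tfrac12\,\mathcal{F}$, where $\mathrm{id}$ is the identity channel and $\mathcal{F}$ is the \emph{local} relabelling that simultaneously flips Alice's output and Bob's output (on the post-measurement state this is conjugation of the classical output registers by $\sigma_X\otimes\sigma_X$). In particular $\mathcal{E}$ sends quantum behaviours to quantum behaviours — one realises $\mathcal{E}[P]$ from the original strategy by adjoining a shared uniformly random bit $\lambda$ and letting both parties complement their outputs exactly when $\lambda=1$ — so the feasible set $\mathcal{Q}$ used in the definitions of $\bar R$ and $\bar R^{-1}$ is preserved.

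For the CHSH invariance I would expand the transformed correlators directly. With $\mathrm{p}'(ab|xy) = \tfrac12\mathrm{p}(ab|xy) + \tfrac12\mathrm{p}(\bar a\bar b|xy)$ and $\langle A_xB_y\rangle' = \sum_{ab}(-1)^{a+b}\,\mathrm{p}'(ab|xy)$, the first half contributes $\tfrac12\langle A_xB_y\rangle$; in the second half the substitution $a\mapsto\bar a,\ b\mapsto\bar b$ leaves the sign $(-1)^{a+b}$ unchanged, because $\bar a+\bar b = 2-(a+b)$, so it also contributes $\tfrac12\langle A_xB_y\rangle$. Hence $\langle A_xB_y\rangle' = \langle A_xB_y\rangle$ for every $x,y$; the single-party marginals $\langle A_x\rangle,\langle B_y\rangle$ are sent to $0$ but they do not appear in $I_{\text{CHSH}}$, so $\mathcal{C}(P)$ is invariant.

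For the entropy, the relevant quantity is the Shannon entropy $H(AB|X=0,Y=0)_P = H(\{\mathrm{p}(ab|00)\})$, the quantity appearing in $\bar R$. Put $\tilde{\mathrm{p}}(ab|00) := \mathrm{p}(\bar a\bar b|00)$; this is a permutation of the probability vector $\mathrm{p}(\cdot|00)$, so $H(\tilde{\mathrm{p}}(\cdot|00)) = H(\mathrm{p}(\cdot|00))$. Since $\mathcal{E}[P](\cdot|00) = \tfrac12\mathrm{p}(\cdot|00) + \tfrac12\tilde{\mathrm{p}}(\cdot|00)$, concavity of the Shannon entropy gives $H(\mathcal{E}[P](\cdot|00)) \geq \tfrac12 H(\mathrm{p}(\cdot|00)) + \tfrac12 H(\tilde{\mathrm{p}}(\cdot|00)) = H(\mathrm{p}(\cdot|00))$, which is the non-decreasing claim. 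The identical argument, using concavity and unitary invariance of the von Neumann entropy together with the fact that $\mathcal{E}$ acts on $AB$ as a random-unitary channel that is trivial on $E$, shows that $H(AB|X=0,Y=0,E)$ is also non-decreasing. There is no genuine obstacle here; the only point needing a line of care is the one flagged in the first paragraph — that $\mathcal{F}$ factorises as a product operation, so $\mathcal{E}$ preserves $\mathcal{Q}$ — since this is exactly what licenses the next step, restricting the optimisation to $\mathcal{E}$-invariant distributions (for which $\mathrm{p}(00|00)=\mathrm{p}(11|00)$, $\mathrm{p}(01|00)=\mathrm{p}(10|00)$, and hence $H(\{\mathrm{p}(ab|00)\}) = 1 + H_{\mathrm{bin}}[\mathrm{p}(00|00)+\mathrm{p}(11|00)]$ is a function of a single linear functional of $P$).
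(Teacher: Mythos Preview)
Your proof is correct and follows essentially the same approach as the paper: invariance of each correlator $\langle A_xB_y\rangle$ under $\mathcal{E}$ gives the CHSH claim, and monotonicity of the Shannon entropy under the symmetrisation gives the entropy claim. The only difference is cosmetic --- the paper cites the data processing inequality in one line, whereas you spell out the more elementary concavity argument (writing $\mathcal{E}$ as an average of the identity and a permutation), which is arguably cleaner here since $\mathcal{E}$ is doubly stochastic.
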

\begin{proof}
The first claim comes from the data processing inequality, that states that the entropy is non-decreasing under post-processing, i.e., $H(AB|X=0,Y=0)_{P} \leq H(AB|X=0,Y=0)_{\mathcal{E}(P)}$. The second claim comes from the fact that the correlators $\langle A_{x} B_{y} \rangle$ are invariant under $\mathcal{E}$. 
\end{proof}
Notice that when Alice applies the post-processing map $\mathcal{E}$ to her devices, the probabilities are symmetrized, i.e., $\text{p}(aa|00) = \epsilon$, $\text{p}(a\bar{a}|00) = 1/2 - \epsilon$, $0 \leq \epsilon \leq 1/2 $. In this case, we find $H(AB|X=0,Y=0)_{\mathcal{E}(P)} = 1+H_{\text{bin}}(2\epsilon)$. As a consequence of \cref{lem:entSimp}, we can define the following upper bound on $\bar{R}(s)$:
\begin{align}
    \bar{R}(s) \leq \bar{\bar{R}}(s) =  \max \ &H(AB|X=0,Y=0)_{\mathcal{E}(P)} \nonumber \\
     \text{s.t.} \  \ & \mathcal{C}(\mathcal{E}(P)) = s, \nonumber \\
    & P \in \mathcal{Q} \nonumber \\
    = \max \ &H(AB|X=0,Y=0)_{P} \nonumber \\
     \text{s.t.} \  \ & \mathcal{C}(P) = s, \nonumber \\
     & \text{p}(00|00) = \text{p}(11|00)  , \nonumber \\
     & \text{p}(01|00) = \text{p}(10|00) , \nonumber \\
    & P \in \mathcal{Q},
\end{align}
where the second equality comes from the fact that optimizing the entropy over $\mathcal{E}(P),P\in \mathcal{Q}$ is equal to optimizing the entropy over symmetrized distributions in $\mathcal{Q}$ (following the convexity of $\mathcal{Q}$), and $\mathcal{C}(\mathcal{E}(P)) = \mathcal{C}(P)$. We can then define an inverse using \cref{lem:inv}, just as was done for $\bar{R}(s)$; we remark that \cref{lem:inv} applies here, since \cref{lem:mon,lem:inv} will hold when $\mathcal{Q}$ is replaced by any convex subset of $\mathcal{Q}$, e.g., the set of symmetrized quantum distributions. This inverse is given by 
\begin{align}
    \bar{\bar{R}}^{-1}(r) = \max \ &\mathcal{C}(P) \nonumber \\
     \text{s.t.} \  \ & H(AB|X=0,Y=0)_{P} = r, \nonumber \\
     & \text{p}(00|00) = \text{p}(11|00)  , \nonumber \\
     & \text{p}(01|00) = \text{p}(10|00) , \nonumber \\
    & P \in \mathcal{Q}. \label{eq:invSym}
\end{align}

\begin{lemma}
    The optimization in \cref{eq:invSym} has the following upper bound:
    \begin{align}
     \bar{\bar{R}}^{-1}(r) \leq \max \ &\mathcal{C}(P) \nonumber \\
     \mathrm{s.t.} \  \ & \langle A_{0}B_{0} \rangle = 4\epsilon_{r} - 1 , \nonumber \\
    & P \in \mathcal{Q},
    \label{eq:RUB1}
    \end{align}
    where $\epsilon_{r}$ satisfies $r = 1 + H_{\mathrm{bin}}(2\epsilon_{r})$.
    \label{lem:corrUB}
\end{lemma}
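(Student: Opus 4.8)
The plan is to prove the bound \eqref{eq:RUB1} by a feasible-set relaxation: I will show that every distribution feasible for the optimization \eqref{eq:invSym} is, after discarding constraints (and possibly replacing it by another quantum distribution with the same or larger CHSH value), feasible for the optimization on the right-hand side of \eqref{eq:RUB1}, so the maximum cannot decrease. First I would use the two symmetrization constraints in \eqref{eq:invSym}. For any $P$ with $\mathrm{p}(00|00)=\mathrm{p}(11|00)$ and $\mathrm{p}(01|00)=\mathrm{p}(10|00)$ the $X=Y=0$ marginals are uniform, so the output entropy factorizes as $H(AB|X=0,Y=0)_P = 1 + H_{\mathrm{bin}}\big(2\,\mathrm{p}(00|00)\big)$, while the correlator is $\langle A_0 B_0\rangle = \mathrm{p}(00|00)+\mathrm{p}(11|00)-\mathrm{p}(01|00)-\mathrm{p}(10|00) = 4\,\mathrm{p}(00|00)-1$. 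Hence the constraint $H(AB|X=0,Y=0)_P=r$ is equivalent to $H_{\mathrm{bin}}(2\,\mathrm{p}(00|00))=r-1$, whose solutions are $\mathrm{p}(00|00)\in\{\epsilon_r,\,\tfrac12-\epsilon_r\}$, where $\epsilon_r$ is the root with $\epsilon_r\ge\tfrac14$ (the value realized by the $J_\gamma$ strategies, for which $\mathrm{p}(00|00)=\tfrac14(1+\sin 3\gamma)$). Correspondingly $\langle A_0 B_0\rangle\in\{4\epsilon_r-1,\,1-4\epsilon_r\}$, with $4\epsilon_r-1\ge 0$.

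Next I would relax: discard the symmetrization constraints and the entropy constraint, retaining only the now-derived value of $\langle A_0 B_0\rangle$ together with $P\in\mathcal{Q}$; this enlarges the feasible set and hence the maximum. When $\langle A_0 B_0\rangle_P=4\epsilon_r-1$ this already shows that $\mathcal{C}(P)$ is bounded above by the right-hand side of \eqref{eq:RUB1}. For the complementary case $\langle A_0 B_0\rangle_P = 1-4\epsilon_r\le 0$, I would invoke that the quantum correlation body of the CHSH scenario (restricted to distributions with uniform marginals, as produced by $\mathcal{E}$ in \cref{lem:entSimp}) is invariant under flipping the sign of any single correlator, which is visible in the Tsirelson--Landau--Masanes characterisation, whose defining inequalities involve $\pm\arcsin$ of each correlator symmetrically. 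Thus there is $P'\in\mathcal{Q}$ agreeing with (the symmetrization of) $P$ in $\langle A_0 B_1\rangle,\langle A_1 B_0\rangle,\langle A_1 B_1\rangle$ but with $\langle A_0 B_0\rangle_{P'}=-(1-4\epsilon_r)=4\epsilon_r-1$, whence $\mathcal{C}(P')=\mathcal{C}(P)+2(4\epsilon_r-1)\ge\mathcal{C}(P)$. Either way $\mathcal{C}(P)$ is at most the value of the optimization on the right of \eqref{eq:RUB1}; taking the supremum over all $P$ feasible for \eqref{eq:invSym} yields $\bar{\bar{R}}^{-1}(r)\le$ that value, i.e.\ \eqref{eq:RUB1}.

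The routine parts are the closed-form entropy identity and the relaxation itself; the one point that needs care is the sign bookkeeping — the entropy constraint fixes $\langle A_0 B_0\rangle$ only up to sign, and one must check this ambiguity does not weaken the bound. The cleanest way to dispose of it is the single-correlator sign-flip symmetry of the quantum CHSH body noted above, together with $4\epsilon_r-1\ge 0$ so that moving $\langle A_0 B_0\rangle$ to its nonnegative value can only increase $\mathcal{C}$; alternatively one can simply compute $\max\{\mathcal{C}(P):\langle A_0 B_0\rangle_P=c,\ P\in\mathcal{Q}\}$ explicitly and note that it is an even function of $c$, collapsing both cases to the same number. I expect this explicit computation of the fixed-correlator Tsirelson bound to be the piece that actually feeds into the remainder of the proof of \cref{prep:maxRand}, where this maximum must be matched against the CHSH value $\sin 3\gamma + 3\cos(\gamma+\pi/6)$ of the $J_\gamma$ family and then inverted.
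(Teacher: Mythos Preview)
Your argument is correct and follows the same route as the paper: express the entropy of a symmetrized distribution as $1+H_{\mathrm{bin}}(2\epsilon)$, convert the entropy constraint into linear constraints on $\mathrm{p}(ab|00)$, and then relax to the single correlator condition $\langle A_0B_0\rangle=4\epsilon_r-1$. You are in fact more careful than the paper on one point: you notice that $H_{\mathrm{bin}}(2\,\mathrm{p}(00|00))=r-1$ has \emph{two} roots and dispose of the branch $\mathrm{p}(00|00)=\tfrac12-\epsilon_r$ via the single-correlator sign-flip symmetry of the TLM body, producing $P'\in\mathcal{Q}$ with $\mathcal{C}(P')=\mathcal{C}(P)+2(4\epsilon_r-1)\ge\mathcal{C}(P)$; the paper simply restricts to $\epsilon\in[1/4,1/2]$ without arguing that the other branch cannot yield a larger CHSH value. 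One small correction to your parenthetical alternative: $\max\{\mathcal{C}(P):\langle A_0B_0\rangle=c,\ P\in\mathcal{Q}\}$ is \emph{not} even in $c$ --- by the very TLM symmetry you invoke it equals $c+g(c)$ with $g$ even --- so the two branches do not ``collapse to the same number''; rather the $c\ge 0$ branch dominates, which is precisely what your main construction already establishes.
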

\begin{proof}
Firstly, consider symmetrized distributions, i.e., $\text{p}(aa|00) = \epsilon, \ \text{p}(a\bar{a}|00) = 1/2 - \epsilon$, $H(AB|X=0,Y=0)_{P} = 1+H_{\text{bin}}(2\epsilon)$. One can notice that for $\epsilon \in [1/4,1/2]$ there is a one to one mapping between $H(AB|X=0,Y=0)_{P}$ and $\epsilon$. Moreover, the range of $\epsilon$ we are interested in is given by $\epsilon \in [1/4,(2+\sqrt{2})/8]$, since $\epsilon = 1/4$ corresponds to the $\delta = \pi/6$ strategy ($r=2$), and $\epsilon=(2+\sqrt{2})/8$ corresponds to the optimal CHSH strategy ($r\approx 1.6$). Hence for every choice of $r$, there exists a unique $\epsilon_{r}$ that satisfies $r = H(AB|X=0,Y=0)_{P} = 1 + H_{\text{bin}}(2\epsilon_{r})$ for $r\in [2,1.6...]$. We can therefore write the constraint $r = H(AB|X=0,Y=0)_{P}$ in terms of linear functions of $P$:  
\begin{align}
     \bar{\bar{R}}^{-1}(r) = \max \ &\mathcal{C}(P) \nonumber \\
     \text{s.t.} \  \ & \text{p}(00|00) = \epsilon_{r} , \nonumber \\
     & \text{p}(01|00) = \text{p}(10|00) =\frac{1}{2} - \epsilon_{r} , \nonumber \\
    & P \in \mathcal{Q},
\end{align}
where $\epsilon_{r}$ satisfies $r = 1 + H_{\text{bin}}(2\epsilon_{r})$. We can now relax this by considering the two party correlators, $\langle A_{x}B_{y}\rangle$; we replace the stronger constraints on the probabilities $\text{p}(ab|00)$ with a single weaker constraint on the $X=0,Y=0$ correlator, and arrive at the desired upper bound.
\end{proof}

In the next two lemmas, we rewrite the upper bound in \cref{eq:RUB1} using an SOS decomposition. Let $C = A_{0}B_{0} + A_{0}B_{1} + A_{1}B_{0} - A_{1}B_{1}$ be the CHSH operator, and consider the following optimization:
\begin{align}
    \tilde{R}^{-1}(r) = \min_{t,z,S} \ &t \nonumber \\
     \text{s.t.} \  \ & t\mathbb{I} - C = S + z\Big(A_{0}B_{0} - (4\epsilon_{r} - 1)\mathbb{I} \Big) , \label{eq:RUB2}
\end{align}
where $S$ is an SOS decomposition for the operator expression $t\mathbb{I} - z\Big(A_{0}B_{0} - (4\epsilon_{r} - 1)\mathbb{I} \Big) - C$. One can notice that for any feasible point $(t,z,S)$, and any distribution $P$ that satisfies $\langle A_{0}B_{0} \rangle = 4\epsilon_{r} - 1 $, we get an upper bound on the CHSH value, $t \geq \langle C \rangle = \mathcal{C}(P)$. Hence $\tilde{R}^{-1}(r)$ gives an upper bound on the CHSH value across all distributions that satisfy $\langle A_{0}B_{0} \rangle = 4\epsilon_{r} - 1 $, i.e., an upper bound on \cref{eq:RUB1}. An SOS decomposition is given in the following lemma:
\begin{lemma}
    Let $\bm{R}$ be as defined in~\eqref{lem:SOS_delta}. The operator expression $t\mathbb{I} - z\Big(A_{0}B_{0} - (4\epsilon_{r} - 1)\mathbb{I} \Big) - C$ admits the SOS decomposition
    \begin{equation}
    M = 
    \begin{bmatrix}
        m_{2} - 1/2  &  1/2 &  0 & 0 \\
        1/2 & m_{1} + (z+1)/2 & 0 & 0 \\
        0 & 0 &  m_{2} + 1/2 & -1/2 \\
        0 & 0 & -1/2 & m_{1} - (z+1)/2
    \end{bmatrix},
\end{equation}
for any $m_{1},m_{2}$ satisfying $2(m_{1} + m_{2}) = t + z(4\epsilon_{r} -1)$. \label{lem:RSOS}
\end{lemma}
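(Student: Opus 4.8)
The plan is to establish Lemma~\ref{lem:RSOS} by the same direct verification used for \cref{lem:SOS_delta}: expand $\bm{R}^{\dagger} M \bm{R}$ using only the relations $A_x^2 = B_y^2 = \mathbb{I}$ and $[A_x, B_y] = 0$, and confirm that it equals $t\mathbb{I} - z\bigl(A_0 B_0 - (4\epsilon_r - 1)\mathbb{I}\bigr) - C$ precisely when $2(m_1 + m_2) = t + z(4\epsilon_r - 1)$. As a preliminary I would note that each $R_i$ is Hermitian, since $A_x$ and $B_y$ are Hermitian and commute across the parties; hence any positive semidefinite choice of $M$ factorises as $M = L^{\dagger}L$ and produces $\bm{R}^{\dagger} M \bm{R} = (L\bm{R})^{\dagger}(L\bm{R})$, a genuine sum of squares of operators linear in the $A_x, B_y$ (positivity of $M$ itself is not claimed by the lemma --- it is imposed as a constraint in the program~\eqref{eq:RUB2} --- so this is all that is needed regarding the ``SOS'' wording).

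I would then compute the diagonal contributions: $R_0 = \tfrac{1}{\sqrt2}(B_1 - A_1)$ gives $R_0^2 = \mathbb{I} - A_1 B_1$, and likewise $R_2^2 = \mathbb{I} + A_1 B_1$, $R_1^2 = \mathbb{I} - A_0 B_0$, $R_3^2 = \mathbb{I} + A_0 B_0$. Because $M$ is block diagonal with blocks supported on $\{R_0, R_1\}$ and $\{R_2, R_3\}$, the only cross terms are $\tfrac12(R_0 R_1 + R_1 R_0)$ and $-\tfrac12(R_2 R_3 + R_3 R_2)$. Expanding each symmetrised product and using $[A_x, B_y] = 0$ produces the anticommutators $\{A_0, A_1\}$ and $\{B_0, B_1\}$, which are not proportional to $\mathbb{I}$; the key point is that they enter the two blocks with opposite sign, so in the combination that appears in $\bm{R}^{\dagger} M \bm{R}$ they cancel, leaving
\[
\tfrac12(R_0 R_1 + R_1 R_0) - \tfrac12(R_2 R_3 + R_3 R_2) = -A_0 B_1 - A_1 B_0 .
\]

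Assembling the pieces, $\bm{R}^{\dagger} M \bm{R}$ has coefficient $2(m_1 + m_2)$ on $\mathbb{I}$, coefficient $+1$ on $A_1 B_1$, coefficient $-(z+1)$ on $A_0 B_0$, and coefficient $-1$ on each of $A_0 B_1$ and $A_1 B_0$. Comparing with
\[
t\mathbb{I} - z\bigl(A_0 B_0 - (4\epsilon_r - 1)\mathbb{I}\bigr) - C = \bigl[t + z(4\epsilon_r - 1)\bigr]\mathbb{I} - (1+z)A_0 B_0 - A_0 B_1 - A_1 B_0 + A_1 B_1 ,
\]
all coefficients agree identically except the one on $\mathbb{I}$, which matches exactly when $2(m_1 + m_2) = t + z(4\epsilon_r - 1)$, i.e.\ under the stated constraint; this completes the verification.

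The only step that is not pure bookkeeping is the cancellation of the $\{A_0, A_1\}$ and $\{B_0, B_1\}$ terms between the two $2\times 2$ blocks, and this is exactly the feature the off-diagonal signs of $M$ are engineered to provide; I expect that to be the single point requiring care, with everything else following mechanically from $A_x^2 = B_y^2 = \mathbb{I}$ and $[A_x, B_y] = 0$.
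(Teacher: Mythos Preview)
Your proposal is correct and follows exactly the approach the paper indicates: the paper merely states that ``one can verify for any $m_1,m_2$ that satisfy the equality condition $\bm{R}^{\dagger}M\bm{R} = t\mathbb{I} - z\big(A_{0}B_{0} - (4\epsilon_{r} - 1)\mathbb{I}\big) - C$,'' and your explicit expansion (with the cancellation of the $\{A_0,A_1\}$ and $\{B_0,B_1\}$ terms between the two blocks) carries out precisely that verification.
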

This was derived using the symmetry arguments as was done for self-testing, and one can verify for any $m_{1},m_{2}$ that satisfy the equality condition $\bm{R}^{\dagger}M\bm{R} = t\mathbb{I} - z\Big(A_{0}B_{0} - (4\epsilon_{r} - 1)\mathbb{I} \Big) - C$. 
\begin{lemma}
    The upper bound in \cref{eq:RUB2} is equivalent to the the following optimization problem:
    \begin{align}
        \tilde{R}^{-1}(r) = \max_{\mu} \ \ & \sqrt{(2-4\epsilon_{r})^{2}+ (2-4\epsilon_{r})\mu} + \sqrt{(4\epsilon_{r})^{2} - 4\epsilon_{r}\mu} + \mu \nonumber \\
    &4\epsilon_{r} - 2 \leq \mu \leq 4\epsilon_{r}.
\end{align}
Moreover, the optimal value is given by
\begin{equation}
    \tilde{R}^{-1}(r) = 6\cos(2\theta)-4\cos^3(2\theta), 
\end{equation}
for the optimal argument
\begin{equation}
    \mu^*=4\cos(2\theta)\sin^2(2\theta),
\end{equation}
where $\theta = \frac{1}{6} \arccos(1-4\epsilon_{r})$.
\label{lem:finalOpt}
\end{lemma}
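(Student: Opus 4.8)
\emph{Overview of the plan.} I would prove \cref{lem:finalOpt} in three stages: (1) recast the semidefinite minimization \eqref{eq:RUB2}, with the parametric SOS matrix of \cref{lem:RSOS} plugged in, as the stated one‑dimensional maximization over $\mu$; (2) show the resulting objective $f$ is concave and solve its first‑order condition; (3) verify the closed form and check $\mu^*$ lies in the admissible interval $[4\epsilon_r-2,4\epsilon_r]$.

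\emph{Stage 1: from the SDP to $\max_\mu$.} Substituting $M$ from \cref{lem:RSOS} into \eqref{eq:RUB2}, feasibility at value $t$ means there exist $m_1,m_2,z$ with $M(m_1,m_2,z)\succeq0$ and $2(m_1+m_2)=t+z(4\epsilon_r-1)$. Because $M$ is block diagonal with two $2\times2$ blocks, $M\succeq0$ reduces to nonnegativity of the diagonal entries and of the two $2\times2$ determinants. Setting $p:=4\epsilon_r$, $n_1:=m_1+\tfrac{z+1}2$, $n_2:=m_1-\tfrac{z+1}2$, $u:=m_2-\tfrac12>0$, one computes $t=(2-p)n_1+p\,n_2+2u+p$; since the coefficients of $n_1,n_2$ are positive, at the minimum both determinant constraints are tight, forcing $n_1=\tfrac1{4u}$, $n_2=\tfrac1{4(u+1)}$, so that $\tilde R^{-1}(r)=\min_{u>0}\big[\tfrac{2-p}{4u}+\tfrac{p}{4(u+1)}+2u+p\big]$. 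To flip this into a maximization I use the elementary conjugate identity $\tfrac a{4u}=\max_{\nu\ge0}(\sqrt{a\nu}-\nu u)$ (valid for $a,u>0$): for every $u>0$ and every $\nu_2\in[0,2]$, taking $\nu_1=2-\nu_2$ makes the $u$‑terms cancel and gives $\tfrac{2-p}{4u}+\tfrac{p}{4(u+1)}+2u+p\ \ge\ \sqrt{(2-p)(2-\nu_2)}+\sqrt{p\nu_2}-\nu_2+p$. Minimizing the left side and maximizing the right over the free parameters---with equality at $u^*$ for the conjugate‑optimal $\nu_i^*=\tfrac{2-p}{4(u^*)^2},\tfrac{p}{4(u^*+1)^2}$, which satisfy $\nu_1^*+\nu_2^*=2$ exactly by the stationarity condition for $u^*$---and substituting $\mu:=p-\nu_2$ yields precisely $\tilde R^{-1}(r)=\max_{4\epsilon_r-2\le\mu\le4\epsilon_r}\big[\sqrt{(2-4\epsilon_r)^2+(2-4\epsilon_r)\mu}+\sqrt{(4\epsilon_r)^2-4\epsilon_r\mu}+\mu\big]$.

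\emph{Stage 2: solving the one‑dimensional problem.} Each radical is a positive multiple of $\sqrt{\text{affine}(\mu)}$, hence concave, and the last term is linear, so $f$ is concave and any interior stationary point is the global maximizer. The condition $f'(\mu)=0$ reads $\sqrt{p/(p-\mu)}-\sqrt{(2-p)/(2-p+\mu)}=2$; writing $P:=\sqrt{p/(p-\mu)}$, so that $\mu=p(1-P^{-2})$ and the second radical equals $P-2$, substituting and clearing denominators (cancelling a common factor $P-1$) collapses everything to the cubic $P^3-3P^2+2p=0$. Its depressed form $Q^3-3Q=2-2p$ with $Q=P-1$ is solved by $Q=2\cos2\theta$ because $4\cos^3(2\theta)-3\cos(2\theta)=\cos(6\theta)$ and $2-2p=2\cos6\theta$, i.e.\ $\theta=\tfrac16\arccos(1-4\epsilon_r)$; hence $P=1+2\cos2\theta$.

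\emph{Stage 3: closed form and feasibility, and the main obstacle.} Back‑substituting, $\mu^*=p(P^2-1)/P^2$, and using the factorizations $p=(1-\cos2\theta)(1+2\cos2\theta)^2$ and $2-p=(1+\cos2\theta)(2\cos2\theta-1)^2$ (both consequences of $p=1-\cos6\theta$ together with the triple‑angle identity) this simplifies to $\mu^*=4\cos2\theta\sin^2 2\theta$. Feeding $\mu^*$ back into $f$, one checks $2-p+\mu^*=1+\cos2\theta$ and $p-\mu^*=1-\cos2\theta$, so the radicals become $(1+\cos2\theta)(2\cos2\theta-1)$ and $(1-\cos2\theta)(2\cos2\theta+1)$ (using $\cos2\theta\ge\tfrac12$ on the relevant $\epsilon_r$‑range so the signs are correct), which sum to $2\cos2\theta$; thus $f(\mu^*)=2\cos2\theta+\mu^*=6\cos2\theta-4\cos^3 2\theta$. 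Finally $p-\mu^*=1-\cos2\theta\ge0$ and $\mu^*-(4\epsilon_r-2)=1+\cos2\theta\ge0$ place $\mu^*\in[4\epsilon_r-2,4\epsilon_r]$, so by concavity $\tilde R^{-1}(r)=f(\mu^*)$. The crux of the whole argument is Stage 1: reducing the PSD‑feasibility SDP to a scalar minimization and then, via the conjugate identity, turning it into the advertised \emph{maximization} over $\mu$ with exactly the right box constraint; once that reformulation is established, Stages 2 and 3 are essentially bookkeeping powered by the single cubic‑to‑trigonometric substitution.
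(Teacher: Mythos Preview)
Your proof is correct. The difference from the paper lies almost entirely in Stage~1. The paper rewrites the feasibility problem as a standard-form SDP, takes its Lagrangian dual, and then invokes Slater's condition to get strong duality; simplifying the dual yields the box-constrained maximization over $\mu$. You instead reduce the primal by hand (eliminating $n_1,n_2$ via the tightness of the $2\times2$ determinant constraints) to the one-variable convex minimization over $u>0$, and then flip min to max using the elementary Fenchel identity $a/(4u)=\max_{\nu\ge0}(\sqrt{a\nu}-\nu u)$ with the choice $\nu_1+\nu_2=2$; equality is recovered because the stationarity condition for $u^*$ is exactly $\nu_1^*+\nu_2^*=2$. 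This is a genuinely different and more self-contained derivation: it avoids the SDP duality machinery and the Slater check, at the cost of being tailored to the specific $2\times2$-block structure. Stages~2--3 are structurally the same as the paper's argument: both reduce $f'(\mu)=0$ to a depressed cubic solved by $2\cos2\theta$ with $\theta=\tfrac16\arccos(1-4\epsilon_r)$. The paper uses the shifted variable $m=\mu-4\epsilon_r+1$ and factors a quartic in $m$, whereas you substitute the ratio $P=\sqrt{p/(p-\mu)}$ and factor out $P-1$; the resulting cubics $4m^3-3m+4\epsilon_r-1=0$ and $Q^3-3Q=2-2p$ (with $Q=P-1$) are equivalent, and your factorizations $p=(1-\cos2\theta)(1+2\cos2\theta)^2$, $2-p=(1+\cos2\theta)(2\cos2\theta-1)^2$ make the back-substitution slightly cleaner than the paper's.
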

\begin{proof}
By inserting the SOS decomposition from \cref{lem:RSOS}, we can rewrite the optimization in the following way: 
\begin{align}
    \tilde{R}^{-1}(r) = \min_{t,z,m_{1},m_{2}} \ &t \nonumber \\
     \text{s.t.} \  \ & 
     \begin{bmatrix}
        m_{2} - 1/2  &  1/2  \\
        1/2 & m_{1} + (z+1)/2 
    \end{bmatrix} 
    \succeq 0, \nonumber \\
    &\begin{bmatrix}
        m_{2} + 1/2  &  -1/2  \\
        -1/2 & m_{1} - (z+1)/2 
    \end{bmatrix} 
    \succeq 0, \nonumber \\
    &2(m_{1} + m_{2}) = t + z(4\epsilon_{r} -1) \nonumber \\
    = \min_{X_{1},X_{2}} \ & (2-4\epsilon_{r})\text{Tr}[X_{1}] + 4\epsilon_{r}\text{Tr}[X_{2}] \nonumber \\
    \text{s.t.} \ \ &
    \text{Tr}[X_{1} \ketbra{0}{1}] = 1/2, \nonumber \\
    &\text{Tr}[X_{2} \ketbra{0}{1}] = -1/2, \nonumber \\
    &\text{Tr}[X_{1}\ketbra{0}{0}] - \text{Tr}[X_{2}\ketbra{0}{0}] = -1, \nonumber \\
    & X_{1} \succeq 0, \ X_{2} \succeq 0,
    \label{eq:primal}
\end{align}
where $\{\ket{i}\}$ is the standard computational basis. We remark that the particular form of the SOS decomposition used is not unique, and strictly speaking we therefore find and upper bound on $\tilde{R}^{-1}(s)$ when inserting this into the constraint. For ease of notation we redefine $\tilde{R}^{-1}(s)$ above and acknowledge this is an upper bound on \cref{eq:RUB2}. This SDP has the following dual:
\begin{align}
    \max_{\lambda,\nu,\mu} \ &\lambda + \nu + \mu \nonumber \\
     \text{s.t.} \  \ & 
     \begin{bmatrix}
       2 - 4\epsilon_{r}  &  -\lambda  \\
        -\lambda & 2 - 4\epsilon_{r} + \mu 
    \end{bmatrix} 
    \succeq 0, \nonumber \\
    &\begin{bmatrix}
        4\epsilon_{r}  &  -\nu  \\
        -\nu & 4\epsilon - \mu 
    \end{bmatrix} 
    \succeq 0, \nonumber \\
    = \max_{\lambda,\nu,\mu} \ \ &\lambda + \nu + \mu \nonumber \\
    &4\epsilon_{r} - 2 \leq \mu \leq 4\epsilon_{r} \nonumber , \\
    &\lambda^{2} \leq (4\epsilon_{r}-2)^{2}- (4\epsilon_{r}-2)\mu \nonumber , \\
    & \nu^{2} \leq 4\epsilon_{r}(4\epsilon_{r} - \mu) \nonumber , \\
    = \max_{\mu} \ \ & \sqrt{(4\epsilon_{r}-2)^{2}- (4\epsilon_{r}-2)\mu} + \sqrt{(4\epsilon_{r})^{2} - 4\epsilon_{r}\mu} + \mu \nonumber \\
    &4\epsilon_{r} - 2 \leq \mu \leq 4\epsilon_{r},\label{eq:final_opt}
\end{align}
where the last equality comes from the fact that the objective is maximized when $\lambda$ and $\nu$ saturate their respective upper bounds. The first claim then follows from strong duality. To see this, consider the primal problem in \cref{eq:primal}; the point $(t,z,m_{1},m_{2}) = (4,0,1,1)$ satisfies $2(m_{1}+m_{2}) = t + z(4\epsilon_{r} - 1)$, and the eigenvalues of the two matrices are given by $1 \pm 1/\sqrt{2} > 0$. This point is strictly feasible, i.e., Slater's condition holds. 

For the second claim, consider the final optimization~\eqref{eq:final_opt} over $\mu$. For algebraic convenience, let us use the shifted variable $m=\mu-4\epsilon_r+1$, with $-1\leq m\leq1$. Let $f(m)$ be the objective function in terms of $m$, i.e., $f(m)=m+4\epsilon_r-1+\sqrt{2(1+m)(1-2\epsilon_r)}+2\sqrt{\epsilon_r(1-m)}$. Then $f'(m)=0$ gives
\[
1-\frac{\epsilon_r}{\sqrt{\epsilon_r(1-m)}}=\frac{1-2\epsilon_r}{\sqrt{2(1+m)(1-2\epsilon_r}}.\]
After some rearrangement we obtain
\[\frac{(m+4\epsilon_r-1)(4m^3-3m+4\epsilon_r-1)}{1-m^2}=0.\]
The solutions are hence $m=1-4\epsilon_r$ or $m$ needs to be a solution of the cubic $4m^3-3m+4\epsilon_r-1=0$. Using the formula for the roots of a cubic we find the roots to be
\[
m_k=\cos\left(\frac{1}{3}\arccos(1-4\epsilon_r)-\frac{2\pi k}{3}\right)\quad\text{where}\quad k=0,1,2.
\]
Considering the four stationary points and the two endpoints of the range of $m$ we find that the maximum occurs for $m=m^*=\cos\left(\frac{1}{3}\arccos(1-4\epsilon_r)\right)$, which corresponds to $\mu=\mu^*=m^*+4\epsilon_r-1$. If we define $\theta=\frac{1}{6}\arccos(1-4\epsilon_r)$ so that $m^*=\cos(2\theta)$, $1-4\epsilon_r=\cos(6\theta)$ and $2\epsilon_r=\sin^2(3\theta)$, we find $\mu^*=\cos(2\theta)-\cos(6\theta)$. The maximum value of the objective function is
\begin{align*}
    f(m^*)&=\cos(2\theta)-\cos(6\theta)+\sqrt{2\cos^2(3\theta)(1+\cos(2\theta)))}+\sqrt{2\sin^2(3\theta)(1-\cos(2\theta))}\\
    &=\cos(2\theta)-\cos(6\theta)+2\cos(\theta)\cos(3\theta)+2\sin(\theta)\sin(3\theta)\\
    &=3\cos(2\theta)-\cos(6\theta)=6\cos(2\theta)-4\cos^3(2\theta)\,,
    \end{align*}
where we have used that for $\epsilon_r\in[1/4,1/2]$, $\theta\in[\pi/12,\pi/6]$ so that $\cos(\theta)$, $\sin(\theta)$, $\cos(3\theta)$ and $\sin(3\theta)$ are all positive.
\end{proof}
\begin{corollary}\label{cor:1}
The maximum CHSH score achievable by any quantum strategy with $\text{p}(ab|00) = 1/4$ for all $a$ and $b$ is $3\sqrt{3}/2$.
\end{corollary}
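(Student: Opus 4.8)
The plan is to establish the value $3\sqrt3/2$ by matching an achievability construction with the upper bound already proved in \cref{lem:finalOpt}, specialised to the constraint $\langle A_0B_0\rangle=0$ that the hypothesis $\mathrm{p}(ab|00)=1/4$ entails.

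First I would prove achievability using the $\delta=\pi/6$ member of the $I_\delta$-family from \cref{eq:deltaStrat_supp}. There $A_0=\sigma_Z$ and $B_0=\sigma_X$ are measured on $\ket{\psi}=(\ket{00}+\ket{11})/\sqrt2$, so both marginals are maximally mixed and $\langle A_0B_0\rangle=\bra{\psi}\sigma_Z\otimes\sigma_X\ket{\psi}=0$; together these force $\mathrm{p}(ab|00)=1/4$ for every $a,b$. Evaluating the remaining correlators on $\ket{\psi}$ gives $I_{\mathrm{CHSH}}=2\cos\delta\,(1+\sin\delta)$, which at $\delta=\pi/6$ equals $3\sqrt3/2$. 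Hence a quantum strategy with uniform $\mathrm{p}(ab|00)$ attains CHSH value $3\sqrt3/2$.

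Next I would argue the matching upper bound. If $\mathrm{p}(ab|00)=1/4$ for all $a,b$ then in particular $\langle A_0B_0\rangle=\mathrm{p}(00|00)+\mathrm{p}(11|00)-\mathrm{p}(01|00)-\mathrm{p}(10|00)=0$, so the maximum CHSH value over quantum strategies with uniform $\mathrm{p}(ab|00)$ is at most the maximum over strategies constrained only by $\langle A_0B_0\rangle=0$. That relaxed quantity is precisely the optimisation in \cref{eq:RUB1} (the right-hand side of \cref{lem:corrUB}) with $4\epsilon_r-1=0$, i.e.\ $\epsilon_r=1/4$, corresponding to $r=1+H_{\mathrm{bin}}(1/2)=2$. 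By \cref{lem:finalOpt} its value is $6\cos(2\theta)-4\cos^3(2\theta)$ with $\theta=\tfrac16\arccos(1-4\epsilon_r)=\tfrac16\arccos 0=\pi/12$; since $2\theta=\pi/6$ and $\cos(\pi/6)=\sqrt3/2$, this equals $3\sqrt3-4\cdot\tfrac{3\sqrt3}{8}=3\sqrt3/2$. Combining the two directions yields the corollary.

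I do not anticipate a real obstacle, since all the analytic work is contained in \cref{lem:corrUB,lem:finalOpt}. The only subtlety is that the argument must not use \cref{lem:mon} (monotonicity of $\bar R$), whose own proof invokes this corollary; so the proof should go directly through the semidefinite-programming bound of \cref{lem:finalOpt} and the explicit $\delta=\pi/6$ strategy, with no appeal to monotonicity.
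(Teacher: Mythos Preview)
Your proposal is correct and follows essentially the same route as the paper: invoke the SOS/SDP bound of \cref{lem:finalOpt} at $\epsilon_r=1/4$ (equivalently $\langle A_0B_0\rangle=0$) to get the upper bound $3\sqrt3/2$, and use the $\delta=\pi/6$ strategy from \cref{eq:deltaStrat_supp} for achievability. Your explicit remark about avoiding \cref{lem:mon} to sidestep circularity is exactly the point the paper flags in its footnote, and your proof respects it.
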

\begin{proof}
  When $\epsilon_{r} = 1/4$, $\tilde{R}^{-1}(r) = 3\sqrt{3}/2$, i.e., $3\sqrt{3}/2$ is an upper bound on the maximum achievable CHSH value when $\text{p}(ab|00) = 1/4$. We know this upper bound is achievable for the $\delta = \pi/6$ self-test in \cref{eq:deltaIneq_supp}, hence this must be the true maximum.
\end{proof}

Our final theorem shows the optimality of the constructions.
\begin{theorem}[Maximal global randomness versus CHSH value]
    The maximum global randomness, $R(s)$, for quantum strategies that achieve a particular CHSH value $s$ is given by
    \begin{equation}
        R(s) = 
        \begin{cases}
			2, & s \in (2,3\sqrt{3}/2 ] \\
            1 + H_{\mathrm{bin}}\Big[\frac{1}{2} + \frac{s}{2}- \frac{3}{\sqrt{2}}\cos\Big(\frac{1}{3} \arccos \Big[-\frac{s}{2\sqrt{2}} \Big]\Big) \Big], & s \in [3\sqrt{3}/2,2\sqrt{2}],
		 \end{cases}
    \end{equation}
    where $H_{\mathrm{bin}}(p) = -p\log_{2}p - (1-p)\log_{2}(1-p)$ is the binary entropy. Moreover, the inequalities in \cref{eq:deltaIneq_supp} and \cref{eq:gammaIneq_supp} self-test the quantum state and measurements that achieve this maximum. 
    \label{thm:maxRand}
\end{theorem}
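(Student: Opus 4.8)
The plan is to assemble the pieces already in place. The achievable $(s,r)$ pairs --- hence the lower bound on $R$ --- come from \cref{thm:selfTest,thm:gammaSelfTest,thm:convex}; the matching upper bound comes from chaining \cref{lem:entSimp,lem:mon,lem:inv,lem:corrUB,lem:RSOS,lem:finalOpt} together with \cref{cor:1}. The two intervals of $s$ are handled separately. For $s\in(2,3\sqrt3/2]$ the statement is immediate: since $A$ and $B$ are single bits, $H(AB|X=0,Y=0,E)\le H(AB|X=0,Y=0)\le\log_2 4=2$, so $R(s)\le 2$; and by \cref{thm:selfTest} with \cref{thm:convex} the strategy of \cref{eq:deltaStrat_supp} with $\delta\in(0,\pi/6]$ attains $r=H(\{\mathrm p_\delta(ab|00)\})=H(\{1/4\})=2$ while its CHSH value $2\cos\delta(1+\sin\delta)$ sweeps $(2,3\sqrt3/2]$. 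Hence $R(s)=2$ there, this is self-tested by \cref{thm:selfTest}, and (consistently) \cref{cor:1} identifies $s=3\sqrt3/2$ as the largest CHSH value compatible with $\mathrm p(ab|00)=1/4$.

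For the lower bound on $s\in[3\sqrt3/2,2\sqrt2]$, I would invoke \cref{thm:gammaSelfTest} and the $J_\gamma$-analogue of \cref{thm:convex} (noted after the proof of \cref{thm:convex}): the strategy of \cref{eq:gammaStrat_supp} with $\gamma\in[0,\pi/12]$ attains $r=1+H_{\mathrm{bin}}[\tfrac12(1+\sin3\gamma)]$ at CHSH value $s=\sin3\gamma+3\cos(\gamma+\tfrac{\pi}{6})$, and this $s$ decreases monotonically from $3\sqrt3/2$ to $2\sqrt2$ as $\gamma$ runs over $[0,\pi/12]$, so the whole interval is covered. Writing $\mu=\cos(\gamma+\tfrac{\pi}{6})$ and using $4\mu^3-3\mu=\cos(3\gamma+\tfrac{\pi}{2})=-\sin3\gamma$ one gets $s=3\mu+\sin3\gamma=6\mu-4\mu^3$; setting $\psi=\tfrac13\arccos[-\tfrac{s}{2\sqrt2}]$ so that $\cos3\psi=-\tfrac{s}{2\sqrt2}$, the identity $6(\sqrt2\cos\psi)-4(\sqrt2\cos\psi)^3=-2\sqrt2(4\cos^3\psi-3\cos\psi)=-2\sqrt2\cos3\psi=s$ shows $\mu=\sqrt2\cos\psi$ is the relevant root. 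Substituting $\sin3\gamma=s-3\mu$ into $\tfrac12(1+\sin3\gamma)=\tfrac12+\tfrac{s}{2}-\tfrac{3\mu}{2}=\tfrac12+\tfrac{s}{2}-\tfrac{3}{\sqrt2}\cos\psi$ recovers exactly the argument of $H_{\mathrm{bin}}$ in the statement, so $R(s)$ is at least the claimed expression.

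For the matching upper bound on $[3\sqrt3/2,2\sqrt2]$, strong subadditivity gives $R(s)\le\bar R(s)$ (\cref{eq:Rsbar}) and data processing under the bit-flip channel $\mathcal E$ (\cref{lem:entSimp}) gives $\bar R(s)\le\bar{\bar{R}}(s)$. Since \cref{lem:mon,lem:inv} hold verbatim on the convex set of symmetrized quantum distributions, $\bar{\bar{R}}$ is strictly decreasing with a well-defined inverse; \cref{lem:corrUB} relaxes the marginal constraints to the single correlator constraint $\langle A_0B_0\rangle=4\epsilon_r-1$, with $\epsilon_r$ solving $r=1+H_{\mathrm{bin}}(2\epsilon_r)$, so $\tilde R^{-1}(r)\ge\bar{\bar{R}}^{-1}(r)$; and \cref{lem:RSOS,lem:finalOpt} evaluate this relaxed SOS/SDP in closed form as $\tilde R^{-1}(r)=6\cos2\theta-4\cos^3 2\theta$ with $\theta=\tfrac16\arccos(1-4\epsilon_r)$. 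As $\tilde R^{-1}$ and $\bar{\bar{R}}^{-1}$ are both decreasing, defining $r^*(s)$ by $\tilde R^{-1}(r^*)=s$ gives $R(s)\le\bar{\bar{R}}(s)\le r^*(s)$. To make $r^*(s)$ explicit, set $v=\cos2\theta$: inverting $s=6v-4v^3$ as above gives $v=\sqrt2\cos\psi$, and $2\epsilon_r=\sin^2 3\theta=\tfrac12(1-\cos6\theta)=\tfrac12\big(1-(4v^3-3v)\big)=\tfrac12+\tfrac{s}{2}-\tfrac{3}{\sqrt2}\cos\psi$, so $r^*(s)=1+H_{\mathrm{bin}}\big[\tfrac12+\tfrac{s}{2}-\tfrac{3}{\sqrt2}\cos\psi\big]$. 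This coincides with the lower bound, so $R(s)$ equals the stated formula on $[3\sqrt3/2,2\sqrt2]$, and the self-testing claim is \cref{thm:gammaSelfTest}.

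The genuinely delicate points --- now that the SOS decompositions and SDP duality are packaged in \cref{lem:SOS_delta,lem:SOS_gamma,lem:RSOS,lem:finalOpt} --- are: (i) making the inverse-function step legitimate, i.e.\ checking that $\bar R$ and $\bar{\bar{R}}$ are \emph{strictly} monotone on $[3\sqrt3/2,2\sqrt2]$ so that an upper bound on the inverse is an upper bound on the function, which is exactly where \cref{cor:1}, concavity of the Shannon entropy, and convexity of $\mathcal Q$ enter; and (ii) the trigonometric bookkeeping showing the achievable $J_\gamma$ curve and the SDP upper-bound curve are the \emph{same} function of $s$, which relies on repeated use of $4\cos^3 x-3\cos x=\cos 3x$ and on tracking the branch ranges ($\gamma\in[0,\pi/12]$, $\theta\in[\pi/12,\pi/6]$). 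If one were building the argument from scratch rather than invoking the stated lemmas, the hardest part would lie earlier still: guessing the two-parameter SOS ansatz of \cref{lem:RSOS} and recognizing its dual as a one-variable maximization admitting a closed-form optimum.
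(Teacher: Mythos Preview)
Your proposal is correct and follows essentially the same route as the paper: achievability from the self-tests together with \cref{thm:convex}, the upper bound by chaining strong subadditivity, \cref{lem:entSimp}, the monotonicity/inverse argument of \cref{lem:mon,lem:inv}, the correlator relaxation of \cref{lem:corrUB}, and the closed-form SDP of \cref{lem:RSOS,lem:finalOpt}, followed by the same triple-angle bookkeeping to show the two curves coincide. One small slip: $s(\gamma)=\sin 3\gamma+3\cos(\gamma+\tfrac{\pi}{6})$ \emph{increases} (not decreases) from $3\sqrt{3}/2$ to $2\sqrt{2}$ on $[0,\pi/12]$, and you should state explicitly (as the paper does) that $\mu=\sqrt{2}\cos\psi$ with $\psi=\tfrac13\arccos[-s/(2\sqrt{2})]$ lands in the required range $[1/\sqrt{2},\sqrt{3}/2]$, singling it out among the three roots of $6\mu-4\mu^3=s$.
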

\begin{proof}
The case of $R(s)=2$ for $s \in (2,3\sqrt{3}/2]$ is trivially an upper bound on the maximum, and is shown to be achievable by the self-tests in \cref{eq:deltaIneq_supp}. For the case $s \in [3\sqrt{3}/2,2\sqrt{2}]$, we use the sequence of upper bounds and inverse functions defined in this section. Consider the points $(s,r) \in \text{Graph}[R(s)]$. We have the following:
\begin{align}
    (s,r) &= (s,R(s)) \nonumber \\
    &\leq(s,\bar{R}(s)) \nonumber \\
    &\leq(s,\bar{\bar{R}}(s)) \nonumber \\
    &= (\bar{\bar{R}}^{-1}(r),r) \nonumber \\
    & \leq (\tilde{R}^{-1}(r),r),
\end{align} 
where $\leq$ denotes component-wise inequality. Hence we can find an upper bound on $\text{Graph}[R(s)]$ using \cref{lem:finalOpt}, i.e.
\begin{equation}
    s \leq \tilde{R}^{-1}(r) = 6\cos 2\theta-4\cos^3 2\theta, \ \theta = \frac{1}{6} \arccos[1-4\epsilon_{r}], \ r = 1+H_{\text{bin}}(2\epsilon_{r}).
\end{equation}
We define this upper bound on $\text{Graph}[R(s)]$ as
\begin{equation}
    \text{Graph}[\tilde{R}^{-1}(r)] = \Big\{ (s,r) \ | \ s = \tilde{R}^{-1}(r) \Big\}.
\end{equation}
We now show it is achievable. From the self-tests in \cref{eq:gammaIneq_supp}, we find a tight lower bound on the conditional von Neumann entropy parameterized by $\gamma \in [0,\pi/12]$,
\begin{align}
   r(\gamma) &= \inf_{\substack{ \rho_{Q_{A}Q_{B}E}, \\ \{M_{a|x}\}_{a}, \{N_{b|y}\}_{b} \\ \text{Compatible with:} \ \langle S_{\gamma} \rangle = I_{\gamma}^{\text{Q}} }} H(AB|X=0,Y=0,E)_{\rho_{ABE}} \nonumber \\
   &= H(AB|X=0,Y=0)_{P_{\gamma}} \nonumber \\
   &= 1 + H_{\text{bin}}\Big[\frac{1}{2}(1+\sin 3\gamma)\Big],
\end{align}
where $P_{\gamma}$ is the distribution generated by \cref{eq:gammaStrat_supp}. We find the associated CHSH value is given by 
\begin{equation}
    s(\gamma) = \mathcal{C}(P_{\gamma}) = \sin 3\gamma + 3\cos \Big( \gamma + \frac{\pi}{6}\Big).
    \label{eq:gammaS}
\end{equation}
Since this is achievable, we have derived a parametric lower bound on $\text{Graph}[R(s)] = \{(s,r) \ | \ r = R(s)\}$: 
\begin{equation}
    \text{Graph}_{\Gamma} =  \Big\{(s(\gamma),r(\gamma)) \ | \ \gamma \in [0,\pi/12]\Big\}.
\end{equation}
Analysing the $X=0,Y=0$ block of $P_{\gamma}$, we find $\epsilon_{r} = \frac{1}{4}(1+\sin 3\gamma)$. Inverting this, we find $\gamma = \frac{1}{3}\arcsin[4\epsilon_{r} - 1]$, and inserting into \cref{eq:gammaS}, we express $s$ in terms of $\epsilon_{r}$, and hence $r$. Calling this function $R_{\Gamma}^{-1}(r)$:
\begin{align}
    R_{\Gamma}^{-1}(r) &\equiv s(\gamma) = 4\epsilon_{r} - 1 + 3\cos\Big(\frac{1}{3}\arcsin[4\epsilon_{r} - 1] + \frac{\pi}{6}\Big )  \nonumber \\
    &= -\cos 6\theta  +3\cos 2\theta = 6\cos 2\theta-4\cos^3 2\theta,
\end{align}
where we used the identities $\arcsin(x) = -\arcsin(-x)$ and $\arcsin(x) = \pi/2 - \arccos(x)$. This implies
\begin{equation}
    \text{Graph}_{\Gamma} = \text{Graph}[R_{\Gamma}^{-1}(r)] =  \Big\{(s,r) \ | \ s = R_{\Gamma}^{-1}(r)\Big\}.
\end{equation}
One can immediately see that $R_{\Gamma}^{-1}(r) = \tilde{R}^{-1}(r)$, from which it follows $\text{Graph}[\tilde{R}^{-1}(r)] = \text{Graph}[R_{\Gamma}^{-1}(r)]$, i.e., the upper and lower bounds coincide, and $\text{Graph}[R(s)] = \text{Graph}[\tilde{R}^{-1}(r)]=\text{Graph}[R_{\Gamma}^{-1}(r)]$. This shows that the family of inequalities in \cref{eq:gammaIneq_supp} self-test the maximum. 

From this, we can derive an explicit expression for $R(s)$, $s\in[3\sqrt{3}/2,2\sqrt{2}]$. We begin by changing variables $\hat{\theta} = 2\theta = \frac{1}{3}\arccos[1-4\epsilon_{r}]$. We wish to express $s$ in terms of $\hat{\theta}$, and hence $\epsilon_{r}$, which amounts to solving the cubic
\begin{equation}
    4\cos^{3}\hat{\theta} - 6\cos \hat{\theta} + s = 0.
\end{equation}
Employing another change of variables, $\cos \hat{\theta} = \sqrt{2} \cos \phi$:
\begin{align}
    4\cos^{3} \phi -3\cos \phi = \cos 3\phi = - \frac{s}{2\sqrt{2}},
\end{align}
which has solutions
\begin{equation}
    \phi_{k} = \frac{1}{3} \arccos \Big[ -\frac{s}{2\sqrt{2}} \Big] + \frac{2\pi k}{3}, \ k=0,1,2.
\end{equation}
Notice for $\theta \in [\pi/12,\pi/8]$, we require $\sqrt{2}\cos \phi \in [\sqrt{3}/2,1/\sqrt{2}]$, which for $s\in [3\sqrt{3}/2,2\sqrt{2}]$ is only satisfied when $k=0$. We therefore find that $\cos \hat{\theta} = \sqrt{2}\cos\Big(\frac{1}{3} \arccos \Big[ -\frac{s}{2\sqrt{2}} \Big]\Big)$. We can now solve for $\epsilon_{r}$, setting $\phi \equiv \phi_{0}$:
\begin{align}
    \epsilon_{r} &= \frac{1}{4} (1 - \cos 3\hat{\theta}) \nonumber \\
    &= \frac{1}{4}(1 - 4\cos^{3}\hat{\theta} + 3\cos \hat{\theta} ) \nonumber \\
    &= \frac{1}{4}\Big( 1 - \sqrt{2}( 1- 4 \cos^{3} \phi + 3 \cos \phi - 4\cos^{3} \phi) \Big) \nonumber \\
    &= \frac{1}{4}\Big( 1 - \sqrt{2} + 2\sqrt{2} \sin^{2} \Big(\frac{3\phi}{2} \Big) - 4 \sqrt{2} \cos^{3}\phi \Big) \nonumber \\
    &= \frac{1}{4}\Big( 1 + s- 3\sqrt{2}\cos\Big[\frac{1}{3} \arccos \Big( -\frac{s}{2\sqrt{2}} \Big)\Big] \Big),
\end{align}
where for the second equality we used the identity $\cos 3\theta = 4 \cos^{3}\theta - 3\cos \theta$, for third we used $1-4\cos^{3}\theta + 3\cos \theta = 2 \sin^{2} \Big( \frac{3\theta}{2}\Big)$, $\sin^{2}\theta = \frac{1}{2}(1-\cos2\theta)$, and for the final we used the triple angle formula again. The claim then follows using the fact that $R(s) = 1 + H_{\text{bin}}(2\epsilon_{r})$.  
\end{proof}


\begin{thebibliography}{54}%
\makeatletter
\providecommand \@ifxundefined [1]{%
 \@ifx{#1\undefined}
}%
\providecommand \@ifnum [1]{%
 \ifnum #1\expandafter \@firstoftwo
 \else \expandafter \@secondoftwo
 \fi
}%
\providecommand \@ifx [1]{%
 \ifx #1\expandafter \@firstoftwo
 \else \expandafter \@secondoftwo
 \fi
}%
\providecommand \natexlab [1]{#1}%
\providecommand \enquote  [1]{``#1''}%
\providecommand \bibnamefont  [1]{#1}%
\providecommand \bibfnamefont [1]{#1}%
\providecommand \citenamefont [1]{#1}%
\providecommand \href@noop [0]{\@secondoftwo}%
\providecommand \href [0]{\begingroup \@sanitize@url \@href}%
\providecommand \@href[1]{\@@startlink{#1}\@@href}%
\providecommand \@@href[1]{\endgroup#1\@@endlink}%
\providecommand \@sanitize@url [0]{\catcode `\\12\catcode `\$12\catcode
  `\&12\catcode `\#12\catcode `\^12\catcode `\_12\catcode `\%12\relax}%
\providecommand \@@startlink[1]{}%
\providecommand \@@endlink[0]{}%
\providecommand \url  [0]{\begingroup\@sanitize@url \@url }%
\providecommand \@url [1]{\endgroup\@href {#1}{\urlprefix }}%
\providecommand \urlprefix  [0]{URL }%
\providecommand \Eprint [0]{\href }%
\providecommand \doibase [0]{http://dx.doi.org/}%
\providecommand \selectlanguage [0]{\@gobble}%
\providecommand \bibinfo  [0]{\@secondoftwo}%
\providecommand \bibfield  [0]{\@secondoftwo}%
\providecommand \translation [1]{[#1]}%
\providecommand \BibitemOpen [0]{}%
\providecommand \bibitemStop [0]{}%
\providecommand \bibitemNoStop [0]{.\EOS\space}%
\providecommand \EOS [0]{\spacefactor3000\relax}%
\providecommand \BibitemShut  [1]{\csname bibitem#1\endcsname}%
\let\auto@bib@innerbib\@empty
\bibitem [{\citenamefont {Bell}(1987)}]{Bell_book}%
  \BibitemOpen
  \bibfield  {author} {\bibinfo {author} {\bibfnamefont {J.~S.}\ \bibnamefont
  {Bell}},\ }\href {\doibase 10.1017/CBO9780511815676} {\emph {\bibinfo {title}
  {Speakable and unspeakable in quantum mechanics}}}\ (\bibinfo  {publisher}
  {Cambridge University Press},\ \bibinfo {year} {1987})\BibitemShut {NoStop}%
\bibitem [{\citenamefont {Brunner}\ \emph {et~al.}(2014)\citenamefont
  {Brunner}, \citenamefont {Cavalcanti}, \citenamefont {Pironio}, \citenamefont
  {Scarani},\ and\ \citenamefont {Wehner}}]{Brunner_review}%
  \BibitemOpen
  \bibfield  {author} {\bibinfo {author} {\bibfnamefont {N.}~\bibnamefont
  {Brunner}}, \bibinfo {author} {\bibfnamefont {D.}~\bibnamefont {Cavalcanti}},
  \bibinfo {author} {\bibfnamefont {S.}~\bibnamefont {Pironio}}, \bibinfo
  {author} {\bibfnamefont {V.}~\bibnamefont {Scarani}}, \ and\ \bibinfo
  {author} {\bibfnamefont {S.}~\bibnamefont {Wehner}},\ }\bibfield  {title}
  {\enquote {\bibinfo {title} {Bell nonlocality},}\ }\href {\doibase
  10.1103/revmodphys.86.419} {\bibfield  {journal} {\bibinfo  {journal}
  {Reviews of Modern Physics}\ }\textbf {\bibinfo {volume} {86}},\ \bibinfo
  {pages} {419–478} (\bibinfo {year} {2014})}\BibitemShut {NoStop}%
\bibitem [{\citenamefont {Mayers}\ and\ \citenamefont
  {Yao}(2004)}]{mayers2004self}%
  \BibitemOpen
  \bibfield  {author} {\bibinfo {author} {\bibfnamefont {D.}~\bibnamefont
  {Mayers}}\ and\ \bibinfo {author} {\bibfnamefont {A.}~\bibnamefont {Yao}},\
  }\href@noop {} {\enquote {\bibinfo {title} {Self testing quantum
  apparatus},}\ } (\bibinfo {year} {2004}),\ \Eprint
  {http://arxiv.org/abs/quant-ph/0307205} {arXiv:quant-ph/0307205 [quant-ph]}
  \BibitemShut {NoStop}%
\bibitem [{\citenamefont {McKague}\ \emph {et~al.}(2012)\citenamefont
  {McKague}, \citenamefont {Yang},\ and\ \citenamefont
  {Scarani}}]{McKagueSinglet}%
  \BibitemOpen
  \bibfield  {author} {\bibinfo {author} {\bibfnamefont {M.}~\bibnamefont
  {McKague}}, \bibinfo {author} {\bibfnamefont {T.~H.}\ \bibnamefont {Yang}}, \
  and\ \bibinfo {author} {\bibfnamefont {V.}~\bibnamefont {Scarani}},\
  }\bibfield  {title} {\enquote {\bibinfo {title} {Robust self-testing of the
  singlet},}\ }\href {\doibase 10.1088/1751-8113/45/45/455304} {\bibfield
  {journal} {\bibinfo  {journal} {Journal of Physics A: Mathematical and
  Theoretical}\ }\textbf {\bibinfo {volume} {45}},\ \bibinfo {pages} {455304}
  (\bibinfo {year} {2012})}\BibitemShut {NoStop}%
\bibitem [{\citenamefont {Yang}\ and\ \citenamefont
  {Navascués}(2013)}]{YangSelfTest}%
  \BibitemOpen
  \bibfield  {author} {\bibinfo {author} {\bibfnamefont {T.~H.}\ \bibnamefont
  {Yang}}\ and\ \bibinfo {author} {\bibfnamefont {M.}~\bibnamefont
  {Navascués}},\ }\bibfield  {title} {\enquote {\bibinfo {title} {Robust
  self-testing of unknown quantum systems into any entangled two-qubit
  states},}\ }\href {\doibase 10.1103/PhysRevA.87.050102} {\bibfield  {journal}
  {\bibinfo  {journal} {Physical Review A}\ }\textbf {\bibinfo {volume} {87}},\
  \bibinfo {pages} {050102} (\bibinfo {year} {2013})}\BibitemShut {NoStop}%
\bibitem [{\citenamefont {Kaniewski}(2017)}]{KaniewskiSelfTest}%
  \BibitemOpen
  \bibfield  {author} {\bibinfo {author} {\bibfnamefont {J.}~\bibnamefont
  {Kaniewski}},\ }\bibfield  {title} {\enquote {\bibinfo {title} {Self-testing
  of binary observables based on commutation},}\ }\href {\doibase
  10.1103/PhysRevA.95.062323} {\bibfield  {journal} {\bibinfo  {journal}
  {Physical Review A}\ }\textbf {\bibinfo {volume} {95}},\ \bibinfo {pages}
  {062323} (\bibinfo {year} {2017})}\BibitemShut {NoStop}%
\bibitem [{\citenamefont {{\v S}upi\'c}\ and\ \citenamefont
  {Bowles}(2020)}]{SupicSelfTest}%
  \BibitemOpen
  \bibfield  {author} {\bibinfo {author} {\bibfnamefont {I.}~\bibnamefont {{\v
  S}upi\'c}}\ and\ \bibinfo {author} {\bibfnamefont {J.}~\bibnamefont
  {Bowles}},\ }\bibfield  {title} {\enquote {\bibinfo {title} {Self-testing of
  quantum systems: a review},}\ }\href {\doibase 10.22331/q-2020-09-30-337}
  {\bibfield  {journal} {\bibinfo  {journal} {Quantum}\ }\textbf {\bibinfo
  {volume} {4}},\ \bibinfo {pages} {337} (\bibinfo {year} {2020})}\BibitemShut
  {NoStop}%
\bibitem [{\citenamefont {Barrett}\ \emph
  {et~al.}(2005{\natexlab{a}})\citenamefont {Barrett}, \citenamefont {Linden},
  \citenamefont {Massar}, \citenamefont {Pironio}, \citenamefont {Popescu},\
  and\ \citenamefont {Roberts}}]{BarrettNonlocalResource}%
  \BibitemOpen
  \bibfield  {author} {\bibinfo {author} {\bibfnamefont {J.}~\bibnamefont
  {Barrett}}, \bibinfo {author} {\bibfnamefont {N.}~\bibnamefont {Linden}},
  \bibinfo {author} {\bibfnamefont {S.}~\bibnamefont {Massar}}, \bibinfo
  {author} {\bibfnamefont {S.}~\bibnamefont {Pironio}}, \bibinfo {author}
  {\bibfnamefont {S.}~\bibnamefont {Popescu}}, \ and\ \bibinfo {author}
  {\bibfnamefont {D.}~\bibnamefont {Roberts}},\ }\bibfield  {title} {\enquote
  {\bibinfo {title} {Nonlocal correlations as an information-theoretic
  resource},}\ }\href {\doibase 10.1103/PhysRevA.71.022101} {\bibfield
  {journal} {\bibinfo  {journal} {Physical Review A}\ }\textbf {\bibinfo
  {volume} {71}},\ \bibinfo {pages} {022101} (\bibinfo {year}
  {2005}{\natexlab{a}})}\BibitemShut {NoStop}%
\bibitem [{\citenamefont {Colbeck}(2007)}]{ColbeckThesis}%
  \BibitemOpen
  \bibfield  {author} {\bibinfo {author} {\bibfnamefont {R.}~\bibnamefont
  {Colbeck}},\ }\emph {\bibinfo {title} {Quantum and Relativistic Protocols For
  Secure Multi-Party Computation}},\ \href@noop {} {Ph.D. thesis},\ \bibinfo
  {school} {University of Cambridge} (\bibinfo {year} {2007}),\ \bibinfo {note}
  {also available as
  \href{https://arxiv.org/abs/0911.3814}{arXiv:0911.3814}.}\BibitemShut {Stop}%
\bibitem [{\citenamefont {Pironio}\ \emph {et~al.}(2010)\citenamefont
  {Pironio}, \citenamefont {Acin}, \citenamefont {Massar}, \citenamefont
  {{Boyer de la Giroday}}, \citenamefont {Matsukevich}, \citenamefont {Maunz},
  \citenamefont {Olmschenk}, \citenamefont {Hayes}, \citenamefont {Luo},
  \citenamefont {Manning},\ and\ \citenamefont {Monroe}}]{PAMBMMOHLMM}%
  \BibitemOpen
  \bibfield  {author} {\bibinfo {author} {\bibfnamefont {S.}~\bibnamefont
  {Pironio}}, \bibinfo {author} {\bibfnamefont {A.}~\bibnamefont {Acin}},
  \bibinfo {author} {\bibfnamefont {S.}~\bibnamefont {Massar}}, \bibinfo
  {author} {\bibfnamefont {A.}~\bibnamefont {{Boyer de la Giroday}}}, \bibinfo
  {author} {\bibfnamefont {D.~N.}\ \bibnamefont {Matsukevich}}, \bibinfo
  {author} {\bibfnamefont {P.}~\bibnamefont {Maunz}}, \bibinfo {author}
  {\bibfnamefont {S.}~\bibnamefont {Olmschenk}}, \bibinfo {author}
  {\bibfnamefont {D.}~\bibnamefont {Hayes}}, \bibinfo {author} {\bibfnamefont
  {L.}~\bibnamefont {Luo}}, \bibinfo {author} {\bibfnamefont {T.~A.}\
  \bibnamefont {Manning}}, \ and\ \bibinfo {author} {\bibfnamefont
  {C.}~\bibnamefont {Monroe}},\ }\bibfield  {title} {\enquote {\bibinfo {title}
  {Random numbers certified by {B}ell's theorem},}\ }\href {\doibase
  10.1038/nature09008} {\bibfield  {journal} {\bibinfo  {journal} {Nature}\
  }\textbf {\bibinfo {volume} {464}},\ \bibinfo {pages} {1021--1024} (\bibinfo
  {year} {2010})}\BibitemShut {NoStop}%
\bibitem [{\citenamefont {Colbeck}\ and\ \citenamefont {Kent}(2011)}]{CK2}%
  \BibitemOpen
  \bibfield  {author} {\bibinfo {author} {\bibfnamefont {R.}~\bibnamefont
  {Colbeck}}\ and\ \bibinfo {author} {\bibfnamefont {A.}~\bibnamefont {Kent}},\
  }\bibfield  {title} {\enquote {\bibinfo {title} {Private randomness expansion
  with untrusted devices},}\ }\href {\doibase 10.1088/1751-8113/44/9/095305}
  {\bibfield  {journal} {\bibinfo  {journal} {Journal of Physics A}\ }\textbf
  {\bibinfo {volume} {44}},\ \bibinfo {pages} {095305} (\bibinfo {year}
  {2011})}\BibitemShut {NoStop}%
\bibitem [{\citenamefont {Miller}\ and\ \citenamefont {Shi}(2014)}]{MS1}%
  \BibitemOpen
  \bibfield  {author} {\bibinfo {author} {\bibfnamefont {C.~A.}\ \bibnamefont
  {Miller}}\ and\ \bibinfo {author} {\bibfnamefont {Y.}~\bibnamefont {Shi}},\
  }\bibfield  {title} {\enquote {\bibinfo {title} {Robust protocols for
  securely expanding randomness and distributing keys using untrusted quantum
  devices},}\ }in\ \href {\doibase 10.1145/2591796.2591843} {\emph {\bibinfo
  {booktitle} {Proceedings of the 46th Annual ACM Symposium on Theory of
  Computing}}},\ \bibinfo {series and number} {STOC '14}\ (\bibinfo
  {publisher} {ACM},\ \bibinfo {address} {New York, NY, USA},\ \bibinfo {year}
  {2014})\ pp.\ \bibinfo {pages} {417--426}\BibitemShut {NoStop}%
\bibitem [{\citenamefont {Miller}\ and\ \citenamefont
  {Shi}(2017{\natexlab{a}})}]{MS2}%
  \BibitemOpen
  \bibfield  {author} {\bibinfo {author} {\bibfnamefont {C.~A.}\ \bibnamefont
  {Miller}}\ and\ \bibinfo {author} {\bibfnamefont {Y.}~\bibnamefont {Shi}},\
  }\bibfield  {title} {\enquote {\bibinfo {title} {Universal security for
  randomness expansion from the spot-checking protocol},}\ }\href {\doibase
  10.1137/15M1044333} {\bibfield  {journal} {\bibinfo  {journal} {SIAM Journal
  of Computing}\ }\textbf {\bibinfo {volume} {46}},\ \bibinfo {pages}
  {1304--1335} (\bibinfo {year} {2017}{\natexlab{a}})}\BibitemShut {NoStop}%
\bibitem [{\citenamefont {Colbeck}\ and\ \citenamefont
  {Renner}(2012)}]{CR_free}%
  \BibitemOpen
  \bibfield  {author} {\bibinfo {author} {\bibfnamefont {R.}~\bibnamefont
  {Colbeck}}\ and\ \bibinfo {author} {\bibfnamefont {R.}~\bibnamefont
  {Renner}},\ }\bibfield  {title} {\enquote {\bibinfo {title} {Free randomness
  can be amplified},}\ }\href {\doibase 10.1038/ncomms2300} {\bibfield
  {journal} {\bibinfo  {journal} {Nature Physics}\ }\textbf {\bibinfo {volume}
  {8}},\ \bibinfo {pages} {450--454} (\bibinfo {year} {2012})}\BibitemShut
  {NoStop}%
\bibitem [{\citenamefont {Ekert}(1991)}]{Ekert}%
  \BibitemOpen
  \bibfield  {author} {\bibinfo {author} {\bibfnamefont {A.~K.}\ \bibnamefont
  {Ekert}},\ }\bibfield  {title} {\enquote {\bibinfo {title} {Quantum
  cryptography based on {B}ell's theorem},}\ }\href {\doibase
  10.1103/PhysRevLett.67.661} {\bibfield  {journal} {\bibinfo  {journal}
  {Physical Review Letters}\ }\textbf {\bibinfo {volume} {67}},\ \bibinfo
  {pages} {661--663} (\bibinfo {year} {1991})}\BibitemShut {NoStop}%
\bibitem [{\citenamefont {Barrett}\ \emph
  {et~al.}(2005{\natexlab{b}})\citenamefont {Barrett}, \citenamefont {Hardy},\
  and\ \citenamefont {Kent}}]{BHK}%
  \BibitemOpen
  \bibfield  {author} {\bibinfo {author} {\bibfnamefont {J.}~\bibnamefont
  {Barrett}}, \bibinfo {author} {\bibfnamefont {L.}~\bibnamefont {Hardy}}, \
  and\ \bibinfo {author} {\bibfnamefont {A.}~\bibnamefont {Kent}},\ }\bibfield
  {title} {\enquote {\bibinfo {title} {No signalling and quantum key
  distribution},}\ }\href {\doibase 10.1103/PhysRevLett.95.010503} {\bibfield
  {journal} {\bibinfo  {journal} {Physical Review Letters}\ }\textbf {\bibinfo
  {volume} {95}},\ \bibinfo {pages} {010503} (\bibinfo {year}
  {2005}{\natexlab{b}})}\BibitemShut {NoStop}%
\bibitem [{\citenamefont {Acin}\ \emph {et~al.}(2007)\citenamefont {Acin},
  \citenamefont {Brunner}, \citenamefont {Gisin}, \citenamefont {Massar},
  \citenamefont {Pironio},\ and\ \citenamefont {Scarani}}]{ABGMPS}%
  \BibitemOpen
  \bibfield  {author} {\bibinfo {author} {\bibfnamefont {A.}~\bibnamefont
  {Acin}}, \bibinfo {author} {\bibfnamefont {N.}~\bibnamefont {Brunner}},
  \bibinfo {author} {\bibfnamefont {N.}~\bibnamefont {Gisin}}, \bibinfo
  {author} {\bibfnamefont {S.}~\bibnamefont {Massar}}, \bibinfo {author}
  {\bibfnamefont {S.}~\bibnamefont {Pironio}}, \ and\ \bibinfo {author}
  {\bibfnamefont {V.}~\bibnamefont {Scarani}},\ }\bibfield  {title} {\enquote
  {\bibinfo {title} {Device-independent security of quantum cryptography
  against collective attacks},}\ }\href {\doibase
  10.1103/PhysRevLett.98.230501} {\bibfield  {journal} {\bibinfo  {journal}
  {Physical Review Letters}\ }\textbf {\bibinfo {volume} {98}},\ \bibinfo
  {pages} {230501} (\bibinfo {year} {2007})}\BibitemShut {NoStop}%
\bibitem [{\citenamefont {Pironio}\ \emph {et~al.}(2009)\citenamefont
  {Pironio}, \citenamefont {Acin}, \citenamefont {Brunner}, \citenamefont
  {Gisin}, \citenamefont {Massar},\ and\ \citenamefont {Scarani}}]{PABGMS}%
  \BibitemOpen
  \bibfield  {author} {\bibinfo {author} {\bibfnamefont {S.}~\bibnamefont
  {Pironio}}, \bibinfo {author} {\bibfnamefont {A.}~\bibnamefont {Acin}},
  \bibinfo {author} {\bibfnamefont {N.}~\bibnamefont {Brunner}}, \bibinfo
  {author} {\bibfnamefont {N.}~\bibnamefont {Gisin}}, \bibinfo {author}
  {\bibfnamefont {S.}~\bibnamefont {Massar}}, \ and\ \bibinfo {author}
  {\bibfnamefont {V.}~\bibnamefont {Scarani}},\ }\bibfield  {title} {\enquote
  {\bibinfo {title} {Device-independent quantum key distribution secure against
  collective attacks},}\ }\href {\doibase 10.1088/1367-2630/11/4/045021}
  {\bibfield  {journal} {\bibinfo  {journal} {New Journal of Physics}\ }\textbf
  {\bibinfo {volume} {11}},\ \bibinfo {pages} {045021} (\bibinfo {year}
  {2009})}\BibitemShut {NoStop}%
\bibitem [{\citenamefont {Vazirani}\ and\ \citenamefont {Vidick}(2014)}]{VV2}%
  \BibitemOpen
  \bibfield  {author} {\bibinfo {author} {\bibfnamefont {U.}~\bibnamefont
  {Vazirani}}\ and\ \bibinfo {author} {\bibfnamefont {T.}~\bibnamefont
  {Vidick}},\ }\bibfield  {title} {\enquote {\bibinfo {title} {Fully
  device-independent quantum key distribution},}\ }\href {\doibase
  10.1103/PhysRevLett.113.140501} {\bibfield  {journal} {\bibinfo  {journal}
  {Physical Review Letters}\ }\textbf {\bibinfo {volume} {113}},\ \bibinfo
  {pages} {140501} (\bibinfo {year} {2014})}\BibitemShut {NoStop}%
\bibitem [{\citenamefont {Arnon-Friedman}\ \emph {et~al.}(2018)\citenamefont
  {Arnon-Friedman}, \citenamefont {Dupuis}, \citenamefont {Fawzi},
  \citenamefont {Renner},\ and\ \citenamefont {Vidick}}]{ADFRV}%
  \BibitemOpen
  \bibfield  {author} {\bibinfo {author} {\bibfnamefont {R.}~\bibnamefont
  {Arnon-Friedman}}, \bibinfo {author} {\bibfnamefont {F.}~\bibnamefont
  {Dupuis}}, \bibinfo {author} {\bibfnamefont {O.}~\bibnamefont {Fawzi}},
  \bibinfo {author} {\bibfnamefont {R.}~\bibnamefont {Renner}}, \ and\ \bibinfo
  {author} {\bibfnamefont {T.}~\bibnamefont {Vidick}},\ }\bibfield  {title}
  {\enquote {\bibinfo {title} {Practical device-independent quantum
  cryptography via entropy accumulation},}\ }\href {\doibase
  10.1038/s41467-017-02307-4} {\bibfield  {journal} {\bibinfo  {journal}
  {Nature communications}\ }\textbf {\bibinfo {volume} {9}},\ \bibinfo {pages}
  {459} (\bibinfo {year} {2018})}\BibitemShut {NoStop}%
\bibitem [{\citenamefont {de~la Torre}\ \emph {et~al.}(2015)\citenamefont
  {de~la Torre}, \citenamefont {Hoban}, \citenamefont {Dhara}, \citenamefont
  {Prettico},\ and\ \citenamefont {Ac\'in}}]{delaTorreMaxNonlocal}%
  \BibitemOpen
  \bibfield  {author} {\bibinfo {author} {\bibfnamefont {G.}~\bibnamefont
  {de~la Torre}}, \bibinfo {author} {\bibfnamefont {M.~J.}\ \bibnamefont
  {Hoban}}, \bibinfo {author} {\bibfnamefont {C.}~\bibnamefont {Dhara}},
  \bibinfo {author} {\bibfnamefont {G.}~\bibnamefont {Prettico}}, \ and\
  \bibinfo {author} {\bibfnamefont {A.}~\bibnamefont {Ac\'in}},\ }\bibfield
  {title} {\enquote {\bibinfo {title} {Maximally nonlocal theories cannot be
  maximally random},}\ }\href {\doibase 10.1103/physrevlett.114.160502}
  {\bibfield  {journal} {\bibinfo  {journal} {Physical Review Letters}\
  }\textbf {\bibinfo {volume} {114}},\ \bibinfo {pages} {160502} (\bibinfo
  {year} {2015})}\BibitemShut {NoStop}%
\bibitem [{\citenamefont {Ac\'in}\ \emph {et~al.}(2012)\citenamefont {Ac\'in},
  \citenamefont {Massar},\ and\ \citenamefont
  {Pironio}}]{AcinRandomnessNonlocality}%
  \BibitemOpen
  \bibfield  {author} {\bibinfo {author} {\bibfnamefont {A.}~\bibnamefont
  {Ac\'in}}, \bibinfo {author} {\bibfnamefont {S.}~\bibnamefont {Massar}}, \
  and\ \bibinfo {author} {\bibfnamefont {S.}~\bibnamefont {Pironio}},\
  }\bibfield  {title} {\enquote {\bibinfo {title} {Randomness versus
  nonlocality and entanglement},}\ }\href {\doibase
  10.1103/PhysRevLett.108.100402} {\bibfield  {journal} {\bibinfo  {journal}
  {Physical Review Letters}\ }\textbf {\bibinfo {volume} {108}},\ \bibinfo
  {pages} {100402} (\bibinfo {year} {2012})}\BibitemShut {NoStop}%
\bibitem [{\citenamefont {Dhara}\ \emph {et~al.}(2013)\citenamefont {Dhara},
  \citenamefont {Prettico},\ and\ \citenamefont {Ac\'in}}]{DharaMaxRand}%
  \BibitemOpen
  \bibfield  {author} {\bibinfo {author} {\bibfnamefont {C.}~\bibnamefont
  {Dhara}}, \bibinfo {author} {\bibfnamefont {G.}~\bibnamefont {Prettico}}, \
  and\ \bibinfo {author} {\bibfnamefont {A.}~\bibnamefont {Ac\'in}},\
  }\bibfield  {title} {\enquote {\bibinfo {title} {Maximal quantum randomness
  in {B}ell tests},}\ }\href {\doibase 10.1103/PhysRevA.88.052116} {\bibfield
  {journal} {\bibinfo  {journal} {Physical Review A}\ }\textbf {\bibinfo
  {volume} {88}},\ \bibinfo {pages} {052116} (\bibinfo {year}
  {2013})}\BibitemShut {NoStop}%
\bibitem [{\citenamefont {Ac\'in}\ \emph {et~al.}(2016)\citenamefont {Ac\'in},
  \citenamefont {Pironio}, \citenamefont {V\'ertesi},\ and\ \citenamefont
  {Wittek}}]{AcinOptimalEbit}%
  \BibitemOpen
  \bibfield  {author} {\bibinfo {author} {\bibfnamefont {A.}~\bibnamefont
  {Ac\'in}}, \bibinfo {author} {\bibfnamefont {S.}~\bibnamefont {Pironio}},
  \bibinfo {author} {\bibfnamefont {T.}~\bibnamefont {V\'ertesi}}, \ and\
  \bibinfo {author} {\bibfnamefont {P.}~\bibnamefont {Wittek}},\ }\bibfield
  {title} {\enquote {\bibinfo {title} {Optimal randomness certification from
  one entangled bit},}\ }\href {\doibase 10.1103/PhysRevA.93.040102} {\bibfield
   {journal} {\bibinfo  {journal} {Physical Review A}\ }\textbf {\bibinfo
  {volume} {93}},\ \bibinfo {pages} {040102} (\bibinfo {year}
  {2016})}\BibitemShut {NoStop}%
\bibitem [{\citenamefont {Law}\ \emph {et~al.}(2014)\citenamefont {Law},
  \citenamefont {Thinh}, \citenamefont {Bancal},\ and\ \citenamefont
  {Scarani}}]{Law_2014}%
  \BibitemOpen
  \bibfield  {author} {\bibinfo {author} {\bibfnamefont {Y.~Z.}\ \bibnamefont
  {Law}}, \bibinfo {author} {\bibfnamefont {L.~P.}\ \bibnamefont {Thinh}},
  \bibinfo {author} {\bibfnamefont {J.-D.}\ \bibnamefont {Bancal}}, \ and\
  \bibinfo {author} {\bibfnamefont {V.}~\bibnamefont {Scarani}},\ }\bibfield
  {title} {\enquote {\bibinfo {title} {Quantum randomness extraction for
  various levels of characterization of the devices},}\ }\href {\doibase
  10.1088/1751-8113/47/42/424028} {\bibfield  {journal} {\bibinfo  {journal}
  {Journal of Physics A: Mathematical and Theoretical}\ }\textbf {\bibinfo
  {volume} {47}},\ \bibinfo {pages} {424028} (\bibinfo {year}
  {2014})}\BibitemShut {NoStop}%
\bibitem [{\citenamefont {Andersson}\ \emph {et~al.}(2018)\citenamefont
  {Andersson}, \citenamefont {Badzi{\k{a}}g}, \citenamefont {Dumitru},\ and\
  \citenamefont {Cabello}}]{Andersson18}%
  \BibitemOpen
  \bibfield  {author} {\bibinfo {author} {\bibfnamefont {O.}~\bibnamefont
  {Andersson}}, \bibinfo {author} {\bibfnamefont {P.}~\bibnamefont
  {Badzi{\k{a}}g}}, \bibinfo {author} {\bibfnamefont {I.}~\bibnamefont
  {Dumitru}}, \ and\ \bibinfo {author} {\bibfnamefont {A.}~\bibnamefont
  {Cabello}},\ }\bibfield  {title} {\enquote {\bibinfo {title}
  {Device-independent certification of two bits of randomness from one
  entangled bit and gisin's elegant bell inequality},}\ }\href {\doibase
  10.1103/PhysRevA.97.012314} {\bibfield  {journal} {\bibinfo  {journal} {Phys.
  Rev. A}\ }\textbf {\bibinfo {volume} {97}},\ \bibinfo {pages} {012314}
  (\bibinfo {year} {2018})}\BibitemShut {NoStop}%
\bibitem [{\citenamefont {Brown}\ \emph {et~al.}(2020)\citenamefont {Brown},
  \citenamefont {Ragy},\ and\ \citenamefont {Colbeck}}]{BRC}%
  \BibitemOpen
  \bibfield  {author} {\bibinfo {author} {\bibfnamefont {P.~J.}\ \bibnamefont
  {Brown}}, \bibinfo {author} {\bibfnamefont {S.}~\bibnamefont {Ragy}}, \ and\
  \bibinfo {author} {\bibfnamefont {R.}~\bibnamefont {Colbeck}},\ }\bibfield
  {title} {\enquote {\bibinfo {title} {A framework for quantum-secure
  device-independent randomness expansion},}\ }\href {\doibase
  10.1109/tit.2019.2960252} {\bibfield  {journal} {\bibinfo  {journal} {IEEE
  Transactions on Information Theory}\ }\textbf {\bibinfo {volume} {66}},\
  \bibinfo {pages} {2964–2987} (\bibinfo {year} {2020})}\BibitemShut
  {NoStop}%
\bibitem [{\citenamefont {Woodhead}\ \emph {et~al.}(2020)\citenamefont
  {Woodhead}, \citenamefont {Kaniewski}, \citenamefont {Bourdoncle},
  \citenamefont {Salavrakos}, \citenamefont {Bowles}, \citenamefont {Ac\'in},\
  and\ \citenamefont {Augusiak}}]{WoodheadMaxRandomness}%
  \BibitemOpen
  \bibfield  {author} {\bibinfo {author} {\bibfnamefont {E.}~\bibnamefont
  {Woodhead}}, \bibinfo {author} {\bibfnamefont {J.}~\bibnamefont {Kaniewski}},
  \bibinfo {author} {\bibfnamefont {B.}~\bibnamefont {Bourdoncle}}, \bibinfo
  {author} {\bibfnamefont {A.}~\bibnamefont {Salavrakos}}, \bibinfo {author}
  {\bibfnamefont {J.}~\bibnamefont {Bowles}}, \bibinfo {author} {\bibfnamefont
  {A.}~\bibnamefont {Ac\'in}}, \ and\ \bibinfo {author} {\bibfnamefont
  {R.}~\bibnamefont {Augusiak}},\ }\bibfield  {title} {\enquote {\bibinfo
  {title} {Maximal randomness from partially entangled states},}\ }\href
  {\doibase 10.1103/PhysRevResearch.2.042028} {\bibfield  {journal} {\bibinfo
  {journal} {Physical Review Research}\ }\textbf {\bibinfo {volume} {2}},\
  \bibinfo {pages} {042028} (\bibinfo {year} {2020})}\BibitemShut {NoStop}%
\bibitem [{\citenamefont {Bamps}\ and\ \citenamefont
  {Pironio}(2015)}]{BampsPironio}%
  \BibitemOpen
  \bibfield  {author} {\bibinfo {author} {\bibfnamefont {C.}~\bibnamefont
  {Bamps}}\ and\ \bibinfo {author} {\bibfnamefont {S.}~\bibnamefont
  {Pironio}},\ }\bibfield  {title} {\enquote {\bibinfo {title} {Sum-of-squares
  decompositions for a family of {C}lauser-{H}orne-{S}himony-{H}olt-like
  inequalities and their application to self-testing},}\ }\href {\doibase
  10.1103/PhysRevA.91.052111} {\bibfield  {journal} {\bibinfo  {journal}
  {Physical Review A}\ }\textbf {\bibinfo {volume} {91}},\ \bibinfo {pages}
  {052111} (\bibinfo {year} {2015})}\BibitemShut {NoStop}%
\bibitem [{\citenamefont {Bhavsar}\ \emph {et~al.}(2021)\citenamefont
  {Bhavsar}, \citenamefont {Ragy},\ and\ \citenamefont {Colbeck}}]{BhavsarDI}%
  \BibitemOpen
  \bibfield  {author} {\bibinfo {author} {\bibfnamefont {R.}~\bibnamefont
  {Bhavsar}}, \bibinfo {author} {\bibfnamefont {S.}~\bibnamefont {Ragy}}, \
  and\ \bibinfo {author} {\bibfnamefont {R.}~\bibnamefont {Colbeck}},\
  }\href@noop {} {\enquote {\bibinfo {title} {Improved device-independent
  randomness expansion rates from tight bounds on the two sided randomness
  using {CHSH} tests},}\ } (\bibinfo {year} {2021}),\ \Eprint
  {http://arxiv.org/abs/2103.07504} {arXiv:2103.07504 [quant-ph]} \BibitemShut
  {NoStop}%
\bibitem [{\citenamefont {Brown}\ \emph {et~al.}(2021)\citenamefont {Brown},
  \citenamefont {Fawzi},\ and\ \citenamefont
  {Fawzi}}]{BrownDeviceIndependent2}%
  \BibitemOpen
  \bibfield  {author} {\bibinfo {author} {\bibfnamefont {P.}~\bibnamefont
  {Brown}}, \bibinfo {author} {\bibfnamefont {H.}~\bibnamefont {Fawzi}}, \ and\
  \bibinfo {author} {\bibfnamefont {O.}~\bibnamefont {Fawzi}},\ }\href@noop {}
  {\enquote {\bibinfo {title} {Device-independent lower bounds on the
  conditional von {N}eumann entropy},}\ } (\bibinfo {year} {2021}),\ \Eprint
  {http://arxiv.org/abs/2106.13692} {arXiv:2106.13692 [quant-ph]} \BibitemShut
  {NoStop}%
\bibitem [{\citenamefont {Werner}(1989)}]{Werner}%
  \BibitemOpen
  \bibfield  {author} {\bibinfo {author} {\bibfnamefont {R.~F.}\ \bibnamefont
  {Werner}},\ }\bibfield  {title} {\enquote {\bibinfo {title} {Quantum states
  with {E}instein-{P}odolsky-{R}osen correlations admitting a hidden-variable
  model},}\ }\href {\doibase 10.1103/PhysRevA.40.4277} {\bibfield  {journal}
  {\bibinfo  {journal} {Physical Review A}\ }\textbf {\bibinfo {volume} {40}},\
  \bibinfo {pages} {4277--4281} (\bibinfo {year} {1989})}\BibitemShut {NoStop}%
\bibitem [{\citenamefont {Paulsen}(2003)}]{paulsen_2003}%
  \BibitemOpen
  \bibfield  {author} {\bibinfo {author} {\bibfnamefont {V.}~\bibnamefont
  {Paulsen}},\ }\href {\doibase 10.1017/CBO9780511546631} {\emph {\bibinfo
  {title} {Completely Bounded Maps and Operator Algebras}}},\ Cambridge Studies
  in Advanced Mathematics\ (\bibinfo  {publisher} {Cambridge University
  Press},\ \bibinfo {year} {2003})\BibitemShut {NoStop}%
\bibitem [{\citenamefont {Cirel'son}(1980)}]{Cirelson}%
  \BibitemOpen
  \bibfield  {author} {\bibinfo {author} {\bibfnamefont {B.}~\bibnamefont
  {Cirel'son}},\ }\bibfield  {title} {\enquote {\bibinfo {title} {Quantum
  generalizations of {B}ell's inequality},}\ }\href {\doibase
  10.1007/BF00417500} {\bibfield  {journal} {\bibinfo  {journal} {Letters in
  Mathematical Physics}\ }\textbf {\bibinfo {volume} {4}},\ \bibinfo {pages}
  {93--100} (\bibinfo {year} {1980})}\BibitemShut {NoStop}%
\bibitem [{\citenamefont {Popescu}\ and\ \citenamefont
  {Rohrlich}(1992)}]{PopescuRohrlich}%
  \BibitemOpen
  \bibfield  {author} {\bibinfo {author} {\bibfnamefont {S.}~\bibnamefont
  {Popescu}}\ and\ \bibinfo {author} {\bibfnamefont {D.}~\bibnamefont
  {Rohrlich}},\ }\bibfield  {title} {\enquote {\bibinfo {title} {Which states
  violate {B}ell's inequality maximally?}}\ }\href {\doibase
  10.1016/0375-9601(92)90819-8} {\bibfield  {journal} {\bibinfo  {journal}
  {Physics Letters A}\ }\textbf {\bibinfo {volume} {169}},\ \bibinfo {pages}
  {411--414} (\bibinfo {year} {1992})}\BibitemShut {NoStop}%
\bibitem [{\citenamefont {Bardyn}\ \emph {et~al.}(2009)\citenamefont {Bardyn},
  \citenamefont {Liew}, \citenamefont {Massar}, \citenamefont {McKague},\ and\
  \citenamefont {Scarani}}]{BardynSelfTest}%
  \BibitemOpen
  \bibfield  {author} {\bibinfo {author} {\bibfnamefont {C.-E.}\ \bibnamefont
  {Bardyn}}, \bibinfo {author} {\bibfnamefont {T.~C.~H.}\ \bibnamefont {Liew}},
  \bibinfo {author} {\bibfnamefont {S.}~\bibnamefont {Massar}}, \bibinfo
  {author} {\bibfnamefont {M.}~\bibnamefont {McKague}}, \ and\ \bibinfo
  {author} {\bibfnamefont {V.}~\bibnamefont {Scarani}},\ }\bibfield  {title}
  {\enquote {\bibinfo {title} {Device-independent state estimation based on
  {B}ell's inequalities},}\ }\href {\doibase 10.1103/PhysRevA.80.062327}
  {\bibfield  {journal} {\bibinfo  {journal} {Physical Review A}\ }\textbf
  {\bibinfo {volume} {80}},\ \bibinfo {pages} {062327} (\bibinfo {year}
  {2009})}\BibitemShut {NoStop}%
\bibitem [{\citenamefont {Dupuis}\ \emph {et~al.}(2016)\citenamefont {Dupuis},
  \citenamefont {Fawzi},\ and\ \citenamefont {Renner}}]{DFR}%
  \BibitemOpen
  \bibfield  {author} {\bibinfo {author} {\bibfnamefont {F.}~\bibnamefont
  {Dupuis}}, \bibinfo {author} {\bibfnamefont {O.}~\bibnamefont {Fawzi}}, \
  and\ \bibinfo {author} {\bibfnamefont {R.}~\bibnamefont {Renner}},\
  }\href@noop {} {\enquote {\bibinfo {title} {Entropy accumulation},}\
  }\bibinfo {howpublished} {e-print
  \href{https://arxiv.org/abs/1607.01796}{arXiv:1607.01796}} (\bibinfo {year}
  {2016})\BibitemShut {NoStop}%
\bibitem [{\citenamefont {Dupuis}\ and\ \citenamefont {Fawzi}(2019)}]{EAT2}%
  \BibitemOpen
  \bibfield  {author} {\bibinfo {author} {\bibfnamefont {F.}~\bibnamefont
  {Dupuis}}\ and\ \bibinfo {author} {\bibfnamefont {O.}~\bibnamefont {Fawzi}},\
  }\bibfield  {title} {\enquote {\bibinfo {title} {Entropy accumulation with
  improved second-order term},}\ }\href {\doibase 10.1109/tit.2019.2929564}
  {\bibfield  {journal} {\bibinfo  {journal} {{IEEE} Transactions on
  Information Theory}\ }\textbf {\bibinfo {volume} {65}},\ \bibinfo {pages}
  {7596--7612} (\bibinfo {year} {2019})}\BibitemShut {NoStop}%
\bibitem [{\citenamefont {Liu}\ \emph {et~al.}(2021)\citenamefont {Liu},
  \citenamefont {Li}, \citenamefont {Ragy}, \citenamefont {Zhao}, \citenamefont
  {Bai}, \citenamefont {Liu}, \citenamefont {Brown}, \citenamefont {Zhang},
  \citenamefont {Colbeck}, \citenamefont {Fan}, \citenamefont {Zhang},\ and\
  \citenamefont {Pan}}]{LLR&}%
  \BibitemOpen
  \bibfield  {author} {\bibinfo {author} {\bibfnamefont {W.-Z.}\ \bibnamefont
  {Liu}}, \bibinfo {author} {\bibfnamefont {M.-H.}\ \bibnamefont {Li}},
  \bibinfo {author} {\bibfnamefont {S.}~\bibnamefont {Ragy}}, \bibinfo {author}
  {\bibfnamefont {S.-R.}\ \bibnamefont {Zhao}}, \bibinfo {author}
  {\bibfnamefont {B.}~\bibnamefont {Bai}}, \bibinfo {author} {\bibfnamefont
  {Y.}~\bibnamefont {Liu}}, \bibinfo {author} {\bibfnamefont {P.~J.}\
  \bibnamefont {Brown}}, \bibinfo {author} {\bibfnamefont {J.}~\bibnamefont
  {Zhang}}, \bibinfo {author} {\bibfnamefont {R.}~\bibnamefont {Colbeck}},
  \bibinfo {author} {\bibfnamefont {J.}~\bibnamefont {Fan}}, \bibinfo {author}
  {\bibfnamefont {Q.}~\bibnamefont {Zhang}}, \ and\ \bibinfo {author}
  {\bibfnamefont {J.-W.}\ \bibnamefont {Pan}},\ }\bibfield  {title} {\enquote
  {\bibinfo {title} {Device-independent randomness expansion against quantum
  side information},}\ }\href {\doibase 10.1038/s41567-020-01147-2} {\bibfield
  {journal} {\bibinfo  {journal} {Nature Physics}\ }\textbf {\bibinfo {volume}
  {17}},\ \bibinfo {pages} {448--451} (\bibinfo {year} {2021})}\BibitemShut
  {NoStop}%
\bibitem [{\citenamefont {Wooltorton}\ \emph {et~al.}()\citenamefont
  {Wooltorton}, \citenamefont {Brown},\ and\ \citenamefont {Colbeck}}]{supp}%
  \BibitemOpen
  \bibfield  {author} {\bibinfo {author} {\bibfnamefont {L.}~\bibnamefont
  {Wooltorton}}, \bibinfo {author} {\bibfnamefont {P.}~\bibnamefont {Brown}}, \
  and\ \bibinfo {author} {\bibfnamefont {R.}~\bibnamefont {Colbeck}},\
  }\href@noop {} {}\bibinfo {note} {Supplemental Material containing proofs of
  the propositions and including additional
  references~\cite{Jordan,GohGeometry,SekatskiBuilidngBlocks,ValcarceSelfTest}}\BibitemShut
  {NoStop}%
\bibitem [{\citenamefont {Miller}\ and\ \citenamefont
  {Shi}(2017{\natexlab{b}})}]{MillerBlind}%
  \BibitemOpen
  \bibfield  {author} {\bibinfo {author} {\bibfnamefont {C.~A.}\ \bibnamefont
  {Miller}}\ and\ \bibinfo {author} {\bibfnamefont {Y.}~\bibnamefont {Shi}},\
  }\bibfield  {title} {\enquote {\bibinfo {title} {Randomness in nonlocal games
  between mistrustful players},}\ }\href
  {https://pubmed.ncbi.nlm.nih.gov/29643748} {\bibfield  {journal} {\bibinfo
  {journal} {Quantum information {\&} computation}\ }\textbf {\bibinfo {volume}
  {17}},\ \bibinfo {pages} {595--610} (\bibinfo {year}
  {2017}{\natexlab{b}})}\BibitemShut {NoStop}%
\bibitem [{\citenamefont {Fu}\ and\ \citenamefont
  {Miller}(2018)}]{HonghaoBlind}%
  \BibitemOpen
  \bibfield  {author} {\bibinfo {author} {\bibfnamefont {H.}~\bibnamefont
  {Fu}}\ and\ \bibinfo {author} {\bibfnamefont {C.~A.}\ \bibnamefont
  {Miller}},\ }\bibfield  {title} {\enquote {\bibinfo {title} {Local
  randomness: Examples and application},}\ }\href {\doibase
  10.1103/PhysRevA.97.032324} {\bibfield  {journal} {\bibinfo  {journal}
  {Physical Review A}\ }\textbf {\bibinfo {volume} {97}},\ \bibinfo {pages}
  {032324} (\bibinfo {year} {2018})}\BibitemShut {NoStop}%
\bibitem [{\citenamefont {Metger}\ \emph {et~al.}(2022)\citenamefont {Metger},
  \citenamefont {Fawzi}, \citenamefont {Sutter},\ and\ \citenamefont
  {Renner}}]{MetgerGEAT}%
  \BibitemOpen
  \bibfield  {author} {\bibinfo {author} {\bibfnamefont {T.}~\bibnamefont
  {Metger}}, \bibinfo {author} {\bibfnamefont {O.}~\bibnamefont {Fawzi}},
  \bibinfo {author} {\bibfnamefont {D.}~\bibnamefont {Sutter}}, \ and\ \bibinfo
  {author} {\bibfnamefont {R.}~\bibnamefont {Renner}},\ }\bibfield  {title}
  {\enquote {\bibinfo {title} {Generalised entropy accumulation},}\ }\href@noop
  {} {\  (\bibinfo {year} {2022})},\ \Eprint {http://arxiv.org/abs/2203.04989}
  {arXiv:2203.04989 [quant-ph]} \BibitemShut {NoStop}%
\bibitem [{\citenamefont {Woodhead}\ \emph {et~al.}(2018)\citenamefont
  {Woodhead}, \citenamefont {Bourdoncle},\ and\ \citenamefont
  {Acín}}]{WoodheadMermin}%
  \BibitemOpen
  \bibfield  {author} {\bibinfo {author} {\bibfnamefont {E.}~\bibnamefont
  {Woodhead}}, \bibinfo {author} {\bibfnamefont {B.}~\bibnamefont
  {Bourdoncle}}, \ and\ \bibinfo {author} {\bibfnamefont {A.}~\bibnamefont
  {Acín}},\ }\bibfield  {title} {\enquote {\bibinfo {title} {Randomness versus
  nonlocality in the {M}ermin-{B}ell experiment with three parties},}\ }\href
  {\doibase 10.22331/q-2018-08-17-82} {\bibfield  {journal} {\bibinfo
  {journal} {Quantum}\ }\textbf {\bibinfo {volume} {2}},\ \bibinfo {pages} {82}
  (\bibinfo {year} {2018})}\BibitemShut {NoStop}%
\bibitem [{\citenamefont {Grasselli}\ \emph {et~al.}(2021)\citenamefont
  {Grasselli}, \citenamefont {Murta}, \citenamefont {Kampermann},\ and\
  \citenamefont {Bru\ss{}}}]{GrasselliMulti}%
  \BibitemOpen
  \bibfield  {author} {\bibinfo {author} {\bibfnamefont {F.}~\bibnamefont
  {Grasselli}}, \bibinfo {author} {\bibfnamefont {G.}~\bibnamefont {Murta}},
  \bibinfo {author} {\bibfnamefont {H.}~\bibnamefont {Kampermann}}, \ and\
  \bibinfo {author} {\bibfnamefont {D.}~\bibnamefont {Bru\ss{}}},\ }\bibfield
  {title} {\enquote {\bibinfo {title} {Entropy bounds for multiparty
  device-independent cryptography},}\ }\href {\doibase
  10.1103/PRXQuantum.2.010308} {\bibfield  {journal} {\bibinfo  {journal} {PRX
  Quantum}\ }\textbf {\bibinfo {volume} {2}},\ \bibinfo {pages} {010308}
  (\bibinfo {year} {2021})}\BibitemShut {NoStop}%
\bibitem [{\citenamefont {Sarkar}\ \emph {et~al.}(2021)\citenamefont {Sarkar},
  \citenamefont {Saha}, \citenamefont {Kaniewski},\ and\ \citenamefont
  {Augusiak}}]{SarkarSelfTest}%
  \BibitemOpen
  \bibfield  {author} {\bibinfo {author} {\bibfnamefont {S.}~\bibnamefont
  {Sarkar}}, \bibinfo {author} {\bibfnamefont {D.}~\bibnamefont {Saha}},
  \bibinfo {author} {\bibfnamefont {J.}~\bibnamefont {Kaniewski}}, \ and\
  \bibinfo {author} {\bibfnamefont {R.}~\bibnamefont {Augusiak}},\ }\bibfield
  {title} {\enquote {\bibinfo {title} {Self-testing quantum systems of
  arbitrary local dimension with minimal number of measurements},}\ }\href
  {\doibase 10.1038/s41534-021-00490-3} {\bibfield  {journal} {\bibinfo
  {journal} {npj Quantum Information}\ }\textbf {\bibinfo {volume} {7}},\
  \bibinfo {pages} {151} (\bibinfo {year} {2021})}\BibitemShut {NoStop}%
\bibitem [{\citenamefont {Collins}\ \emph {et~al.}(2002)\citenamefont
  {Collins}, \citenamefont {Gisin}, \citenamefont {Popescu}, \citenamefont
  {Roberts},\ and\ \citenamefont {Scarani}}]{CollinesMulti}%
  \BibitemOpen
  \bibfield  {author} {\bibinfo {author} {\bibfnamefont {D.}~\bibnamefont
  {Collins}}, \bibinfo {author} {\bibfnamefont {N.}~\bibnamefont {Gisin}},
  \bibinfo {author} {\bibfnamefont {S.}~\bibnamefont {Popescu}}, \bibinfo
  {author} {\bibfnamefont {D.}~\bibnamefont {Roberts}}, \ and\ \bibinfo
  {author} {\bibfnamefont {V.}~\bibnamefont {Scarani}},\ }\bibfield  {title}
  {\enquote {\bibinfo {title} {Bell-type inequalities to detect true
  $\mathit{n}$-body nonseparability},}\ }\href {\doibase
  10.1103/PhysRevLett.88.170405} {\bibfield  {journal} {\bibinfo  {journal}
  {Physical Review Letters}\ }\textbf {\bibinfo {volume} {88}},\ \bibinfo
  {pages} {170405} (\bibinfo {year} {2002})}\BibitemShut {NoStop}%
\bibitem [{\citenamefont {Masini}\ \emph {et~al.}(2021)\citenamefont {Masini},
  \citenamefont {Pironio},\ and\ \citenamefont {Woodhead}}]{MasiniDI}%
  \BibitemOpen
  \bibfield  {author} {\bibinfo {author} {\bibfnamefont {M.}~\bibnamefont
  {Masini}}, \bibinfo {author} {\bibfnamefont {S.}~\bibnamefont {Pironio}}, \
  and\ \bibinfo {author} {\bibfnamefont {E.}~\bibnamefont {Woodhead}},\
  }\href@noop {} {\enquote {\bibinfo {title} {Simple and practical {DIQKD}
  security analysis via {BB}84-type uncertainty relations and pauli correlation
  constraints},}\ } (\bibinfo {year} {2021}),\ \Eprint
  {http://arxiv.org/abs/2107.08894} {arXiv:2107.08894 [quant-ph]} \BibitemShut
  {NoStop}%
\bibitem [{\citenamefont {Mckague}(2011)}]{McKagueGraph}%
  \BibitemOpen
  \bibfield  {author} {\bibinfo {author} {\bibfnamefont {M.}~\bibnamefont
  {Mckague}},\ }\bibfield  {title} {\enquote {\bibinfo {title} {Self-testing
  graph states},}\ }in\ \href {\doibase 10.1007/978-3-642-54429-3_7} {\emph
  {\bibinfo {booktitle} {Revised Selected Papers of the 6th Conference on
  Theory of Quantum Computation, Communication, and Cryptography - Volume
  6745}}},\ \bibinfo {series and number} {TQC 2011}\ (\bibinfo  {publisher}
  {Springer-Verlag},\ \bibinfo {address} {Berlin, Heidelberg},\ \bibinfo {year}
  {2011})\ p.\ \bibinfo {pages} {104–120}\BibitemShut {NoStop}%
\bibitem [{\citenamefont {Wu}\ \emph {et~al.}(2014)\citenamefont {Wu},
  \citenamefont {Cai}, \citenamefont {Yang}, \citenamefont {Le}, \citenamefont
  {Bancal},\ and\ \citenamefont {Scarani}}]{WuWselfTest}%
  \BibitemOpen
  \bibfield  {author} {\bibinfo {author} {\bibfnamefont {X.}~\bibnamefont
  {Wu}}, \bibinfo {author} {\bibfnamefont {Y.}~\bibnamefont {Cai}}, \bibinfo
  {author} {\bibfnamefont {T.~H.}\ \bibnamefont {Yang}}, \bibinfo {author}
  {\bibfnamefont {H.~N.}\ \bibnamefont {Le}}, \bibinfo {author} {\bibfnamefont
  {J.-D.}\ \bibnamefont {Bancal}}, \ and\ \bibinfo {author} {\bibfnamefont
  {V.}~\bibnamefont {Scarani}},\ }\bibfield  {title} {\enquote {\bibinfo
  {title} {Robust self-testing of the three-qubit \textit{W} state},}\ }\href
  {\doibase 10.1103/PhysRevA.90.042339} {\bibfield  {journal} {\bibinfo
  {journal} {Physical Review A}\ }\textbf {\bibinfo {volume} {90}},\ \bibinfo
  {pages} {042339} (\bibinfo {year} {2014})}\BibitemShut {NoStop}%
\bibitem [{\citenamefont {Jordan}(1875)}]{Jordan}%
  \BibitemOpen
  \bibfield  {author} {\bibinfo {author} {\bibfnamefont {C.}~\bibnamefont
  {Jordan}},\ }\bibfield  {title} {\enquote {\bibinfo {title} {Essai sur la
  g\'eom\'etrie \`a n dimensions},}\ }\href {\doibase 10.24033/bsmf.90}
  {\bibfield  {journal} {\bibinfo  {journal} {Bulletin de la S. M. F.}\
  }\textbf {\bibinfo {volume} {3}},\ \bibinfo {pages} {103--174} (\bibinfo
  {year} {1875})}\BibitemShut {NoStop}%
\bibitem [{\citenamefont {Goh}\ \emph {et~al.}(2018)\citenamefont {Goh},
  \citenamefont {Kaniewski}, \citenamefont {Wolfe}, \citenamefont {V\'ertesi},
  \citenamefont {Wu}, \citenamefont {Cai}, \citenamefont {Liang},\ and\
  \citenamefont {Scarani}}]{GohGeometry}%
  \BibitemOpen
  \bibfield  {author} {\bibinfo {author} {\bibfnamefont {K.~T.}\ \bibnamefont
  {Goh}}, \bibinfo {author} {\bibfnamefont {J.}~\bibnamefont {Kaniewski}},
  \bibinfo {author} {\bibfnamefont {E.}~\bibnamefont {Wolfe}}, \bibinfo
  {author} {\bibfnamefont {T.}~\bibnamefont {V\'ertesi}}, \bibinfo {author}
  {\bibfnamefont {X.}~\bibnamefont {Wu}}, \bibinfo {author} {\bibfnamefont
  {Y.}~\bibnamefont {Cai}}, \bibinfo {author} {\bibfnamefont {Y.-C.}\
  \bibnamefont {Liang}}, \ and\ \bibinfo {author} {\bibfnamefont
  {V.}~\bibnamefont {Scarani}},\ }\bibfield  {title} {\enquote {\bibinfo
  {title} {Geometry of the set of quantum correlations},}\ }\href {\doibase
  10.1103/PhysRevA.97.022104} {\bibfield  {journal} {\bibinfo  {journal}
  {Physical Review A}\ }\textbf {\bibinfo {volume} {97}},\ \bibinfo {pages}
  {022104} (\bibinfo {year} {2018})}\BibitemShut {NoStop}%
\bibitem [{\citenamefont {Sekatski}\ \emph {et~al.}(2018)\citenamefont
  {Sekatski}, \citenamefont {Bancal}, \citenamefont {Wagner},\ and\
  \citenamefont {Sangouard}}]{SekatskiBuilidngBlocks}%
  \BibitemOpen
  \bibfield  {author} {\bibinfo {author} {\bibfnamefont {P.}~\bibnamefont
  {Sekatski}}, \bibinfo {author} {\bibfnamefont {J.-D.}\ \bibnamefont
  {Bancal}}, \bibinfo {author} {\bibfnamefont {S.}~\bibnamefont {Wagner}}, \
  and\ \bibinfo {author} {\bibfnamefont {N.}~\bibnamefont {Sangouard}},\
  }\bibfield  {title} {\enquote {\bibinfo {title} {Certifying the building
  blocks of quantum computers from {B}ell's theorem},}\ }\href {\doibase
  10.1103/PhysRevLett.121.180505} {\bibfield  {journal} {\bibinfo  {journal}
  {Physical Review Letters}\ }\textbf {\bibinfo {volume} {121}},\ \bibinfo
  {pages} {180505} (\bibinfo {year} {2018})}\BibitemShut {NoStop}%
\bibitem [{\citenamefont {Valcarce}\ \emph {et~al.}(2022)\citenamefont
  {Valcarce}, \citenamefont {Zivy}, \citenamefont {Sangouard},\ and\
  \citenamefont {Sekatski}}]{ValcarceSelfTest}%
  \BibitemOpen
  \bibfield  {author} {\bibinfo {author} {\bibfnamefont {X.}~\bibnamefont
  {Valcarce}}, \bibinfo {author} {\bibfnamefont {J.}~\bibnamefont {Zivy}},
  \bibinfo {author} {\bibfnamefont {N.}~\bibnamefont {Sangouard}}, \ and\
  \bibinfo {author} {\bibfnamefont {P.}~\bibnamefont {Sekatski}},\ }\bibfield
  {title} {\enquote {\bibinfo {title} {Self-testing two-qubit maximally
  entangled states from generalized {C}lauser-{H}orne-{S}himony-{H}olt
  tests},}\ }\href {\doibase 10.1103/PhysRevResearch.4.013049} {\bibfield
  {journal} {\bibinfo  {journal} {Physical Review Research}\ }\textbf {\bibinfo
  {volume} {4}},\ \bibinfo {pages} {013049} (\bibinfo {year}
  {2022})}\BibitemShut {NoStop}%
\end{thebibliography}
\end{document}